\setlist{nosep,topsep=0pt,leftmargin=*}
\definecolor{darkgreen}{rgb}{0,0.5,0}
\renewcommand{\paragraph}[1]{\smallskip\noindent\textbf{#1.}}
\newcommand{\smallparagraph}[1]{\vspace{2pt plus 1pt minus 1pt}\noindent\textit{#1.}}
\DeclareMathOperator*{\E}{\mathbb{E}}
\DeclareMathOperator*{\PR}{\mathbb{P}}
\newcommand{\Ex}[2][]{\E_{\substack{#1}}\left[\mathchoice{\big.}{}{}{}#2\right]}
\newcommand{\ExC}[3][]{\E_{#1}\left[\mathchoice{\big.}{}{}{}#2\middle|#3\right]}
\renewcommand{\Pr}[2][]{\PR_{\substack{#1}}\left[\mathchoice{\big.}{}{}{}#2\right]}
\newcommand{\One}[1]{\mathop{\mathbb{1}}\left[\mathchoice{\big.}{}{}{}#1\right]}
\newcommand{\orderT}[1]{\tilde{\mathcal{O}}\qty(#1)}
\newcommand\orderT*[1]{\tilde{\mathcal O}(#1)}
\newcommand{\floor}[1]{\left\lfloor{#1}\right\rfloor}
\newcommand\floor*[1]{\lfloor{#1}\rfloor}
\newcommand{\ceil}[1]{\left\lceil{#1}\right\rceil}
\newcommand\ceil*[1]{\lceil{#1}\rceil}
\let\a\alpha
\let\e\varepsilon
\let\l\ell
\let\d\delta
\let\sub\subseteq
\newcommand{\opt}{\mathtt{OPT}}
\newcommand{\optAlg}{\ensuremath{\opt^{\mathtt{ALG}}}}
\newcommand{\optInf}{\ensuremath{\opt^{\mathtt{inf}}}}
\newcommand{\rea}{\mathtt{Reach}}
\newcommand{\bb}{\mathbf{b}}
\newcommand{\bmu}{\boldsymbol{\mu}}
\newcommand{\rfunc}{r}
\newcommand{\R}{\mathbb{R}}
\newcommand{\N}{\mathbb{N}}
\newcommand{\calL}{\mathcal{L}}
\newcommand{\calX}{\mathcal{X}}
\newcommand{\calP}{\mathcal{P}}
\newcommand{\calH}{\mathcal{H}}
\newcommand{\SR}{R^{\mathtt{conv}}}
\newcommand{\SC}{C^{\mathtt{conv}}}
\newcommand{\rew}{\ensuremath{\mathtt{REW}}}
\newcommand{\pay}{\ensuremath{\mathtt{PAY}}}
\newcommand{\estbb}[1]{\bb^{T_{#1-1}}}
\newcommand{\vecestbb}[1]{\vec\bb^{T_{#1-1}}}
\newcommand{\poly}{\textrm{poly}}
\newcommand{\Line}[4]{%
    #1&%
    \ifthenelse{\isempty{#2}}{\phantom{=}}{#2}%
    #3%
    \ifthenelse{\isempty{#4}}{}{&&\qquad\left(\big.\substack{#4}\right)}%
}
\newcommand{\ignore}[1]{}
\newcommand{\Description}[1]{}
\newtheorem{theorem}{Theorem}[section]
\newtheorem*{theorem*}{Theorem} 
\newtheorem*{conjecture*}{Conjecture} 
\newtheorem{lemma}[theorem]{Lemma}
\newtheorem{corollary}[theorem]{Corollary}
\newtheorem{proposition}[theorem]{Proposition}
\newtheorem*{claim*}{Claim} 
\newtheorem{definition}[theorem]{Definition}
\newtheorem*{remark}{Remark}
\theoremstyle{definition}
\newtheorem*{definition*}{Definition}
\newtheorem*{example*}{Example}
\newcommand{\citet}[1]{\textcite{#1}}
\title{Learning in Budgeted Auctions with Spacing Objectives}
\author{
Giannis Fikioris%
\thanks{Supported in part by the Department of Defense (DoD) through the National Defense Science \& Engineering Graduate (NDSEG) Fellowship, the Onassis Foundation -- Scholarship ID: F ZS 068-1/2022-2023, the Google PhD Fellowship, and ONR MURI grant N000142412742.}\\
Cornell University\\
\texttt{gfikioris@cs.cornell.edu}
\and
Robert Kleinberg
\\
Cornell University\\
\texttt{rdk@cs.cornell.edu}
\and
Yoav Kolumbus\\
Cornell University\\
\texttt{yoav.kolumbus@cornell.edu}
\and
Raunak Kumar\\
Cornell University and Microsoft\\
\texttt{raunak@cs.cornell.edu}
\and
Yishay Mansour%
\thanks{Partially supported by funding from the European Research Council (ERC) under the European Union’s Horizon 2020 research and innovation program (grant agreement No. 882396), by the Israel Science Foundation, the Yandex Initiative for Machine Learning at Tel Aviv University and a grant from the Tel Aviv University Center for AI and Data Science (TAD).}\\
Tel Aviv University and Google Research\\
\texttt{mansour.yishay@gmail.com}
\and
\'Eva Tardos%
\thanks{Partially supported by AFOSR grant FA9550-23-1-0410, AFOSR grant FA9550-231-0068, and ONR MURI grant N000142412742}\\
Cornell University\\
\texttt{eva.tardos@cornell.edu}
}
\date{\vspace{-25pt}}
\begin{document}


\maketitle{}
\thispagestyle{empty}

\begin{abstract}
    In many repeated auction settings, participants care not only about how frequently they win but also about how their winnings are distributed over time. This problem arises in various practical domains where avoiding congested demand is crucial and spacing is important, such as advertising campaigns that require sustained visibility over time, as well as in online retail sales and compute services. We initiate the study of repeated auctions with preferences for spacing of wins over time.
We introduce a simple model of this phenomenon, modeling it as a budgeted auction where the value of a win is given by any concave function of the time since the last win.
This formulation captures our initial modeling goal: for a given number of wins, even spacing over time is optimal and adding further wins to any sequence is always beneficial.
We also extend our model and results to the case when not all wins result in ``conversions'' (realization of actual gains), and the probability of conversion depends on a context. The goal is then to maximize and evenly space conversions rather than just wins.

We study the optimal policies for this setting in repeated second-price auctions and offer learning algorithms for the bidders that achieve low regret against the optimal bidding policy in a Bayesian online setting. Our main result is
a computationally efficient online learning algorithm that achieves $\tilde O(\sqrt{T})$ regret. 
We achieve this by showing that an infinite-horizon Markov decision process (MDP) with the budget constraint in expectation is essentially equivalent to our problem, \emph{even when limiting that MDP to a very small number of states}.
The algorithm achieves low regret by learning a bidding policy that chooses bids as a function of the context and the state of the system, which will be the time elapsed since the last win (or conversion).
We show that state-independent strategies incur linear regret even without uncertainty of conversions. We complement this by showing that there are state-independent strategies that, while still having linear regret, achieve a $(1 - \frac{1}{e})$ approximation to the optimal reward as $T \rightarrow \infty$.

\vspace{8pt}
\begin{quote}
    {\em ``The only reason for time is so that everything doesn’t happen at once.''}
    \begin{flushright}
        — Albert Einstein
    \end{flushright}
\end{quote}

\end{abstract}

\clearpage
\setcounter{page}{1}
\section{Introduction} \label{sec:intro}

Auctions are a cornerstone of economic theory, illustrating one of the earliest structured forms of market interaction. 
Today, auctions play a central and ever-increasing role in the digital landscape, spanning domains such as online advertising \cite{choi2020online,edelman2007internet, lewis2014online, varian2009online}, retail markets \cite{chu2020position, sun2020empirical}, and blockchain fee markets \cite{basu2023stablefees, ferreira2021dynamic, nisan2023serial, roughgarden2024transaction}, and are studied extensively.
A prominent theme in online auction applications and the ensuing theoretical studies is that a bidder (e.g., a seller on an online marketplace like Amazon or an advertiser bidding for ad positions) participates in the auction multiple times.
Hence, a bidder needs to consider how to utilize their budget efficiently and how to learn from experience over time. 
Typically, the bidder engages in these markets by using some learning algorithm to set the bid in each auction period \cite{aggarwal2024auto, DBLP:journals/corr/AggarwalFZ24, DBLP:journals/mansci/BalseiroG19, banchio2022artificial, bichler2023convergence, borgs2007GFPdynamics, daskalakis2016learning, deng2022nash, fikioris2023liquid, kolumbus2022auctions, KolumbusN22, nekipelov2015econometrics, noti2021bid, perlroth2023auctions, weed2016online}.

The literature on repeated auctions generally assumes that at each step in a sequence of $T$ auctions, a bidder has some value for the item being auctioned (for example, an online ad impression). The bidder's total utility for participating in the sequence
of auctions is modeled as a function (often additive) of the \emph{set} of items they won. 
Bidders' preferences in this standard model are \emph{permutation-invariant}: the bidder is indifferent between
all permutations of a given (multi)set of auction outcomes. This assumption of permutation invariance underpins
existing work in this area, but it may be violated in many important cases, as we now discuss.

To see the flaw with this assumption, consider first the preferences of an advertiser running 
a campaign to build brand recognition. A metric of clear importance to the advertiser  
(and the one past models aim to maximize) is the number of ad impressions, i.e.~the number of times
they win in the repeated auction. However, the value of winning a given number of impressions depends
on how they are distributed over time, due to carry-over effects~\cite{weinberg1982econometric} that 
are widely documented in the marketing literature~\cite{broadbent1993advertising, broadbent1995adstock, broadbent2000advertisements, craig1976advertising, ha1997does}. For example, in studying how the effectiveness of 
advertising varies with the frequency of exposure, Naples~\cite{naples1997effective} observes that
``increasing frequency continues to build advertising effectiveness at a decreasing rate but with no evidence of a decline.''
The assumption of permutation-invariant preferences is also violated by
a retailer whose inventory is restocked
at a limited rate unless they pay extra charges to expedite restocking. 
If ad impressions and the resulting orders
are congested in time, the retailer may be unable to fulfill the orders at
the requested rate without incurring surcharges, and would thus 
prefer if the same sequence of auction outcomes were permuted to space 
their ad impressions more evenly over time.

In this paper, we initiate the study of learning in repeated auctions where bidders care not only about {\em how much they win} but also {\em when they win}.  
Specifically, we focus on scenarios in which bidders prefer that their winning times in the series of auctions---whose total number is limited by budget constraints---are not clustered together but are relatively evenly spaced over time. 
We propose a simple model to describe such temporal preferences, which allows us to analyze the optimal strategies for the bidders once they know the distribution of prices and contexts, as well as the algorithms they could use in a more natural scenario requiring them to learn to bid optimally online.
Next, we describe our model and continue with an illustrative example and an overview of our results and techniques.
We postpone a more extended discussion of related work to \cref{sec:related}. 

\paragraph{Model overview}
An informal summary of our model is the following. (For the formal model, see Section \ref{sec:model}.)
We first consider the standard model of budgeted auctions: a bidder participates in a sequence of $T$ second-price\footnote{One could consider the same spacing model for first-price or other auctions formats. We believe this is an interesting line of future work.} auctions with budget $B$; in each auction, the distribution of highest bid of the competitors (i.e., the price) is in $[0, 1]$ and unknown to the bidder.
We diverge from the standard model in the bidder's objective.
As outlined above, our bidder is interested, on the one hand, in increasing the number of winning events and, on the other hand, in spacing those wins over time.
Our proposed model combines these two goals by using an increasing concave reward function $\rfunc: \N \to \R_{\ge 0}$, e.g., $\rfunc(\l) = \sqrt\l$.
Specifically, the bidder's utility in the entire sequence of $T$ auctions is the sum of $\rfunc(\cdot)$ applied to the lengths of intervals between winning events.

We now comment on our choice of possible $\rfunc(\cdot)$ functions.
The three properties---non-negativity, concavity, and monotonicity---are motivated by the examples given above.
Specifically, all three properties are described in \cite{naples1997effective}: increasing frequency leads to increased benefits at a decreasing rate with no adverse effects. With the decreased effectiveness of additional advertisements too close together, evenly spaced advertising is of higher value.
In addition, we capture the heterogeneity of the influence of spacing in different markets and products, by considering the problem under any function $\rfunc(\cdot)$ that satisfies the above properties.
Finally, in the simplified setting where, instead of a standard budget constraint, the bidder is constrained to win at most $B$ times,\footnote{This simplified setting can be achieved in the more general one by assuming that the top competing bid is $1$ every round.} for any such function $\rfunc(\cdot)$, the utility-maximizing solution is to win every $\approx B/T$ rounds, i.e., maximizing the number of wins and equally spacing them.

We extend our model and results to a setting where auctions are not identical, but each auction has a context and not every winning event leads to a successful conversion event (realization of actual gains). Conversion rates and prices depend on the context and may be correlated. In this extended setting, the goal is to achieve many conversions and have them spaced relatively evenly. 
We note that this generalization reduces to the classical auction setting with stochastic values when the spacing of wins is not part of the objective (e.g., $\rfunc(\l) = 1$ for all $\l \ge 1$) and the probability of conversion serves as the ``value'' of winning.
  
\paragraph{Warm-up example}
To gain some intuition about our model, we consider the following simple scenario: the distribution of prices every round is uniform in $[0, 1]$, known to the bidder, and all wins lead to a conversion.
The bidder has a budget $B \leq T/4$, and the reward function is $\rfunc(\l) = \sqrt{\l}$.
We are interested in the long-term utility, where $T \gg 1$, and $\rho = B/T$ is constant. 

How should our bidder act in this auction? To illustrate the challenge of optimizing our bidder's objective, we begin by reviewing two simple strategies and the performance they can achieve.

\smallparagraph{Fixed intervals}
One possible approach is to enforce equal spacing: win with probability $1$ once every fixed number of steps to fully use the budget in expectation.
This means bidding $1$ every $\frac{T}{2B}$ steps and paying an average price of $1/2$ per win. 
In expectation,%
\footnote{The utility when there are $K$ wins $\nicefrac{T}{2B}$ apart from each other is $u = K \sqrt{\nicefrac{T}{2B}}$.
The total cost $C_k$ after $k$ wins follows a uniform-sum (Irwin–Hall) distribution.
For large $T$, with $B$ being a constant fraction of $T$, $C_k$ is concentrated around $k/2$, and so $C_{2 B} \approx B$ implying $K \approx 2B$ and $u \approx 2B \sqrt{\nicefrac{T}{2B}}$ with high probability.}\label{footnote-uniform-sum}
this leads to $2B(1 - o(1))$ wins, and the total utility for our bidder is approximately $\sqrt{\nicefrac{T}{2B}} \cdot 2B = \sqrt{2\rho} \, T$.

\smallparagraph{Fixed bid}
An alternative simple strategy would be to use fixed bidding that maximizes the number of wins and ignores spacing.
This is achieved by making the lowest possible payments while still using the full budget in expectation.
This, in fact, can be obtained by consistently using a fixed bid level%
\footnote{A fixed bid achieves this because the price distribution in our example has full support and no point masses. In general, an optimal bid distribution can be a randomization between multiple bids.}
$b$.
The probability of winning is then $b$, and the expected payment per win is $b/2$.
As in our previous example, the budget usage with a fixed bid $b$ follows a uniform-sum distribution (see footnote \ref{footnote-uniform-sum}), which is concentrated around $b/2$ per step.
Thus, the lowest bid that fully uses the budget is found by $\frac{b}{2} \cdot b T = B \Rightarrow b = \sqrt{2B/T} = \sqrt{2 \rho}$.
Therefore, the maximum expected number of wins is $\sqrt{2 \rho} \, T$.
Note that since $\rho < 1/2$,  the bidder wins more frequently than in the fixed-interval strategy. 
If all intervals were of equal length, i.e., spaced $\nicefrac{1}{\sqrt{2 \rho}}$ apart, the utility would be $(2\rho)^{1/4}T$, which is an upper bound on the maximum utility (i.e., maximum expected number of wins with perfect spacing). 
However, the actual resulting utility, due to the randomness of the intervals, ends up being about $\qty\big(1.05\rho^{1/4} + O(\rho^{3/4}) )T$ (the full calculation is deferred to \cref{appendix:warm-up}).

\smallparagraph{Comparison}
The above calculations show that fixed bidding gives higher utility than the ``fixed intervals'' strategy.
This shows that, in our proposed model with the $r(\ell)=\sqrt{\ell}$ reward function, more wins is of higher importance than ensuring exact spacing.
Specifically, fixed bidding achieves a constant-factor approximation to the optimal utility for any $\rho$, while the fixed interval strategy is increasingly suboptimal as $\rho$ decreases.
\cref{thm:state-independent} shows that this result holds more broadly:  for any concave reward function and any price distribution, there exists a static bidding policy that achieves at least $(1 - \frac{1}{e})$ of the optimal policy’s value as $T \to \infty$. We prove this in \cref{appendix:reverse-jensen}, using a `reverse Jensen inequality.'
However, because fixed bidding strategies, e.g., \cite{DBLP:journals/mansci/BalseiroG19}, completely ignore spacing, they are not optimal: in \cref{ssec:app:state_independent_optimality}, we prove they can be at least $10\%$ suboptimal.
Therefore, one can do better by combining the two ideas: bidding to ensure high probability of winning while also considering the time since the last win.

Next, we provide a brief summary of our results, followed by an overview of our main challenges, techniques, and proof outlines where we analyze the problem of bidding with spacing objectives and design algorithms that learn the above optimal policy in an online fashion.
Related work is discussed in Section \ref{sec:related}, and the formal presentation of our model appears in Section \ref{sec:model}. The subsequent sections include the full formal analysis.

\subsection*{Our Results}

Considering repeated auctions where the bidder cares about spacing wins relatively evenly raises important new questions.
In this paper, we offer a general model for this problem using second-price auctions where the value of winning is a concave function of the time since the last win, and we develop an effective learning algorithm to address it. 

We offer an efficient online algorithm for the bidder that achieves regret at most $\orderT*{\sqrt{T}}$ after $T$ time steps with high probability against the optimal bidding policy (\cref{cor:online:final_res}). 
Optimal bidding in this problem depends on the time since the last winning event, $\l$, as well as the context, $x$ and so one would assume that with a context space $\calX$, the learner would have to learn  $|\calX|\cdot T$ different bids, one for each pair of $(\l, x)$.
Therefore, it is surprising that $\orderT*{\sqrt T}$ regret, independent of $\calX$, is achievable since if we were to treat this pair as a joint context space of cardinality $d = |\calX|\cdot T$, one might expect $\Omega(\sqrt{d T})$ regret similar to contextual bandits.
While we solve a different (and harder) problem for the bidder, we still achieve the same $\orderT*{\sqrt T}$ regret rate as in the case of maximizing only the cumulative value of wins \cite{DBLP:journals/mansci/BalseiroG19}.

An interesting feature of our analysis is that we do not need to discretize the bidding space for the learning algorithm, avoiding issues related to discretization errors.

We also show that state-independent strategies (i.e., independent of the time since the last win) incur linear regret. See \cref{ssec:app:state_independent_optimality}. 
On the positive side, we show that such policies can achieve a $(1 - \frac{1}{e})$ approximation to the optimal reward as $T \rightarrow \infty$. (\cref{thm:state-independent}.)

\subsection*{Outline and Techniques for the Regret Bound}

In Section \ref{sec:bench}, we first notice that a large dynamic program can compute the optimal bidding policy under mild assumptions.
The dynamic program aims to select the best bid for each time\footnote{We denote $[n] = \{1, 2, \ldots, n\}$ for $n \in \N$.} $t\in [T]$, depending on the time since the last conversion and the remaining budget.
However, such a dynamic program with (loosely) $\Omega(T^3)$ ``states'' is not amenable as a starting point for the design or analysis of learning policies online, as it requires knowledge of the price/context distributions ahead of time.

\paragraph{An equivalent small MDP}
Our main result towards the online learning algorithm we eventually present in \cref{sec:online} is \cref{thm:bench:small_m}, offering a stronger, but much more compact benchmark, exponentially reducing the complexity of the learning problem at essentially no cost. We define an infinite-horizon Markov Decision Process (MDP) with only $m =  \order*{\log T}$ states where state $\l  \in [m]$ corresponds to the time elapsed since the last conversion, capped at $m$.
See Figure \ref{fig:mc} for a visual representation of the MDP.
We show that the average reward of this much smaller MDP is almost equal to the original problem with minimal $\order*{1/\poly(T)}$ error.

\smallparagraph{Proving the equivalence of the small MDP and the full problem}
To see how we define the approximately equivalent infinite-horizon MDP with a small state space, it is best to first consider a larger, $T$-state, infinite-horizon MDP whose states are recording the time since the last win. Here, we only require that the per-step budget is observed on average, as the time horizon approaches infinity (note that the time horizon is not $T$, but infinite, in this setting) and in expectation over the prices and contexts.
See Figure \ref{fig:mc} with $m = T$, where $W_\l$ is the winning probability of a particular bidding strategy in state $\l$.
Given the relaxation of the budget constraint, we show that the optimal reward of the constrained MDP is at least the value of the optimal strategy in the original auction. 

\begin{figure}[t!]
    \centering
    \begingroup
\newcommand{\BELOW}{.5cm}
\newcommand{\RIGHT}{1.7cm}
\begin{tikzpicture}[->, >=stealth', line width=.9pt]
    \footnotesize
    \node (s1) [draw] {$1$};
    \node [draw, right=\RIGHT of s1] (s2) {$2$};
    \node [draw, right=\RIGHT of s2] (s3) {$3$};
    \node [right=\RIGHT of s3] (dots) {$\dots$};
    \node [draw, right=\RIGHT of dots] (sm1) {$m-1$};
    \node [draw, right=\RIGHT of sm1] (sm) {$m$};

    \node [draw, dashed, below=\BELOW of s2] (b2) {$1$};
    \node [draw, dashed, below=\BELOW of dots] (bdots) {$1$};
    \node [draw, dashed, below=\BELOW of sm1] (bm1) {$1$};

    \path(s1) edge node [ above ] {$1 - W_1$} (s2);
    \path(s2) edge node [ above ] {$1 - W_2$} (s3);
    \path(s3) edge node [ above ] {$1 - W_3$} (dots);
    \path(dots) edge node [ above ] {$1 - W_{m-1}$} (sm1);
    \path(sm1) edge node [ above ] {$1 - W_m$} (sm);

    \path (sm) edge [ loop right ] node [ right ] {$1 - W_m$} (sm);

    \path (s1) edge [ loop left ] node [ left ] {$W_1$} (s1);
    \path (s2) edge [ bend left=60 ] node [ below ] {$W_2$} (s1);
    \path (s3) edge [ bend left=30 ] node [ below ] {$W_3$} (b2);
    \path (sm1) edge [ bend left=30 ] node [ below ] {$W_{m-1}$} (bdots);
    \path (sm) edge [ bend left=30 ] node [ below ] {$W_m$} (bm1);
    
\end{tikzpicture}
\endgroup
    \Description{The Markov Chain with states being the time since the last win.}
    \caption{The MDP of the infinite-horizon setting when the bidder wins with probability $W_\l$ when $\l \le m$ rounds have elapsed since her last win and with probability $W_m$ when more than $m$ have elapsed.}
    \label{fig:mc}
\end{figure}
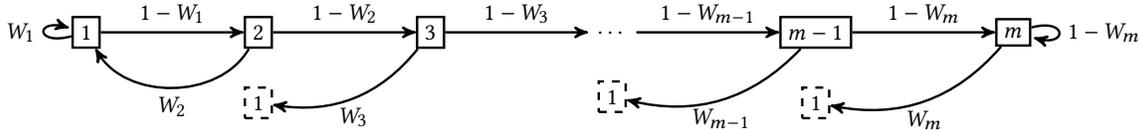

The main and most technically demanding result of \cref{sec:bench} is that the $T$-state MDP can be closely approximated by the small MDP with only $\order*{\log T}$ states%
\footnote{A different idea for compressing MDPs is forming a net to group different states and using the same action in them.
However, this proposal appears to have two issues.
First, this compression loses the MDP structure as transitions between states of the same group are not observed.
Second, when the budget per round is $\rho$, the first $\order*{1/\rho}$ states are each visited a constant fraction of the time, implying that suboptimal bidding in any of these would result in high error.
Therefore, each of these states (or potentially more) would have to be in their own group, leading to a solution close to our own where we just group states $\ge \order*{\frac{1}{\rho} \log T}$.}
(Theorem \ref{thm:bench:small_m}).
This approximation, in turn, has value at least as large as the value of the true optimal solution obtained by the dynamic program above (Lemma \ref{lem:bench:comparison} and \cref{thm:bench:comparison}).
To compress the MDP to have only $\order*{\log T}$ states, we assume that the average per-step budget $\rho$ is a constant independent of $T$. 
Under this assumption, an optimal policy is expected to win on average every constant number of steps.
The key idea to be able to show this is \cref{cor:bench:increasing} showing that the probability of winning increases as a function of the last win.
Using this, we show in \cref{lem:bench:win_LB} that after a constant number of losses, the probability of the optimal policy winning an auction is at least a constant.
This implies that there is a win every $\order*{\log T}$ rounds with high probability. 

The first step behind the proof of monotonicity is to focus on winning probabilities in each state rather than bidding policies.
Because bidding policies can be complicated, the winning probability is a simple one-parameter summary of the bidding policy.
This simplifies the analysis and allows us to prove the result in both contextual and non-contextual settings.
Given this simplification, our key result is \cref{cor:bench:increasing}: there exists an optimal policy (for any reward function $\rfunc(\cdot)$ and distribution of contexts and prices) which wins more frequently as the time since the last win increases ($W_{\l+1} \ge W_\l$ in Figure~\ref{fig:mc}).
This, together with the constant per-step budget, implies that after a constant number of losses, the probability of winning in each state is at least constant, as otherwise, the bidder would not be using her budget.
This monotonicity is necessary to rule out policies that use the budget while winning when the state $\l$ is either small or large and do not win in intermediate states, resulting in $\poly(T)$ states being important instead of $\log T$.
We prove this monotonicity by considering the Lagrangian relaxation of the budget-constrained problem (subtracting $\lambda$ times the budget spent from the reward with a parameter $\lambda \ge 0$).
With this relaxation of the budget constraint, we show that for any $\lambda$, the optimal winning probabilities are increasing in the time since the last win (\cref{lem:bench:incr}).
The claim for the optimal solution of the constrained problem then follows as it can be written as a linear program; therefore, its optimal solution is also optimal for the Lagrangian problem using the value of the variable associated with the budget constraint from the optimal solution of the dual problem.

\paragraph{Outline of the learning algorithm}
\cref{sec:online} offers our online algorithm for bidding: Follow the $k$-delayed Optimal Response Strategy (FKORS). (See \cref{algo:online}.)
Our algorithm relies on full-information feedback, i.e., after bidding it observes the price of that round.
Due to the model compression we made to our benchmark in \cref{sec:bench}, FKORS is fairly straight-forward.
We use a form of episodic learning over subsequences of the (single) $T$-length sequence to learn the best bidding in each state of the small MDP. 
We divide time into small epochs with stochastic length, with epochs ending either when conversions occur or after having no conversions for $k$ steps for a parameter $k \in \N$.
At the start of each epoch, we use data from previous rounds to estimate the optimal bidding strategy in the small MDP.
An interesting feature of this learning algorithm is its use of epochs with stochastic lengths.
If epochs' lengths were fixed (as is usually the case in this literature), most of them would not end immediately after a conversion, which would cause additional error.
It is important to use such variable-length epochs to avoid accumulating this error.
First, short epochs help the learning converge faster.
Second, we show that an epoch ending without a conversion will only occur with very low probability, leading to only a minor impact on the overall reward.
In the remark after \cref{cor:online:final_res} stating the final regret bound in \cref{sec:online}, we show that using fixed-length epochs would lead to $\Omega(T^{2/3})$ regret bounds.

An additional appealing feature of this algorithm is that we do not discretize the bidding space; instead, we can compute the optimal bidding policy (for the samples of prices and contexts seen so far), as explained below.
Importantly, this allows us to have no dependence on the size of the action space and avoid discretization errors altogether.

\paragraph{Optimal mapping from contexts to bids}
By showing the (near) equivalence of the small MDP that has $m = \order*{\log T}$ states to the larger problem, the number of states is not a hurdle for our learning algorithm.
However, we still have to learn a function that maps contexts to bids in each of the $m$ states.
We show that optimal bidding has a simple structure, with only one parameter per state that our online algorithm must learn; this is the key to getting a regret bound independent of the context space.
The key observation is to think of the main variable of the MDP as the probability of winning a conversion $W_\l$ in state $\l \in [m]$ (rather than the bids directly). 
To optimize the bidding policy, the bidder wants to use a bidding function from context to bids that achieves the desired winning probability with minimal expected cost.

The main observation here is that the optimal bidding policy can be expressed as a simple function of the conversion rate, regardless of other parts of the context, i.e., the potential information about the distribution of prices associated with that context. 
Depending on the state of the system (time since the last conversion), the optimal policy at that state when the conversion rate is $c$ is to bid $\min(c/\mu, 1)$ for some $\mu > 0$ (see \cref{lem:calc:simple_single} for the statement for price distributions with atoms).
We again prove this using a Lagrangian relaxation, this time for the problem of maximizing the conversion probability subject to the expected payment.
Finally, in \cref{lem:calc:sample_error}, we show that we can learn this mapping with low error using samples, i.e., we provide a Glivenko-Cantelli style bound for the error of learning the mappings from $\mu$ to the probability of conversion and expected payment.

\paragraph{Computation}
The optimal policy for our MDP with $m = \order*{\log T}$ states can be computed using a Linear Program.
We show in \cref{ssec:app:calc:lp} that, if the support of the joint distribution of prices and conversion rates is at most $K \in \N\cup\{\infty\}$, then the size of the LP is at most $\order*{m \min\{K, T\}}$.

\section{Further Related Work}
\label{sec:related}

While there is a vast body of work on online learning and on algorithmic aspects in auctions (see \cite{cesa2006prediction, DBLP:journals/corr/Slivkins19} and \cite{krishna2009auction,milgrom2021auction,MAL-077,roughgarden2010algorithmic}, respectively, for broad introductions), and in particular, budgeted auctions have been extensively explored (e.g., \cite{DBLP:journals/corr/AggarwalFZ24, balseiro2015repeated, DBLP:journals/mansci/BalseiroG19, borgs2005multi, dobzinski2012multi, dobzinski2014efficiency, feldman2012revenue,feldman2013limits, fikioris2023liquid, DBLP:conf/colt/LucierPSZ24}), to our knowledge, the problem of online learning with a temporal spacing objective has not been previously studied.

\paragraph{No-regret learning in auctions} This theme has attracted broad interest, with research exploring topics such as social welfare and the price of anarchy when bidders are regret-minimizers \cite{DBLP:journals/corr/AggarwalFZ24, DBLP:journals/mansci/BalseiroG19, blum2008regret, caragiannis2015bounding, daskalakis2016learning, fikioris2023liquid, roughgarden2017price, sundararajan2016prediction}; the learning of auction parameters, such as reserve prices \cite{cesa2014regret, mohri2014optimal, morgenstern2016learning, roughgarden2019minimizing}; the dynamics of no-regret algorithms in auction settings \cite{bichler2023convergence, daskalakis2016learning, deng2022nash,  kolumbus2022auctions,kolumbus2024paying}; the impact of learning algorithms on user incentives \cite{aggarwal2024randomized,alimohammadi2023incentive, feng2024strategic, KolumbusN22}; optimal strategies for the auctioneer when the bidders are learning \cite{braverman2018selling, cai2023selling, rubinstein2024strategizing}; and the estimation of user preferences and bid prediction under the assumption that have low regret \cite{gentry2018structural, nekipelov2015econometrics, noti2017empirical, nisan2017quantal, noti2021bid}.
Our work enhances this literature by 
introducing a new aspect not previously studied: the bidder's preference for spacing wins evenly over time.

\paragraph{Budget pacing}
The work most closely related to our setting is the concept of ``budget pacing,'' where bidders' strategies in a budgeted auction are limited to a scalar ``bid-shading factor'' that adjusts their bid in each auction by scaling their value.
Two primary lines of research in this area examine the equilibria in games with such strategy spaces for the bidders \cite{DBLP:conf/sigecom/ChenKK21, DBLP:conf/ec/ConitzerKPSSMW19, DBLP:conf/wine/ConitzerKSM18} and algorithms that learn the shading factor online and their dynamics \cite{aggarwal2024auto, balseiro2023field, DBLP:journals/mansci/BalseiroG19, DBLP:conf/innovations/GaitondeLLLS23, kumar2022optimal, DBLP:conf/colt/LucierPSZ24}.
The main concern in our setting, however, is not only to spend the budget at the right average rate but also to consider the spacing objective under this constraint, and we show that state-independent strategies which ignore the time elapsed since the last win are insufficient for this purpose and lead to high regret.
For example, one could use the budget pacing algorithms of \cite{DBLP:journals/mansci/BalseiroG19} in our problem, using conversion probability as the value of winning the auction.
However, these algorithms will converge to a fixed shading factor, i.e., a state-independent bid distribution, and hence will incur linear regret.

\paragraph{Online learning of Markov Decision Processes (MDPs)}
There is a rich literature on online learning of MDPs (see, e.g., \cite{BookRL}).
Our algorithm does not directly apply existing ones in that literature for several reasons.
First, a na\"ive model of the MDP on our setting would have $T$ states, making the regret bounds linear.
Second, in contrast to the \emph{episodic learning} model that predominates in research on learning of MDPs, we assume the learner participates in a single episode (a \emph{continuing task} in reinforcement learning terminology) and must simulate episodic learning by breaking the timeline into epochs of stochastic length, as described earlier.
Third, while we have a continuous action space (i.e., a bid interval), our regret bounds have no dependence on the size of the action space, in contrast to the MDP learning literature that assumes a discrete action space.
Using discretization would result in a dependence on the number of actions which is either square-root (in the bandit setting) or logarithmic (in the full information setting).
Fourth, due to the bidder's budget constraint, our setting is modeled by a \emph{constrained} MDP.
While there are some recently-discovered online learning algorithms for constrained MDPs in the infinite-horizon~\cite{DBLP:conf/icml/Chen0L22a}%
\footnote{
This work uses one long trajectory of length $T$ of a small MDP, but either (i) assumes that all policies are ergodic, which does not hold in our setting, or (ii) has $\order*{T^{2/3}}$ regret when the MDP is weakly communicating, or (iii) is not efficient.
Both (ii) and (iii) require knowing some parameters of the optimal policy and also require a discrete action set.
Therefore, even after reducing our problem to its small MDP version and discretizing the action space, we would need to know some properties of the optimal policy to use their algorithm in our setting.}
and episodic learning~\cite{DBLP:conf/icml/StradiGGC0024}
settings, those algorithms also share some of the drawbacks discussed above.

\paragraph{Bandits with Knapsacks (BwK)}
Another related thread of work, following \cite{DBLP:conf/focs/BadanidiyuruKS13}, is on the BwK model, which is a generalization of the classic multi-armed bandits model that incorporates resource consumption constraints~\cite{DBLP:conf/icml/CastiglioniCK22, DBLP:conf/colt/FikiorisT23, DBLP:conf/focs/ImmorlicaSSS19, DBLP:conf/nips/KumarK22, DBLP:journals/corr/Slivkins19, DBLP:conf/colt/SlivkinsSF23}.
These constraints require algorithms to consider not only the reward of each arm but also their long-term effect on the resource budgets.
However, an important distinction between the literature on BwK and our work is that the reward of an arm in BwK only depends on the current round.
In particular, as in the case of the budget-pacing algorithm discussed above, in a sequential auction setting, the reward only depends on the current win (or conversion) and not on the time elapsed since the last one.

\section{Model}\label{sec:model}

We consider a bidder facing an unknown price distribution in a sequence of $T$ second-price auctions.  
The bidder has a budget $B$, and we denote the average budget per step by $\rho = B/T$. 
We are interested in the long-term behavior and outcomes in the series of auctions and think about $\rho$ as constant as $T \rightarrow \infty$.
In each round $t \in [T]$, a price $p_t \in [0, 1]$ is sampled from an unknown distribution $\calP$.
The price $p_t$ can be interpreted as the highest bid among the pool of other competitors in the auction. 
The bidder chooses a bid $b_t \in [0, 1]$; if $b_t \ge p_t$, the bidder wins the auction and pays a price $p_t$ from her budget.
If $b_t < p_t$, the bidder loses the auction, pays nothing, and observes $p_t$.
We discuss the utility model for the bidder next.

When using bid $b$, we denote by $W(b)$ and $P(b)$ the probability of winning the auction and the expected payment, respectively.
Formally, $W(b) = \Pr[p\sim \calP]{b \ge p}$ and $P(b) = \Ex[p\sim\calP]{p \One{b \ge p}}$.
For a distribution of bids $\bb \in \Delta([0, 1])$, we overload our notation and define $W(\bb) = \Ex[b\sim\bb]{W(b)}$ and $P(\bb) = \Ex[b\sim\bb]{P(b)}$. 

\paragraph{Reward model}
The bidder is interested both in winning many auctions and spacing the winning events over time.
Two basic properties we require are that (i) adding any set of winning events to any sequence $t_1,\cdots,t_k$ of winning times can only improve utility, and (ii) for any fixed number $k$ of winning events, the sequences in which those events are most evenly spaced lead to highest utility.
While there is no single way to represent such preferences, a natural and tractable model that achieves these properties is maximizing the sum of some concave function of the lengths of intervals between winning times. This leads to the following reward model.

The bidder has a reward per winning event, which depends on the number of rounds since the last winning event.
Formally, let $t' < t$ be the last round before $t$ in which a winning event occurred and $\l_t = t - t'$ be the number of elapsed rounds since this last winning event. (If there was no such event before $t$, define $t' = 0$ and $\l_t = t$.)
Then, in round $t$, if the bidder wins the auction (i.e., if $b_t \ge p_t$), she receives a reward $r(\l_t)$, where $r: \N \cup \{0\} \to \mathbb{R}_+$ is an increasing and concave sequence with bounded differences.
That is, for every $\l \in \N \cup \{0\}$, it holds that 
\begin{equation*}
    0 \ 
    \le \ 
    r(\l + 2) - r(\l + 1) \ 
    \le \ 
    r(\l + 1) - r(\l)
    \le
     \ 1,
\end{equation*}
where we assume $r(0) = 0$.
An example of such a sequence is $r(\l) = \sqrt \l$ (as in the warm-up example in \cref{sec:intro}).
For any $m\in\N$, it will be useful to define $r_m(\l) = \min\{r(\l), r(m)\} = r(\min\{\l, m\})$.

\paragraph{Extension to contextual settings}
In the first part of the paper, we focus on the model defined above for simplicity of presentation. 
However, all our results extend to an important contextual generalization of our model.
In the contextual auction setting, in every round $t$, a context $x_t$ is sampled from some distribution over an arbitrary space $\calX$.
The bidder observes $x_t$ before bidding; naturally, the context can also affect the price.
While we continue to assume that prices in each iteration are independent, we allow the price and the context to be arbitrarily correlated.

Each context $x_t$ includes a conversion probability $c_t \in [0, 1]$, which is the probability that winning an auction in round $t$ (i.e., $b_t \ge p_t$) results in a conversion event, i.e., a reward.
Specifically, in round $t$ where the last conversion was $\l_t$ rounds ago, the bidder will collect a reward $\rfunc(\l_t)$ if $b_t \ge p_t$ and a conversion occurs, which, conditional on $b_t \ge p_t$, happens with probability $c_t$; in this case $\l_{t+1} = 1$. 
Only if these two conditions are satisfied does the bidder get any reward; otherwise, she gets zero reward in round $t$, and in the next round, the state changes to $\l_{t+1} = \l_t + 1$.

The context affects both the probability of winning a conversion and the expected payment of a bid.
For this case, we use a bidding function $\bb : \calX \to \Delta([0, 1])$ that maps a context $x$ to a randomized bid and define $W(\bb)$ and $P(\bb)$ analogously to denote the expected probability of winning a conversion and the expected price for a bidding function $\bb$.
For example, $W(\bb) = \Pr[x,p,b\sim\bb(x)]{b \ge p}$.
We denote $\bar c = W(1) = \Ex{c}$ the expected conversion rate, i.e., the probability that the bidder gets positive reward if she bids $1$, and assume that $\bar c$ is a constant with respect to $T$.

\section{State-Space Reduction for Near-Optimal Planning} \label{sec:bench}

Before we address the problem of learning to bid online in the next sections, in this section, we ask the following question.
{\em Suppose that the bidder knew the distribution of prices and contexts exactly. 
What is the maximum reward she could aim for?
And what policy would achieve this?}

The main result of this section is introducing an infinite-horizon optimization problem with only $\order*{\log T}$ parameters, whose solution can be used as a nearly optimal approximation for the budgeted auction problem with the spacing objective.

First, in \cref{ssec:bench:optimal_algo}, we define the optimal solution one would use knowing the distribution of prices: the optimal algorithm that respects the budget constraint and takes into account \textit{all} the information of past rounds.
Noticing that this solution is hard to compute online, we focus on a different solution.
In \cref{ssec:bench:infinite}, we introduce an infinite-horizon optimization problem over $m$ states, each state representing the time since the last win (such that $m, m+1, \ldots$ rounds since the last win all yield the same reward).
\cref{thm:bench:comparison} shows that when $m = T$ (the number of states equals the number of rounds in the original problem), the reward in the infinite-horizon setting is at least that of the best algorithm in the original problem.
While this may initially appear more complex, learning the optimal solution is simplified due to its stationary nature: with no time horizon, the only relevant variable is the bidder's current state among the $m$ possible states.
In \cref{ssec:bench:small_m}, we focus on simplifying this solution even further.
As mentioned above, with $m=T$ different states, we would have to learn a different bidding strategy for each state.
\cref{thm:bench:small_m} shows that we can, instead, consider an exponentially smaller MDP---with only $m = \order*{\log T}$ states---with only minimal error.
This new and nearly optimal solution is thus much easier to learn online, which will be our goal in \cref{sec:online}.
We start the presentation of these results in the simpler setting without contexts and conversion rates.
In \cref{ssec:bench:conversion}, after establishing the basics, we switch back to the more general contextual setting.

\subsection{Optimal Offline Algorithm} \label{ssec:bench:optimal_algo}

For the first two subsections, assume there are no contexts, and the conversion rate is always $1$ for simplicity of presentation.
When the bidder knows the distribution of prices, she can maximize her expected reward while satisfying the budget constraint without needing to learn this distribution.
In particular, in every round, she could calculate her optimal bid as a function of her current ``state'' $(t, B_t, \l_t)$, where $t$ is the current round, $B_t$ is her remaining budget, and $\l_t$ is the number of rounds since her last winning event.
We note that in every state where $B_t \ge 0$, the bidder can bid zero so that her budget is not depleted.
While the optimal strategy in this state space is complicated, it is possible to compute the optimal policy using a very large dynamic program under mild assumptions.
As long as the number of prices is finite, the possible states are also finite.
For example, if for some $K \in \N$ every price was of the form $p_t = \frac{i}{K}$ for some $i = 0, 1, \ldots, K$, then the number of possible states would be $\Theta(K T^3)$.
Therefore, we can calculate the optimal solution using dynamic programming.
We denote the value of this optimal solution by $T \cdot \optAlg$, where $\optAlg$ is the average-per-round value of the optimal solution.
This benchmark is referred to in the literature as the {\em best algorithm} (e.g., in Bandits with Knapsacks; see \cite[Chapter 10]{DBLP:journals/corr/Slivkins19} or \cite{DBLP:conf/colt/SlivkinsSF23}).

While the above solution is ``tractable,'' its state space is very big, and, most importantly, it is unclear how to design an online learning algorithm that approximates it since the same state is never seen twice. 
For this reason, we will focus on a different benchmark that is much simpler to define and, as we will see below, is also a stronger comparator.

\subsection{Infinite-Horizon Benchmark} \label{ssec:bench:infinite}

To develop our benchmark, we consider a setting with infinite rounds. 
To differentiate the notation from our finite-horizon setting, we use here $H$ to denote the time horizon and consider the limit $H \to \infty$.
Note that in the following analysis, both notations will be used when comparing the two settings: $T$ will refer to the time horizon of our original setting (as described above), and $H$ will be used for the infinite-horizon setting.
Like the finite-horizon setting, the bidder tries to maximize her reward while adhering to a budget constraint.
Since the time horizon is infinite, we need to define the notions of reward and budget for the bidder, which we explain next.

\paragraph{Budget}
In the infinite-horizon setting, we require the budget condition to hold only in expectation and only in the limit.
Specifically, the infinite-horizon time average of the expected spending (with expectation over the prices) must be less than $\rho$.
For example, if $b_h$ is the bid in round $h \in [H]$, then we want the bidder's expected average-per-round payment to be less than $\rho$ as $H \to \infty$, i.e.,
\begin{equation} \label{eq:bench:avg_pay}
    \lim_{H \to \infty} \frac{1}{H} \sum_{h = 1}^H \Ex{ P(b_h) }
    \le
    \rho,
\end{equation}
where, recall that $P(b)$ is the expected payment of bid $b$ when the price is sampled from distribution $\calP$.
We note that the above limit might not exist for some bid sequences; we will deal with this later (see the remark after Equation \eqref{eq:opt_inf}).

\paragraph{Reward}
Since in the original finite-horizon setting, we had that $r(\l_t) \le r(T)$, it is convenient to artificially make a similar restriction here with a parameter $m$.
Specifically, when the last winning event before round $h$ was $\l_h$ rounds ago, we limit the bidder's reward to be $r_m(\l_h)$ (recall $r_m(\l) = r(\min\{m, \l\})$).
This reward structure matches the one in \cref{ssec:bench:optimal_algo} when $m = T$.
Hence, for a sequence of bids $b_1, b_2, \ldots$ the expected average-per-round reward is
\begin{equation} \label{eq:bench:avg_rew}
    \lim_{H \to \infty} \frac{1}{H} \sum_{h = 1}^H \Ex{r_m(\l_h) W(b_h)}.
\end{equation}

Therefore, in this section, we aim to maximize the quantity in \cref{eq:bench:avg_rew} while satisfying the constraint in \cref{eq:bench:avg_pay}.
In addition, because for $\l_h \ge m$, the reward is the same as in the case $\l_h = m$, the optimal bidding can be the same for all $\l_h \ge m$.
In other words, we consider an infinite-horizon average-reward MDP with a budget constraint over $m$ states and the bid range $[0, 1]$ as the action set.
Since there are only $m$ states and no time horizon, we consider stationary policies that are distributions of bids for each of the $m$ states. 
The following optimization problem describes the optimal value for any $m \in \N$:
\begin{equation} \label{eq:opt_inf}
\begin{split}
    \optInf_m
    =
    \sup_{\bb_1,\ldots,\bb_m\in \Delta([0,1])} \;\;
    \lim_{H\to\infty}\frac{1}{H} \sum_{h=1}^H \Ex[\l_h]{ r_m(\l_h) W(\bb_{\l_h}) }
    \quad \text{s.t.} \quad
    \lim_{H\to\infty}\frac{1}{H} \sum_{h=1}^H \Ex[\l_h]{ P(\bb_{\l_h}) } \le \rho.
\end{split}
\end{equation}
A visual representation of the transitions of the MDP in Equation \eqref{eq:opt_inf} is shown in Figure~\ref{fig:mc}.
We remark that the limits now exist, according to \cite[Theorem 8.1.1]{DBLP:books/wi/Puterman94}, since the state space is finite and the policy is stationary.


Next, we prove that the optimal value of the infinite-horizon problem is at least the value of the best algorithm in the finite-horizon problem, when $m = T$.
This will allow us to limit our attention to learning the optimal solution to this problem instead of the optimal algorithm of \cref{ssec:bench:optimal_algo}.

\begin{restatable}{lemma}{benchcomparison}
\label{lem:bench:comparison}
    For every budget per round $\rho$, reward function $\rfunc(\cdot)$, distribution of prices $\calP$, and finite horizon $T$, it holds that $\optInf_T \ge \optAlg$.
\end{restatable}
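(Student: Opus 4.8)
## Proof Proposal

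The plan is to take the optimal finite-horizon algorithm $\mathrm{ALG}$ achieving average reward $\optAlg$ and use it to construct a feasible stationary policy for the infinite-horizon MDP in \eqref{eq:opt_inf} with $m = T$, whose value is at least $\optAlg$. The key difficulty is that $\mathrm{ALG}$ is adaptive to the full state $(t, B_t, \l_t)$, while the infinite-horizon problem only allows stationary policies depending on $\l_h \in [T]$. The standard device to bridge this gap is an \emph{occupation-measure} argument: run $\mathrm{ALG}$ on a fresh block of $T$ rounds, record, for each state $\l \in [T]$, the average (over the $T$ rounds and over the randomness) fraction of rounds spent in state $\l$ together with the average winning probability and average payment conditioned on being in that state, and then define a stationary bidding distribution $\bb_\l$ that reproduces these conditional statistics.

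Concretely, I would first define, for $\l \in [T]$, the quantity $q_\l = \frac1T \sum_{t=1}^T \Pr{\l_t = \l}$ (the expected frequency of visiting state $\l$ under $\mathrm{ALG}$ over a block of $T$ rounds), and $W_\l = \frac1T \sum_{t=1}^T \E[\text{(win prob.\ of ALG's bid in round } t) \cdot \One{\l_t = \l}] / q_\l$, and similarly $P_\l$ for the expected payment. Since ALG's per-round bid is some distribution over $[0,1]$, the averaged-and-conditioned object $\bb_\l$ (mixing ALG's bid distributions over all $(t, \text{history})$ pairs landing in state $\l$, weighted appropriately) is itself a distribution in $\Delta([0,1])$, and by linearity of $W(\cdot)$ and $P(\cdot)$ in the bid distribution we get $W(\bb_\l) = W_\l$ and $P(\bb_\l) = P_\l$. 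The budget feasibility of ALG over the block gives $\sum_\l q_\l P_\l = \frac1T \sum_t \E[P(b_t)] \le \frac BT = \rho$. The next step is the \emph{consistency of occupation measures}: the transition structure of the state variable $\l_h$ in Figure~\ref{fig:mc} is forced — from state $\l < T$ one goes to $1$ with probability $W_\l$ and to $\l+1$ otherwise, from $T$ one self-loops with probability $1-W_T$ — so the vector $(q_\l)$ is (up to the $O(1/T)$ boundary discrepancy from the block being finite) the stationary distribution of the Markov chain induced by $(W_\l)_{\l\in[T]}$. Hence running the stationary policy $(\bb_\l)$ forever yields limiting state frequencies $q_\l^\infty$ with $q_\l^\infty = q_\l + O(1/T)$ — or, to avoid even this error, one can instead observe that the infinite-horizon \emph{average} reward equals $\sum_\l q_\l^\infty r_T(\l) W_\l$ and that the finite-block reward of ALG is exactly $\sum_\l q_\l r_T(\l) W_\l$ (here $r_T(\l) = r(\l)$ for $\l \le T$, matching ALG's reward), so it suffices to compare $q^\infty$ to $q$.

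The cleanest way to finish, avoiding the boundary terms entirely, is probably this: ALG's block of $T$ rounds, repeated independently forever, is itself a (non-stationary, periodic) policy for the infinite-horizon problem that satisfies \eqref{eq:bench:avg_pay} and achieves average reward exactly $\optAlg$; then invoke the fact — which holds for finite-state average-reward constrained MDPs, e.g.\ via \cite[Theorem 8.1.1]{DBLP:books/wi/Puterman94} together with LP duality / the fact that the feasible occupation-measure polytope's optimum is attained at a stationary policy — that the supremum over \emph{all} policies equals the supremum over \emph{stationary} policies, which is exactly $\optInf_T$. Thus $\optInf_T \ge \optAlg$.

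The main obstacle is making rigorous the claim that the averaged bid distribution $\bb_\l$ exists and satisfies $W(\bb_\l) = W_\l$, $P(\bb_\l) = P_\l$ simultaneously — this requires that $W$ and $P$ are both affine functionals of the bid \emph{distribution} (true by definition here, so it is really just bookkeeping), and that the relevant expectations are well-defined; and second, handling the $O(1/T)$ boundary effects cleanly, which the "repeat ALG's block forever, then pass to stationary policies" framing sidesteps but at the cost of invoking the standard (but nontrivial) average-reward-MDP reduction to stationary policies. I expect the write-up to lean on that reduction rather than construct $(\bb_\l)$ by hand.
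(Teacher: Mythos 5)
Your final framing --- run a fresh copy of the optimal finite-horizon algorithm on consecutive blocks of length $T$ in the infinite-horizon setting, and check that the time-averaged payment tends to at most $\rho$ while the time-averaged reward tends to at least $\optAlg$ --- is exactly the paper's proof; the one detail the paper makes explicit is that the true elapsed time since the last win can only exceed the within-block counter, so by monotonicity of $\rfunc$ the realized reward only improves, and your occupation-measure construction is a valid but unnecessary detour. Note that both you and the paper ultimately rely on the (standard but unstated in the paper) fact that for this finite-state average-reward constrained MDP the supremum over stationary policies matches that over arbitrary policies; you are in fact more explicit about this step than the paper is.
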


The intuition behind the proof is a thought experiment where we simulate runs of the optimal algorithm corresponding to $\optAlg$ in the infinite-horizon setting.
Specifically, we partition the rounds into blocks of length $T$ and run a new instance of this algorithm in each block.
The expected reward in each block is at least $T \cdot \optAlg$, and the payment is at most $T \rho$.
This makes the time-average expected reward at least $\optAlg$ and the time-average payment at most $\rho$. See \cref{ssec:app:bench:comparison} for the full proof.

It will be simpler to work with a change of variables from bids to winning probabilities.
We used $W(b)$ as the probability of winning the auction with a bid $b$.
Instead, we will consider a winning probability $W \in [0, 1]$, and aim to bid so as to make the winning probability $W$.
For example, bidding $1$ with probability $W$ and $0$ with probability $1 - W$ wins with probability\footnote{
There is some subtlety here: if the price is zero with positive probability, then bidding zero does not result in winning with zero probability. We can solve this issue by assuming that the bidder also has the option to ``skip'' an auction, even if this would be avoided in any optimal solution.}
$W$. 
On the other hand, multiple bid distributions might win with probability $W \in [0, 1]$; in that case, we consider that the bidder uses the one with the lowest expected spending, which we define as $P(W)$.
We note a slight subtlety here: for some price distributions, there might not exist a bid distribution that achieves the minimum, but there exists one that gets arbitrarily close.
We discuss this in \cref{ssec:calc:single}.


\subsection{Adding Contexts}
\label{ssec:bench:conversion}

In this section, we extend the infinite-horizon constrained MDP defined in the previous section to the problem with contexts and conversion rates.
We focus on the final formulation, where the winning probabilities are the variables.
The important variables now are the probabilities of winning a {\em conversion}.
So, we redefine $W$ as the probability of winning a conversion (rather than just the probability of winning the auction).
Furthermore, the actual probability of winning a conversion also depends on the context and conversion rate.
Bidding $1$ would now result in a conversion probability of only $\bar c$ (recall $\bar c = \Ex{c}$ is the expected context per round).
Using this, we need $W \in [0, \bar c]$ (instead of the full range $[0,1]$).
Since the optimal bid that achieves probability $W$ of winning a conversion will depend on the context, we need a bidding function $\bb$ mapping contexts $\calX$ into bids such that $W(\bb) = W$.
As before, we define the payment for winning with probability $W \in [0, \bar c]$ as
$$
    P(W) = \inf_{\substack{\bb:\calX\to\Delta([0, 1])\\\text{s.t. }W(\bb)=W}} P(\bb).
$$
With this notation, the infinite-horizon constrained MDP we want to work with becomes
\begin{equation} \label{eq:opt_inf_W}
\begin{split}
    \optInf_m
    =
    \sup_{ \vec W \in [0, \bar c]^m } \;\;
    \lim_{H\to\infty}\frac{1}{H} \sum_{h=1}^H \Ex[\l_h]{ \rfunc_m(\l_h) W_{\l_h} }
    \quad \text{s.t.} \quad
    \lim_{H\to\infty}\frac{1}{H} \sum_{h=1}^H \Ex[\l_h]{ P(W_{\l_h}) } \le \rho.
\end{split}
\end{equation}

Analogous to \cref{lem:bench:comparison} we have the following theorem, whose proof is a direct extension of \cref{lem:bench:comparison}  by considering the above modified definitions of $W$, $P(W)$, and the bidding functions $\bb$.
\begin{theorem}
\label{thm:bench:comparison}
    For every finite-horizon setting (any context/price distribution, budget per round $\rho$, reward function $\rfunc(\cdot)$, and time horizon $T$) it holds that $\optInf_T \ge \optAlg$.
\end{theorem}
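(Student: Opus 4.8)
The plan is to run the block-simulation argument behind \cref{lem:bench:comparison}, upgrading every object to its contextual counterpart. Fix a finite-horizon instance with horizon $T$ and let $\mathtt{ALG}$ be the contextual optimal offline algorithm attaining value $T\cdot\optAlg$: in each round it observes the state $(t,B_t,\l_t,x_t)$ — with the context $x_t$ carrying the conversion rate $c_t$ — and outputs a (randomized) bid, with expected total reward $T\cdot\optAlg$ and expected total spending at most $T\rho$. In the infinite-horizon world, partition the rounds into consecutive length-$T$ blocks $I_j=\{(j-1)T+1,\dots,jT\}$ and, on each block, run a fresh independent copy of $\mathtt{ALG}$ with budget reset to $\rho T$ and state counters reset (so the first round of each block has $\l=1$, matching the start of the finite-horizon instance). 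Since prices and contexts are i.i.d.\ from the same joint distribution in both settings, the trajectory of $\mathtt{ALG}$ on a block has exactly the same law as a standalone finite-horizon run, and two further facts make it yield the same reward: (i) the reward cap in \eqref{eq:opt_inf_W} with $m=T$ satisfies $\rfunc_m(\l)=\rfunc(\l)$ for every $\l\le T$, which are the only values of $\l$ occurring inside a block; and (ii) the win/conversion/transition mechanics coincide (reward $\rfunc(\l)$ is collected iff $b\ge p$ and an independent conversion of probability $c$ fires, after which $\l$ resets to $1$, else $\l$ increments). Hence on each block the expected reward is $T\cdot\optAlg$ and the expected spending is at most $T\rho$. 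Writing a horizon as $H=NT+s$ with $0\le s<T$, summing over the $N$ complete blocks, discarding the partial block (rewards are non-negative) and bounding its spend by the $\rho T$ budget of a single run of $\mathtt{ALG}$, the time-average expected reward is at least $\tfrac{NT}{H}\optAlg\to\optAlg$ and the time-average expected spending is at most $\tfrac{(N+1)T\rho}{H}\to\rho$. So this defines an infinite-horizon policy whose time-average reward, in the $\liminf$ sense, is $\ge\optAlg$ and whose time-average spending, in the $\limsup$ sense, is $\le\rho$.

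It remains to phrase this in the $W$-variables of \eqref{eq:opt_inf_W}. In round $h$ the block-simulation policy uses some bidding function $\bb_h:\calX\to\Delta([0,1])$ (history-dependent within its block); set $W_h:=W(\bb_h)\in[0,\bar c]$. By the definition of $P(\cdot)$ as the infimum of $P(\bb)$ over bidding functions attaining a prescribed conversion-winning probability, $\mathtt{ALG}$'s expected spending in round $h$ is at least $P(W_h)$, so the bounds above give $\limsup_{H}\tfrac1H\sum_{h}\E[P(W_h)]\le\rho$, while its expected reward in round $h$ equals $\E[\rfunc_m(\l_h)W_h]$, so $\liminf_{H}\tfrac1H\sum_{h}\E[\rfunc_m(\l_h)W_h]\ge\optAlg$. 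Finally, since \eqref{eq:opt_inf_W} takes the supremum over \emph{stationary} randomized policies, we invoke the standard fact that for an average-reward constrained MDP with finitely many states the optimal value over stationary policies equals that over all policies (via occupation measures / the associated LP), applied to our $m=T$-state MDP with compact action variable $W\in[0,\bar c]$; this produces a feasible stationary policy of value $\ge\optAlg$, i.e.\ $\optInf_T\ge\optAlg$.

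The step I expect to be the main obstacle is this last one — passing rigorously from the within-block history-dependent policy to a stationary one — together with the subtlety, already noted when $P(W)$ was introduced in \cref{ssec:bench:infinite}, that the infimum defining $P(W)$ may not be attained, so one must work with approximating sequences of bidding functions and with suprema rather than maxima throughout. Everything upstream is just the block-simulation of \cref{lem:bench:comparison} re-run with contextual bidding functions and the conversion-aware definitions of $W$ and $P(W)$; in particular, if one reads $\optInf_T$ as a supremum over all history-dependent policies with $\liminf/\limsup$ objectives, the stationarization step is unnecessary and the argument is an almost verbatim copy of the proof of \cref{lem:bench:comparison}.
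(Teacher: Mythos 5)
Your proposal follows the same block-simulation argument the paper uses: its proof of \cref{thm:bench:comparison} is stated as a direct extension of \cref{lem:bench:comparison}, whose proof is exactly this partition-into-length-$T$-blocks construction with reset budget and state counters (and your extra care about stationarization and the non-attained infimum defining $P(W)$ only adds rigor the paper glosses over). The one point to tighten is your claim that each block yields \emph{the same} reward as a standalone finite-horizon run: the infinite-horizon objective credits $\rfunc_T(\l_h)$ for the \emph{actual} elapsed time since the last conversion, which at the start of a block can exceed the reset counter, so you need the monotonicity of $\rfunc$ to conclude that the credited reward is at least the simulated one --- an inequality in the favorable direction that the paper's proof invokes explicitly.
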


In \cref{ssec:app:bench:stationary} we introduce an alternative optimization problem for \eqref{eq:opt_inf_W}.
Specifically, we show that \eqref{eq:opt_inf_W} can be simplified using the stationary distribution on the states $[m]$ implied by winning probabilities $W_1, \ldots, W_m$.

\subsection{Reducing the Number of States in the Infinite-Horizon Problem to \texorpdfstring{$\order*{\log{T}}$}{logarithmic in T}} \label{ssec:bench:small_m}

As shown in \cref{thm:bench:comparison} we can use the infinite-horizon benchmark, $\optInf_T$, to approximate the expected reward of the best algorithm $\optAlg$.
While $\optInf_T$ is easier to describe than $\optAlg$, the number of states in the MDP is $T$.
This means that, during the $T$ steps of running our learning algorithm, some states will be visited at most once, which is insufficient for learning.
Furthermore, because regret bounds for learning MDPs usually depend polynomially on the number of states, the $T$-state MDP will not allow us to get our promised $\orderT*{\sqrt T}$ regret bound.
For example, standard regret bounds in this setting are $\order*{\sqrt{S T}}$ \cite{DBLP:conf/icml/AzarOM17} where $S$ is the number of states.

We solve this issue by showing that using $\optInf_m$ instead of $\optInf_T$ in fact, leads to minimal error even when $m \approx \log T$.
Specifically, the final result for this section is the following.

\begin{restatable}{theorem}{benchsmallm}
\label{thm:bench:small_m}
    Fix a constant $C > 0$ and an integer $m \ge \frac{2C}{\bar c \rho} \log T$.
    Then, for any integer $M \ge m$, it holds that $\optInf_m \ge \optInf_M - \order*{\frac{1}{\bar c \rho} T^{-C}}$.
    In addition, this can be achieved even if we constrain $W_m = \bar c$ (i.e., bid $1$ in state $m$).
\end{restatable}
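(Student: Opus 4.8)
The plan is to show that an optimal policy for the $M$-state MDP (call its winning probabilities $W_1^*, \dots, W_M^*$) can be ``truncated'' to an $m$-state policy while losing almost no value. The key structural facts I would invoke are already flagged in the introduction: first, \cref{cor:bench:increasing}, which says there is an optimal policy whose conversion-winning probabilities are monotone nondecreasing in $\l$ (so $W_1^* \le W_2^* \le \cdots$); and second, \cref{lem:bench:win_LB}, which says that after a constant number $\l_0 = O(1/(\bar c \rho))$ of losses the optimal winning probability is bounded below by a constant $\gamma$ (otherwise the budget is underused — the bidder could afford to bid higher in the high-$\l$ states). Combining these, for all $\l \ge \l_0$ we have $W_\l^* \ge \gamma$, so starting from any state the probability that the chain reaches state $m$ before converting is at most $(1-\gamma)^{m - \l_0}$. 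With $m \ge \frac{2C}{\bar c \rho}\log T$ and $\gamma, \l_0$ both controlled by $\bar c \rho$, this reach probability is $O(T^{-C})$ — actually one gets an extra factor, hence the stated $O(\frac{1}{\bar c\rho} T^{-C})$ after summing a geometric tail.

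Next I would define the candidate $m$-state policy explicitly: set $\hat W_\l = W_\l^*$ for $\l < m$ and $\hat W_m = \bar c$ (bid $1$ in the capped state), which immediately gives the ``in addition'' clause about constraining $W_m = \bar c$. I then need two things: (i) $\hat W$ is feasible for \eqref{eq:opt_inf_W} with budget $\rho$, and (ii) its reward is at least $\optInf_M - O(\frac{1}{\bar c\rho}T^{-C})$. For feasibility, note the stationary distribution of the $m$-state chain under $\hat W$ is a ``folded'' version of the $M$-state stationary distribution: all the mass the original chain puts on states $\ge m$ gets collapsed onto state $m$. Bidding $1$ in state $m$ costs at most $P(\bar c) \le \rho \cdot (\text{something bounded})$ per visit, but the crucial point is that state $m$ is visited with probability only $O(T^{-C})$ under the folded distribution (by the reach bound above), so the extra expected payment incurred by the override is $O(T^{-C})$, which is dominated. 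Thus $\hat W$ satisfies the budget constraint up to a negligible slack that can be absorbed (e.g.\ by scaling $\hat W$ down by a $1 - O(T^{-C})$ factor, changing the reward by the same negligible amount). For the reward comparison, the contribution to $\optInf_M$ from states $\ge m$ is bounded: those states carry reward $r_M(\l) \le r(M) \le M \le \poly(T)$, but are visited with total stationary probability $O(T^{-C'})$ for a suitable $C'$; meanwhile on states $< m$ the two chains' reward contributions agree term-by-term except for the small discrepancy in stationary weights caused by folding, again $O(T^{-C})$. Choosing the constant in $m \ge \frac{2C}{\bar c\rho}\log T$ with the factor $2$ exactly compensates the $\poly(T) = r(M)$ blow-up against $T^{-2C}$, leaving $O(\frac{1}{\bar c\rho}T^{-C})$.

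The main obstacle I anticipate is the careful bookkeeping around the stationary distributions: one must show that folding the $M$-state stationary distribution onto $m$ states both (a) changes the per-state reward weights by only $O(T^{-C})$ on the low states and (b) keeps the payment constraint satisfied despite the forced high bid in state $m$. This requires a clean comparison lemma between the stationary distribution of the truncated chain and the tail-folded original distribution — probably easiest via the renewal/regeneration structure (each return to state $1$ is a renewal, and the expected reward and cost per renewal cycle differ by only the rare event of reaching state $m$). I would also need to handle the edge case where $\optInf_M$ is attained only in the limit (no maximizer), by working with a near-optimal policy and passing $\e \to 0$ at the end. The monotonicity result \cref{cor:bench:increasing} is what makes the whole argument go through — without it, an optimal $M$-state policy could concentrate its winning probability on large $\l$, making state $m$ reachable with non-negligible probability and breaking the $O(T^{-C})$ bound.
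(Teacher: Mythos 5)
Your overall strategy is the same as the paper's: truncate the optimal $M$-state policy at state $m$, force $W_m = \bar c$, use \cref{cor:bench:increasing} together with \cref{lem:bench:win_LB} to show the chain reaches state $m$ only with geometrically small probability, bound the resulting reward loss and budget excess, and then restore feasibility by a small perturbation. However, your reward accounting has a concrete error. You bound the lost contribution of states $\l \ge m$ by $r(M)\cdot\Pr[\text{reach state } m]$ and claim that the factor $2$ in $m \ge \frac{2C}{\bar c\rho}\log T$ makes the reach probability $T^{-2C}$, absorbing the $r(M) \le M \le T$ blow-up. It does not: with $W_\l^* \ge \frac{\bar c\rho}{2}$ for $\l \ge \frac{2}{\bar c\rho}$, the reach probability is $(1-\frac{\bar c\rho}{2})^{m - 2/(\bar c\rho)} \approx T^{-C}$ --- the factor $2$ is consumed cancelling the $\frac{1}{2}$ in the win-probability lower bound, so your bound is $T^{1-C}$, which fails for every $C \le 1$ while the theorem claims all $C > 0$. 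The paper avoids this by never multiplying by $r(M)$: it writes the per-renewal-cycle reward in telescoped form, $\SR_M(\vec W^*) = \sum_{\l}\bigl(r_M(\l)-r_M(\l-1)\bigr)\prod_{i<\l}(1-W_i^*)$, so each tail term is at most the reach probability (using the bounded-differences hypothesis $r(\l)-r(\l-1)\le 1$), and the geometric tail sums to $\frac{4}{\bar c\rho}T^{-C}$. This is exactly the ``clean comparison lemma via the renewal structure'' you anticipated needing; the bounded-differences property of $r$ is the ingredient that makes it work.

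A second, smaller issue: your feasibility patch of ``scaling $\hat W$ down by a $1-O(T^{-C})$ factor'' does not directly work, because $C(\vec W)$ is a ratio of per-cycle cost to per-cycle length and $P(W)$ is not linear in $W$, so scaling the winning probabilities does not scale the average payment proportionally. The paper instead mixes \emph{occupancy measures} --- taking weight $1-\a$ on the truncated policy and weight $\a = O(\frac{1}{\bar c\rho}T^{-C})$ on a near-zero-cost policy ($W''_\l = 0$ for $\l < m$, $W''_m = \bar c$) --- which makes both $R$ and $C$ exactly linear in the mixture and yields the claimed feasible solution with reward loss $O(\frac{1}{\bar c\rho}T^{-C})$. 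With these two repairs your argument coincides with the paper's proof.
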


The high-level idea of the proof is that with an average budget $\rho$, the bidder can win an auction on average every $\rho^{-1}$ steps and, with expected conversion probability $\bar c$, get a conversion every $(\bar c\rho)^{-1}$ rounds.
We will prove that in the optimal policy, the probability that no conversion occurred for a sufficiently long time is so low that dropping this part of the MDP leads to minimal error. 
The hard part of turning this into a proof is to argue that in an optimal strategy, the winning probabilities $W_\l$ are monotone increasing in $\l$ (the longer the time since the last conversion, the more eager the bidder is to win now).
Without this property, the following solution could win on average every $\rho^{-1}$ steps (consider $\bar c = 1$ for simplicity): $W_1 = 1 - \frac{1 - \rho}{\rho}\order*{T^{-1/2}}$, $W_2 = \ldots = W_{\sqrt T - 1} = 0$, and $W_{\sqrt T} = 1$.
Dropping the latter part of such a solution would result in a significant loss in value.
While not surprising, proving the monotonicity of the optimal $W_\l$ requires substantial effort.
We provide an extensive outline of this result; the complete proof can be found in \cref{ssec:app:increasing,ssec:app:bench:bias}.

\paragraph{Lagrangian problem}
The first step is to consider the Lagrangian of the problem.
In the Lagrangian version of the problem, we add the spending into the objective function with a Lagrange multiplier $\lambda \ge 0$.
The value of winning a conversion in state $\l$ with probability $W_\l$ becomes:
\begin{equation*}
    \calL_\l(W_\l, \lambda) = r_m(\l) W_\l - \lambda P(W_\l) .
\end{equation*}
Traditionally, one would write the Lagrangian objective as
$r_m(\l) W_\l + \lambda( \rho - P(W_\l))$, as (the infinite-horizon average of) this expression for any $\lambda \geq 0$ 
constitutes an upper bound on the value of the optimal solution to the constrained problem~\eqref{eq:opt_inf_W}.
Our formulation of the Lagrangian objective above omits the $\lambda \rho$ term as we focus on maximizing $\calL$ by picking $W$, and $\lambda \rho$ is a constant in this respect.

Fix a $\lambda \ge 0$.
In the Lagrangian MDP the reward in state $\l \in [m]$ is $\calL_\l(W_\l, \lambda)$ and the action space is $W_\l \in [0, \bar c]$.
Without constraints, we make use of standard machinery from unconstrained MDP optimization.
Following the notation of \cite[Section 5]{DBLP:books/wi/Puterman94}, let $g_\l^*(\lambda)$ and $h_\l^*(\lambda)$ be the gain and bias of the optimal policy $W_1^*(\lambda), \ldots, W_m^*(\lambda)$, which are defined as follows.
First, the gain $g_\l^*(\lambda)$ is the time-average expected reward of the optimal policy when starting at state $\l$.
Formally,
\begin{equation*}
    g_\l^*(\lambda)
    =
    \lim_{H\to\infty}\frac{1}{H} \sum_{h=1}^H
    \ExC[\l_h]{\calL_{\l_h}\qty\big( W_{\l_h}^*(\lambda), \lambda) }{\l_1 = \l}.
\end{equation*}
\cite[Theorem 8.3.2]{DBLP:books/wi/Puterman94} shows that for weakly communicating MDPs (MDPs where every state is reachable from any other state) there is always an optimal policy with constant gain, i.e., $g_\l^*(\lambda) = g_{\l'}^*(\lambda)$ for $\l \ne \l'$.
Therefore, $g_\l^*(\lambda) = g^*(\lambda)$.
The bias $h_\l^*(\lambda)$ at state $\l$ is defined as the difference in total expected reward if starting at state $\l$ instead of getting $g^*(\lambda)$ every round.
Formally,
\begin{equation*}
    h_\l^*(\lambda)
    =
    \lim_{H\to\infty} \sum_{h=1}^H \qty(
        \ExC[\l_h]{\calL_{\l_h}\qty\big(W_{\l_h}^*(\lambda^*), \lambda^*) }{\l_1 = \l}
        -
        g^*(\lambda)
    ).
\end{equation*}

We note that the above limit does not necessarily exist.
In that case, we can substitute the limit above with the Cesaro limit\footnote{\url{https://en.wikipedia.org/wiki/Cesaro_summation}} to solve this issue and make the bias function well-defined.

Using the gain and the bias functions, we use the Bellman optimality condition for infinite-horizon average reward MDPs (see \cite[Section 8.4.1]{DBLP:books/wi/Puterman94}).
Specifically, in our setting, this condition implies that for all $\l \in [m]$
\begin{equation*}
\begin{split}
    h_\l^*(\lambda) + g^*(\lambda)
    = &
    \max_{W}\qty\Big[
        \calL_\l(W, \lambda) + W h_1(\lambda) + (1 - W) h_{ \min\{m, \l+1\} }(\lambda)
    ]
\end{split}
\end{equation*}

We use two simplifying definitions.
First, we define $h_{m+1}(\lambda) = h_m(\lambda)$; this simplifies the minimum in the above notation.
Second, because the above equation is invariant to adding a constant to $h_\cdot^*(\lambda)$, we can assume w.l.o.g. that $h_1(\lambda) = 0$.
Using this notation, we get that for any optimal solution $W_1^*(\lambda),\ldots,W_m^*(\lambda)$ it holds that
\begin{equation} \label{eq:bell}
\begin{split}
    h_\l^*(\lambda) + g^*(\lambda)
    \;=\; &
    \max_{W}\qty\Big[
        W r(\l) - \lambda P(W) +  \qty\big(1 - W) h_{ \l+1 }^*(\lambda)
    ]
    \\
    \;=\; &
    W_\l^*(\lambda) r(\l) - \lambda P\qty\big(W_\l^*(\lambda)) +  \qty\big(1 - W_\l^*(\lambda)) h_{ \l+1 }^*(\lambda).
\end{split}
\end{equation}

\paragraph{Properties of the bias function}
We now examine the structure of the bias function since $W_\l(\lambda)$ depends on $h_{\l+1}^*(\lambda)$ (see the $\max$ term of \cref{eq:bell}).
The property that interests us is a bound on the increments of the bias function $h_{\l+1}^*(\lambda) - h_{\l+1}^*(\lambda)$.
We show that these increments cannot be more than the increments of the reward function $\rfunc(\cdot)$.
This will let us compare the two maximization problems that define $W_\l^*(\lambda)$ and $W_{\l+1}^*(\lambda)$.

\begin{restatable}{lemma}{benchbias}\label{lem:bench:bias}
    For every $\l \in [m]$ it holds that
    \begin{equation}\label{eq:42}
        h_{\l+1}^*( \lambda )
        \le
        h_\l^*(\lambda) + r_m(\l+1) - r_m(\l).
    \end{equation}
\end{restatable}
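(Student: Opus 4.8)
The plan is to prove the bound $h_{\l+1}^*(\lambda) \le h_\l^*(\lambda) + r_m(\l+1) - r_m(\l)$ by induction on $\l$, working downward from $\l = m$ or, more naturally, by exploiting the Bellman optimality equation \eqref{eq:bell} together with a coupling/interchange argument. Here is the approach I would take. Since the MDP has the special ``chain'' structure in Figure \ref{fig:mc} — from state $\l$ you either jump back to state $1$ (with probability $W_\l$) or advance to state $\l+1$ (with probability $1-W_\l$) — the bias $h_\l^*(\lambda)$ has a clean probabilistic interpretation: starting from state $\l$, the optimal policy runs until it first wins a conversion (returning to state $1$), and $h_\l^*(\lambda)$ is the total expected $\calL$-reward accrued over that excursion minus $g^*(\lambda)$ times the (expected) number of steps, up to the additive normalization $h_1^*(\lambda) = 0$. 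Comparing the excursion from state $\l+1$ with the excursion from state $\l$: the latter can be coupled to the former by first taking one extra step in state $\l$ using the \emph{same} winning probability $W_{\l+1}^*(\lambda)$ that is optimal at state $\l+1$ (a suboptimal but feasible action at state $\l$), after which both processes are in identical distributions over future states.

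Concretely, I would argue as follows. By the Bellman equation \eqref{eq:bell} at state $\l$, using the suboptimal action $W := W_{\l+1}^*(\lambda)$,
\begin{equation*}
    h_\l^*(\lambda) + g^*(\lambda)
    \ge
    W\, r_m(\l) - \lambda P(W) + (1 - W) h_{\l+1}^*(\lambda).
\end{equation*}
By the Bellman equation \eqref{eq:bell} at state $\l+1$, with the optimal action $W = W_{\l+1}^*(\lambda)$,
\begin{equation*}
    h_{\l+1}^*(\lambda) + g^*(\lambda)
    =
    W\, r_m(\l+1) - \lambda P(W) + (1 - W) h_{\l+2}^*(\lambda).
\end{equation*}
Subtracting, the $\lambda P(W)$ and $g^*(\lambda)$ terms cancel and I obtain
\begin{equation*}
    h_{\l+1}^*(\lambda) - h_\l^*(\lambda)
    \le
    W\,\big(r_m(\l+1) - r_m(\l)\big) + (1 - W)\,\big(h_{\l+2}^*(\lambda) - h_{\l+1}^*(\lambda)\big).
\end{equation*}
If I define $\Delta_\l := h_{\l+1}^*(\lambda) - h_\l^*(\lambda)$ and $\delta_\l := r_m(\l+1) - r_m(\l) \ge 0$, this reads $\Delta_\l \le W \delta_\l + (1-W)\Delta_{\l+1}$, i.e. $\Delta_\l$ is a convex combination (with weights $W, 1-W$) of $\delta_\l$ and $\Delta_{\l+1}$, hence $\Delta_\l \le \max\{\delta_\l, \Delta_{\l+1}\}$. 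Iterating this, $\Delta_\l \le \max\{\delta_\l, \delta_{\l+1}, \ldots, \delta_{m-1}, \Delta_m\}$; and since $r_m$ is flat at $m$ we have $h_{m+1}^*(\lambda) = h_m^*(\lambda)$ (by the stated convention), so $\Delta_m = 0$. Finally, concavity of $r$ gives $\delta_\l \ge \delta_{\l+1} \ge \cdots \ge \delta_{m-1} \ge 0$, so the maximum over the tail is exactly $\delta_\l = r_m(\l+1) - r_m(\l)$, which yields the claimed inequality.

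The main obstacle I anticipate is making the excursion/coupling argument fully rigorous when the relevant limits in the definition of $h_\l^*(\lambda)$ do not exist and must be replaced by Cesàro limits (as flagged in the text), and ensuring the Bellman optimality equation \eqref{eq:bell} genuinely holds with equality for the chosen optimal policy — this relies on \cite[Section 8.4.1]{DBLP:books/wi/Puterman94} and the weakly-communicating structure of the MDP, including the subtlety that $P(W)$ may only be approached and not attained, so one may need to take a supremizing sequence of policies and pass to the limit. A secondary technical point is justifying the base case $\Delta_m = 0$ carefully: it follows from the reward being constant for states $\ge m$ together with the convention $h_{m+1}^* = h_m^*$, but one should check this is consistent with how the optimal policy treats state $m$ (the self-loop in Figure \ref{fig:mc}). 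Once these measure-theoretic/limit issues are handled, the algebraic heart of the argument — the one-step interchange giving $\Delta_\l \le W\delta_\l + (1-W)\Delta_{\l+1}$ — is short and robust.
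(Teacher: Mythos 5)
Your proposal is correct and follows essentially the same route as the paper's proof: both rest on a one-step interchange via the Bellman optimality equation \eqref{eq:bell}, concavity of $r$, and the boundary convention $h_{m+1}^*(\lambda) = h_m^*(\lambda)$. The paper packages this as a downward induction on $\l$ (substituting the hypothesis $h_{\l+2}^* \le h_{\l+1}^* + r(\l+2) - r(\l+1)$ inside the max and then applying concavity), whereas you unroll the recursion $\Delta_\l \le W\delta_\l + (1-W)\Delta_{\l+1}$ and bound by the maximum over the tail --- a purely presentational difference.
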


One idea to prove the above lemma is the following thought experiment.
Consider starting at state $\l$, but instead of following the optimal strategy, pretend to start at state $\l+1$ and follow that strategy. 
That is, pretend that we started one state ahead.
The discrepancy in reward between this and the optimal strategy is $r(\l'+1) - r(\l')$, where $\l'$ is the state where we won for the first time (after which the two trajectories coincide).
Because of concavity, this is at most $r(\l+1) - r(\l)$, proving the same bound for $h_{\l+1}^*(\l) - h_\l^*(\l)$, as needed.
The formal proof we present in \cref{ssec:app:bench:bias} is more technical and proves the lemma by induction using the optimality equation \eqref{eq:bell}.

\paragraph{Monotonicity of the optimal Lagrangian solution}
We now prove that the optimal solution for any Lagrange multiplier $\lambda$ must be non-decreasing.
This will imply the same property for the constrained problem by using the fact that the constrained problem \eqref{eq:opt_inf_W} can be rewritten as a Linear Program.
However, the last fact only proves that an optimal solution of the constrained problem is optimal for the Lagrangian problem of some $\lambda$.
Therefore, knowing that a monotone optimal solution exists for the Lagrange problem is insufficient.
We have to show that \textit{every} optimal solution is monotone.
The additional assumption we need for this result is that the reward function $r(\cdot)$ must satisfy strict concavity.
This can be artificially enforced in any reward function with negligible error by perturbing it by $\order*{1 / \text{poly}(T)}$.
The proof uses \cref{lem:bench:bias}, the Bellman optimality conditions, and the strict concavity of the reward function.
See the details in \cref{sec:app:bench}.

\begin{restatable}{lemma}{benchincr} \label{lem:bench:incr}
    Fix any optimal solution $\vec W^*(\lambda)$ and $\l \in [m-1]$.
    If $r(\l+1) - r(\l) > r(\l+2) - r(\l+1)$, then $W_\l^*(\lambda) \le W_{\l+1}^*(\lambda)$.
\end{restatable}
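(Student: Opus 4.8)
The plan is to read off $W_\l^*(\lambda)$ and $W_{\l+1}^*(\lambda)$ from the Bellman optimality equation \eqref{eq:bell} as maximizers of two one-parameter objectives that differ only in the linear-in-$W$ coefficient, and then combine the bias-increment bound of \cref{lem:bench:bias} with a monotone-comparative-statics argument. For $\l \in [m]$ set
\[
    F_\l(W) \;=\; W\,r(\l) - \lambda P(W) + (1-W)\,h_{\l+1}^*(\lambda) \;=\; a_\l W - \lambda P(W) + h_{\l+1}^*(\lambda),
    \qquad a_\l \;:=\; r(\l) - h_{\l+1}^*(\lambda).
\]
The second equality in \eqref{eq:bell} says precisely that $W_\l^*(\lambda)$ attains $\max_{W\in[0,\bar c]} F_\l(W)$; dropping the additive constant $h_{\l+1}^*(\lambda)$, this means $W_\l^*(\lambda) \in \argmax_{W\in[0,\bar c]}\big(a_\l W - \lambda P(W)\big)$, and likewise $W_{\l+1}^*(\lambda) \in \argmax_{W\in[0,\bar c]}\big(a_{\l+1}W - \lambda P(W)\big)$ with $a_{\l+1} = r(\l+1) - h_{\l+2}^*(\lambda)$.

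First I would show that the hypothesis forces $a_{\l+1} > a_\l$, i.e.\ $h_{\l+2}^*(\lambda) - h_{\l+1}^*(\lambda) < r(\l+1) - r(\l)$. Applying \cref{lem:bench:bias} with $\l+1$ in place of $\l$ (legitimate since $\l \in [m-1]$ gives $\l+1 \in [m]$) yields $h_{\l+2}^*(\lambda) - h_{\l+1}^*(\lambda) \le r_m(\l+2) - r_m(\l+1)$. Since $r_m(\l+2) - r_m(\l+1) \le r(\l+2) - r(\l+1)$ — with equality when $\l+2 \le m$, and left-hand side $0$ when $\l+1 = m$, using the conventions $r_m(m+1) = r(m)$ and $h_{m+1}^* = h_m^*$ — combining with the assumption $r(\l+2) - r(\l+1) < r(\l+1) - r(\l)$ gives $a_{\l+1} - a_\l \ge \big(r(\l+1) - r(\l)\big) - \big(r(\l+2) - r(\l+1)\big) > 0$.

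Then I would finish with the comparative-statics step. Writing $W_\l^* = W_\l^*(\lambda)$ and $W_{\l+1}^* = W_{\l+1}^*(\lambda)$, optimality of each over the common feasible set gives
\[
    a_\l W_\l^* - \lambda P(W_\l^*) \;\ge\; a_\l W_{\l+1}^* - \lambda P(W_{\l+1}^*),
    \qquad
    a_{\l+1} W_{\l+1}^* - \lambda P(W_{\l+1}^*) \;\ge\; a_{\l+1} W_\l^* - \lambda P(W_\l^*).
\]
Adding these two inequalities cancels the payment terms and leaves $(a_\l - a_{\l+1})(W_\l^* - W_{\l+1}^*) \ge 0$; since $a_\l - a_{\l+1} < 0$, this forces $W_\l^*(\lambda) \le W_{\l+1}^*(\lambda)$, as desired. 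Note this step uses nothing about $P(\cdot)$ (not even convexity) beyond the fact that $W_\l^*$ and $W_{\l+1}^*$ are maximizers over the same set $[0,\bar c]$.

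The only substantive input is the bias-increment bound \cref{lem:bench:bias}, which is already in hand, so the main obstacle is really just careful bookkeeping: invoking the Bellman equation for the specific optimal \emph{actions} rather than merely the value function, handling the boundary state $\l = m-1$ via the $r_m$/$h_{m+1}$ conventions, and — since the claim concerns \emph{every} optimal solution — taking $h^*(\lambda)$ throughout to be the bias of that particular optimal stationary policy, which is exactly the object for which \eqref{eq:bell} and \cref{lem:bench:bias} are stated.
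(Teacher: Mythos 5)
Your proposal is correct and is essentially the paper's own argument: the paper likewise writes the two Bellman exchange inequalities for the optimal actions at states $\l$ and $\l+1$, adds them to obtain $\big(h_{\l+1}^* - h_{\l+2}^* + r(\l+1) - r(\l)\big)\big(W_{\l+1}^* - W_\l^*\big) \ge 0$ (your $a_{\l+1}-a_\l$ is exactly that bracket), and then invokes \cref{lem:bench:bias} together with strict concavity to show the bracket is strictly positive. Your explicit handling of the boundary via $r_m$ versus $r$ is a minor bookkeeping refinement the paper glosses over, but the route is the same.
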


The lemma is proven by considering the use of $W_\l^*(\lambda)$ in state $\l+1$ and vice versa.
Due to optimality, this leads to (weakly) suboptimal behavior.
Comparing these two scenarios would lead to improvement in reward if $W_\l^*(\lambda) > W_{\l+1}^*(\lambda)$, proving the lemma.

\paragraph{Monotonicity of the optimal constrained solution}
Using the above lemma, we achieve the monotonicity of our original problem.
Because optimization problem (\ref{eq:opt_inf_W}) can be written as a Linear Program using the occupancy measure (see \cref{ssec:app:calc:lp} for a detailed presentation), its optimal solutions are also optimal for the Lagrangian problem, for some $\lambda$.
Therefore, \cref{lem:bench:incr} proves the monotonicity of those solutions: $W_1^* \le W_2^* \le \ldots \le W_m^*$.
However, when the context/price distribution has infinite support, the variables of this linear program are real-valued functions (specifically, probability-like measures over the space of actions).
This might make the optimality of $W_1^*, \ldots, W_m^*$ for the Lagrangian problem not trivial; we include a proof of this in \cref{ssec:app:increasing} for completeness.

\begin{restatable}{corollary}{benchincreasing} \label{cor:bench:increasing}
    Assume $r(\cdot)$ is strictly concave.
    Then there exists optimal solution for the optimization problem \eqref{eq:opt_inf_W} that is weakly increasing, i.e., $W_\l^* \le W_{\l+1}^*$ for all $\l\in[m-1]$.
\end{restatable}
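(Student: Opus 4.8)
The plan is to deduce \cref{cor:bench:increasing} from \cref{lem:bench:incr} by leveraging the linear-programming reformulation of the constrained problem \eqref{eq:opt_inf_W} together with a perturbation argument to handle strict concavity. First I would write the infinite-horizon average-reward constrained MDP \eqref{eq:opt_inf_W} as a linear program in the occupancy measure (the variables being, for each state $\l \in [m]$, a measure over bid distributions — or equivalently over achievable pairs $(W, P(W))$ — subject to the Bellman flow constraints and the budget constraint $\sum_\l (\text{occupancy}) \cdot P(W) \le \rho$); this is the LP referenced in \cref{ssec:app:calc:lp}. By LP duality, an optimal primal solution $\vec W^*$ is also optimal for the Lagrangian problem with $\lambda = \lambda^*$, where $\lambda^*$ is the optimal dual variable attached to the budget constraint. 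Hence $\vec W^*$ is an optimal solution $\vec W^*(\lambda^*)$ of the unconstrained Lagrangian MDP, and \cref{lem:bench:incr} applies to it.

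The subtlety, already flagged in the excerpt, is that \cref{lem:bench:incr} requires strict concavity of $r(\cdot)$ — precisely, it needs $r(\l+1) - r(\l) > r(\l+2) - r(\l+1)$ at the relevant $\l$ — to force $W_\l^*(\lambda) \le W_{\l+1}^*(\lambda)$. Since the statement of \cref{cor:bench:increasing} already assumes $r(\cdot)$ strictly concave, every such inequality is strict, so \cref{lem:bench:incr} directly yields $W_\l^* \le W_{\l+1}^*$ for all $\l \in [m-1]$, giving the claimed monotone optimal solution. (The nearby discussion about perturbing $r$ by $\order*{1/\poly(T)}$ is what lets us invoke this corollary later for a general concave $r$; within the corollary itself strict concavity is a hypothesis.)

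The main obstacle I expect is the one the authors themselves point to: making the reduction "optimal primal $\Rightarrow$ optimal for the Lagrangian" fully rigorous when the context/price distribution has infinite support, so that the LP variables are genuine measures over a continuum of actions rather than finitely many. In that case I would need to argue (i) that the LP/occupancy-measure formulation is valid and its value equals $\optInf_m$, (ii) that strong duality holds so an optimal dual multiplier $\lambda^*$ for the budget constraint exists and is finite, and (iii) that complementary slackness implies the primal optimum maximizes the Lagrangian pointwise in each state — i.e., that $W_\l^*$ attains $\max_W \calL_\l(W,\lambda^*)$ augmented by the bias term, matching \eqref{eq:bell}. Steps (ii)–(iii) in infinite dimensions require a compactness or Slater-type condition; I would handle it by noting the achievable set $\{(W,P(W)) : W \in [0,\bar c]\}$ is closed (up to the infimum-attainment caveat discussed in \cref{ssec:calc:single}), bounded, and convex after taking closures, which suffices for the one-dimensional budget constraint. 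Once (i)–(iii) are in place, \cref{lem:bench:incr} finishes the argument, and I would defer the full measure-theoretic bookkeeping to the appendix (\cref{ssec:app:increasing}) as the authors indicate.
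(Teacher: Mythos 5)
Your overall route coincides with the paper's: reduce to the Lagrangian problem and invoke \cref{lem:bench:incr}, using the LP/occupancy-measure structure of \eqref{eq:opt_inf_W} to argue that the constrained optimum is Lagrangian-optimal for some multiplier. The only substantive difference is how that last step is justified. You propose abstract strong duality plus complementary slackness for the (possibly infinite-dimensional) occupancy-measure LP, correctly flagging that this is the delicate point and sketching a Slater-type fix (bidding $0$ gives cost $0<\rho$, and the achievable occupancy measures form a convex set). The paper instead proves the step by hand, with no appeal to infinite-dimensional duality theorems: it shows $C(\vec W^*(\lambda))$ is non-increasing in $\lambda$ via a two-line exchange argument, defines the threshold $\lambda_0$ at which the cost crosses $\rho$, and then (\cref{lem:bench:case1,lem:bench:case2}) shows $\vec W^*$ is optimal for the Lagrangian at $\lambda_0$ --- in the case $C(\vec W^*(\lambda_0))\neq\rho$ by mixing the occupancy measures of $\vec W^*(\lambda_0-\varepsilon)$ and $\vec W^*(\lambda_0+\varepsilon)$ so as to hit cost exactly $\rho$ and letting $\varepsilon\to 0$. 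That construction is in effect a self-contained proof of precisely the complementary-slackness and strong-duality facts you want to cite, so the two arguments deliver the same conclusion; yours is shorter on paper but leaves as a sketch exactly the piece the authors judged nontrivial enough to write out, while theirs avoids having to verify any general duality hypotheses for measure-valued LPs. The remainder of your argument (strict concavity is a hypothesis of the corollary, so \cref{lem:bench:incr} applies at every $\l\in[m-1]$) matches the paper exactly.
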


\paragraph{Optimality of the small state MDP}
The key result to prove \cref{thm:bench:small_m} is that the optimal solution visits state $\l$ very infrequently for large $\l$.
Specifically, we show that for large enough $\l$, the probability of winning a conversion in state $\l$ is at least $\frac{\bar c \rho}{2}$; this implies that state $\l+1$ is visited $1 - \frac{\bar c \rho}{2}$ times less than state $\l$.
This is captured in the following lemma.

\begin{restatable}{lemma}{benchwinLB} \label{lem:bench:win_LB}
    There exists optimal solution $\vec W^*$ for the constrained problem where, 
    if $\l \ge \frac{2}{\bar c \rho}$ then $W_\l^* \ge \frac{\bar c \rho}{2}$.
\end{restatable}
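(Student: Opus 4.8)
The plan is to work with the weakly increasing optimal solution $\vec W^*$ of the constrained problem \eqref{eq:opt_inf_W} furnished by \cref{cor:bench:increasing}, and to argue through its stationary behaviour. By monotonicity it is enough to prove the bound at the single state $n := \ceil{2/(\bar c\rho)}$ (and the statement is vacuous unless $n \le m$, which holds for $T$ large since $m = \Omega(\tfrac{\log T}{\bar c\rho})$), so I would suppose for contradiction that $W_n^* < \tfrac{\bar c\rho}{2}$; monotonicity then forces $W_\l^* < \tfrac{\bar c\rho}{2}$ for every $\l \le n$. The contradiction will come from incompatible bounds on $\Ex{L}$, the expected number of rounds between consecutive conversions, i.e.\ the return time to state $1$ in the Markov chain of \cref{fig:mc} induced by $\vec W^*$.

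First I would record the elementary estimate $P(W) \le W/\bar c$ for all $W \in [0,\bar c]$: the context-independent bidding function that bids $1$ with probability $W/\bar c$ and skips otherwise wins a conversion with probability exactly $(W/\bar c)\,\bar c = W$ and has expected payment at most $W/\bar c$ (prices lie in $[0,1]$). Next I would argue the budget constraint is tight at $\vec W^*$: if it were slack, raising $W_m^*$ slightly shortens the inter-conversion intervals while leaving the reward collected per interval unchanged---any interval reaching state $m$ contributes exactly $r(m)$ no matter where it ends---strictly increasing the long-run average reward, a contradiction. Combining these two facts with the balance equation at state $1$, namely that the conversion rate equals $\pi_1^* = \sum_\l \pi_\l^* W_\l^*$ for the stationary distribution $\pi^*$, yields
\[
  \rho \;=\; \sum_{\l} \pi_\l^* \, P(W_\l^*) \;\le\; \frac{1}{\bar c}\sum_\l \pi_\l^* W_\l^* \;=\; \frac{\pi_1^*}{\bar c},
  \qquad\text{so}\qquad
  \Ex{L} \;=\; \frac{1}{\pi_1^*} \;\le\; \frac{1}{\bar c\rho}.
\]

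On the other hand, under the contradiction hypothesis the walk leaves state $1$ slowly: for $k \le n$ one has $\Pr{L \ge k} = \prod_{j=1}^{k-1}(1 - W_j^*) \ge (1 - \tfrac{\bar c\rho}{2})^{k-1}$, hence
\[
  \Ex{L} \;=\; \sum_{k\ge 1}\Pr{L \ge k} \;\ge\; \sum_{k=1}^{n}\Big(1 - \tfrac{\bar c\rho}{2}\Big)^{k-1} \;=\; \frac{1 - (1 - \tfrac{\bar c\rho}{2})^{n}}{\bar c\rho/2} \;\ge\; \frac{2\,(1 - e^{-1})}{\bar c\rho} \;>\; \frac{1}{\bar c\rho},
\]
using $n \ge 2/(\bar c\rho)$ with $1 - x \le e^{-x}$ and $2(1-e^{-1})>1$. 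This contradicts the previous display, so $W_n^* \ge \tfrac{\bar c\rho}{2}$, and monotonicity upgrades this to all $\l \ge 2/(\bar c\rho)$.

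The step I expect to be the main obstacle is making the tightness of the budget fully rigorous. The perturbation at state $m$ is clean generically, but one must also dispatch the degenerate cases in which state $m$ is unreachable under $\vec W^*$ (there one may freely raise each never-visited $W_\l^*$ to $\bar c$, so the conclusion is trivial for those $\l$), in which $W_m^* = \bar c$ already, and in which the optimal dual multiplier of the budget constraint is zero; for the last case I would instead observe that $W_\l \equiv \bar c$ is itself optimal for the unconstrained Lagrangian, which follows by telescoping \cref{lem:bench:bias} from the normalization $h_1^*(0) = 0$ to obtain $h_{\l+1}^*(0) \le r(\l)$, so that bidding $1$ maximizes the Bellman right-hand side in every state. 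Once tightness is secured the rest is the two short displays above, and the numerical slack ($2(1-e^{-1}) \approx 1.26 > 1$) is exactly what the constants $2/(\bar c\rho)$ and $\bar c\rho/2$ in the statement are calibrated to provide.
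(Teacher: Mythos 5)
Your proof is correct and follows essentially the same route as the paper's: monotonicity from \cref{cor:bench:increasing}, the bound $P(W)\le W/\bar c$, tightness of the budget giving $\pi_1(\vec W^*)\ge\bar c\rho$, and the geometric-sum estimate whose margin $2(1-1/e)>1$ delivers the contradiction. The only differences are cosmetic — you package the contradiction via $\Ex{L}=1/\pi_1^*$ rather than by re-bounding $C(\vec W^*)$ directly, and you treat the budget-tightness step (which the paper dispatches in one sentence) with more care.
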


We prove the lemma by first noticing that the average payment of the optimal solution must be exactly $\rho$ (unless the bidder can always afford to pay the price $p_t$ in expectation).
We then prove that unless the inequality in \cref{lem:bench:win_LB} is true, $W_\l^*$ must be small for all $\l \le \frac{\bar c \rho}{2}$ (by monotonicity in \cref{cor:bench:increasing}).
We end up proving that this implies that the bidder is paying less than $\rho$ per average, contradicting our earlier claim.
We defer the details of the proof to \cref{ssec:app:increasing}.

We proceed to sketch the proof of \cref{thm:bench:small_m} with the detailed proof in \cref{ssec:app:bench:small_m}.
First, let $R_m(\vec W)$ and $C(\vec W)$ be the expected average reward (when there are $m$ states) and payment of a vector $\vec W$.
The proof consists of two steps.
First, we define vector $\vec W'$ identical to an optimal vector $\vec W^*$ for states $\l < m$.
For state $\l \ge m$ we set $\vec W' = \bar c$.
We then examine the average-time reward of $\vec W'$: in the original MDP with $T$ states, it holds that $R_T(\vec W') \ge R_T(\vec W^*) = \optInf_T$, but this is not necessarily the case when there are $m$ states; \cref{cl:bench:reward_Wprime} proves that $R_T(\vec W^*) - R_T(\vec W')$ is very small.
Afterward, we examine the spending of $\vec W'$, which can be more than the spending of $\vec W^*$; \cref{cl:bench:cost_Wprime} proves that $C(\vec W^*) - \rho$ is very small.
The final step of the proof involves finding a feasible solution with reward close to $\vec W'$.
By defining a solution $\vec W''$ with very small spending, a combination of $\vec W'$ and $\vec W''$ yields the theorem.

\section{Online Learning Algorithm}
\label{sec:online}

This section presents our algorithm that achieves low regret against $\optInf_m$.
We assume the algorithm receives full-information feedback, i.e., after bidding in round $t$ it observes the price $p_t$.
We prove regret bounds for $\optInf_m$ for any $m \le T$.
Specifically, \cref{thm:online:regret} proves a $\orderT*{\textrm{poly}(m)\sqrt T}$ regret bound.
Then, setting $m = \Theta(\log T)$ as in \cref{thm:bench:small_m} will provide $\orderT*{\sqrt T}$ regret against our original benchmark, the best algorithm that knows the distributions.
The polynomial dependence on $m$ in the above regret bound shows why we proved in \cref{ssec:bench:small_m} that $\optInf_m \approx \optInf_T$ for $m = \Theta(\log T)$.

The algorithm works by partitioning the $T$ rounds into epochs of stochastic length.
Based on past data, the bidder calculates the optimal bidding policy of past data at the beginning of each epoch (in \cref{sec:calc} we show that $t$ samples result in $\order*{\frac{m^3}{\rho \sqrt t}}$ accuracy by solving a linear program).
During each epoch, we bid using this empirically optimal bidding without updating it.
Each epoch can end in one of two ways: when the bidder wins a conversion or if $k$ rounds have passed without a conversion.
$k$ is a parameter that ensures that the length of each epoch remains bounded with probability $1$.
To ensure that an epoch will end with winning a conversion with a high probability, we choose a high enough $k$ and also constrain the empirical bidding to bid $1$ in state $\l = m$ (\cref{thm:bench:small_m} shows that this results in negligible error if $m$ is large enough).

To keep epochs similar, we consider one more modification.
Specifically, at the start of each epoch, we bid as in state $\l = 1$ of the MDP.
This ensures that each epoch starts `anew,' even if there was no win with conversion in the previous one.
This leads to a mismatch between the time since the last conversion.
However, the real allocated reward can only be bigger than the reward of our `fake' start.
This is because if we win when the fake state is $\l$, then the bidder receives at least $\rfunc_m(\l + k)$, which cannot be less than $\rfunc_m(\l)$ by monotonicity.
The full algorithm is in \cref{algo:online}.

\begin{algorithm}[t]
\DontPrintSemicolon
\caption{Follow the $k$-delayed Optimal Response Strategy (FKORS)}
\label{algo:online}
\KwIn{Average budget $\rho$, reward function $\rfunc:\N \to \R_{> 0}$, parameters $m, k \in \N$}
Bid $0$ for the first $k$ rounds and observe price and conversion rates $\{ (p_t, x_t) \}_{t \in [K]}$\;
Set the length and end of epoch $0$: $L_0 = k$ and $T_0 = k$\;

\For{epoch $i = 1, 2, \ldots$}
{
    Calculate the optimal vector of bidding policies from contexts to bids $\vecestbb{i}$ on the empirical dataset $\{ (p_t, x_t) \}_{t \in [T_{i-1}]}$ using linear program \eqref{eq:calc:LP}, constrained that $\estbb{i}_m(\cdot) = 1$\;

    Restart the MDP: set $\tilde \l_{T_{i-1} + 1} = 1$\;

    \For{rounds $t = T_{i-1} + 1, T_{i-1} + 2, \ldots, T_{i-1} + k$}{
        Observe context $x_t$\;
    
        If the remaining budget is at least $1$, bid using $\bb^i_{\tilde\l_t}(x_t)$, otherwise bid $0$\;
        
        Observe price $p_t$\;
        
        \uIf{conversion happens}{
            Receive reward $\rfunc(\tilde \l_t)$\;
            Set the end of this epoch $T_i = t$ and its duration $L_i = T_i - T_{i-1}$\;
            Go to the next epoch\;
        }\uElse{
            Set $\tilde \l_{t+1} = \min\{m , \tilde\l_t + 1 \}$\;
        }
    }

    \uIf{no conversion happened in epoch $i$}{
        Set the end of this epoch $T_i = T_{i-1} + k$ and its duration $L_i = k$\;
    }
}
\end{algorithm}

The rest of the section focuses on proving the regret bound of \cref{algo:online}.
We make the simplifying assumption that we know $\bar c$.
This is needed to set $m$ and $k$ at least a function of $\frac{1}{\bar c}$.
This can be replaced by any lower bound for $\bar c$; as long as this lower bound is within a constant of $\bar c$, our regret bound becomes multiplicative bigger only by that constant.
Such a loose lower bound for $\bar c$ can be calculated by sampling the conversion rate a few times before running our algorithm.
We do not include this for simplicity.

In the following theorem, we prove a parametric high probability regret bound for \cref{algo:online}, assuming $k \ge m + \frac{1}{\bar c} \log T$.
One part of the regret bound is a $\orderT*{ k \sqrt{T} }$ term.
The other part depends on the error due to sampling, i.e., calculating the bidding of epoch $i$ with an empirical distribution.
To present this error, we first overload our previous notation and define $R(\vec\bb)$ and $C(\vec\bb)$ as the expected average reward and payment of using a vector of bidding policies $\vec\bb$ in the infinite horizon setting.
One part of this error is the sub-optimality of this bidding: $\e_R(t) = \qty\big(\optInf_m - R(\vec\bb^t))^+$ where $\vec\bb^t$ is the optimal bidding using the empirical distribution of the first $t$ rounds.
The other part of this error is due to the potential over-payment of this error: $\e_C(t) = (C(\vec\bb^t) - \rho)^+$.

\begin{restatable}{theorem}{ThmOnlineRegret} \label{thm:online:regret}
    Fix any $m \in \N$ and let $k \ge m + \frac{1}{\bar c} \log T$.
    Let $N$ be the number of epochs.
    Then for all $\d > 0$ with probability at least $1 - \d$ \cref{algo:online} achieves regret against $\optInf_m$ that is at most
    \begin{equation*}
        \sum_{j=1}^N L_j \qty\big( \e_R(T_{j-1}) + \e_C(T_{j-1}) )
        +
        \order{ k \sqrt{T \log\frac{T}{\d}} }
    \end{equation*}
    where $\e_R(T_{j-1})$ and $\e_C(T_{j-1})$ are error terms of the bidding of epoch $j$: $\e_R(T_{j-1}) = \qty\big( \optInf_m - R(\vecestbb{j}) )^+$ is the reward sub-optimality gap and $\e_C(T_{j-1}) = \qty\big( C(\vecestbb{j}) - \rho )^+$ is the expected average payment above $\rho$. 
\end{restatable}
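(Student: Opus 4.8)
The plan is to decompose the regret epoch-by-epoch and handle two distinct sources of error: the statistical error from estimating the optimal bidding policy on past data, and the concentration error from the fact that the realized per-epoch rewards/payments fluctuate around their MDP expectations. First I would observe that, by the design of FKORS, each epoch $j$ is a ``fresh'' run of the infinite-horizon MDP started in state $\l=1$, using a fixed bidding vector $\vecestbb{j}$ computed from the first $T_{j-1}$ samples, and terminated either at the first conversion or after $k$ rounds. Since the contexts and prices in epoch $j$ are drawn i.i.d.\ and are independent of $\vecestbb{j}$ (which depends only on rounds $\le T_{j-1}$), the conditional expectation of the reward collected in epoch $j$, given the history up to $T_{j-1}$ and given the epoch length $L_j$, is governed by the MDP with winning probabilities $\vecestbb{j}$. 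The key quantity to control is $\E[\text{reward in epoch } j \mid \mathcal F_{j-1}]$ versus $L_j \cdot R(\vecestbb{j})$, and similarly for payments versus $L_j \cdot C(\vecestbb{j})$.

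The second step is the comparison against the benchmark. The benchmark $\optInf_m$ earns, in expectation, $R(\vec W^*) = \optInf_m$ per round with payment $C(\vec W^*)\le\rho$ per round. Over the whole horizon this is $\approx T\cdot\optInf_m$ (up to lower-order boundary terms, since the last epoch may be truncated and the first $k$ rounds are spent bidding $0$, costing $O(k\,r(m)) = O(k\cdot k) $ — I would fold this into the $O(k\sqrt{T\log(T/\d)})$ term, possibly needing $O(k^2)$ but since $k = \tilde O(1)$ per the eventual choice $m,k=\Theta(\log T)$ this is absorbed). Writing the regret as $\sum_j \big(L_j\optInf_m - (\text{reward in epoch }j)\big)$, I add and subtract $L_j R(\vecestbb{j})$: the term $\sum_j L_j(\optInf_m - R(\vecestbb{j})) \le \sum_j L_j\,\e_R(T_{j-1})$ is exactly the first error term; the residual $\sum_j \big(L_j R(\vecestbb{j}) - \text{reward}_j\big)$ must be controlled by concentration. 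For the budget, the constraint is only satisfied in expectation by the benchmark, and FKORS actually has a hard budget $B$; I would argue that the total payment across epochs is, with high probability, at most $\sum_j L_j C(\vecestbb{j}) + O(\sqrt{T\log(T/\d)}) \le T\rho + \sum_j L_j\e_C(T_{j-1}) + O(\sqrt{T\log(T/\d)})$, so the budget is violated by at most the $\sum_j L_j \e_C(T_{j-1})$ plus a lower-order term. Whenever the budget would be exceeded the algorithm bids $0$, forfeiting at most $O(r(m))$ reward per such round; since this happens on at most $\sum_j L_j\e_C(T_{j-1}) + O(\sqrt{T\log(T/\d)})$ rounds (charged to the over-payment), this contributes only to the first error term and the $O(k\sqrt{T\log(T/\d)})$ term.

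The third step — which I expect to be the main obstacle — is the martingale concentration for the residuals $\sum_j \big(L_j R(\vecestbb{j}) - \text{reward}_j\big)$ and the analogous payment sum. These are not sums of independent bounded increments: the per-epoch reward is bounded by $r(m) = O(k)$ (and $L_j \le k$), so each summand is $O(k^2)$-bounded, but the number of epochs $N$ is itself random and $\sum_j L_j = T$. I would set up a martingale difference sequence indexed by epochs with respect to the filtration $(\mathcal F_j)$, where the $j$-th difference is $L_j R(\vecestbb{j}) - \text{reward}_j$ conditioned appropriately; its conditional expectation is zero by the i.i.d.\ structure within an epoch and the fact that $R(\vecestbb{j})$ is exactly the stationary (equivalently, per-step expected) reward of the renewal process that one epoch realizes — here I would invoke the renewal-reward / stationary-distribution identity connecting $R(\vec W)$ defined as the infinite-horizon average with the expected reward of a single epoch divided by the expected epoch length, using the reformulation of \eqref{eq:opt_inf_W} via the occupancy measure mentioned in \cref{ssec:app:bench:stationary}. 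The subtlety is that $R(\vecestbb{j})$ times $L_j$ is not quite the conditional expectation of $\text{reward}_j$ given $L_j$, because $R(\vec W) = (\text{expected epoch reward})/(\text{expected epoch length})$, not (expected epoch reward)/(realized length). I would resolve this by conditioning instead on $\mathcal F_{j-1}$ only (not on $L_j$), so that $\E[\text{reward}_j\mid\mathcal F_{j-1}] = \E[L_j\mid\mathcal F_{j-1}]\cdot R(\vecestbb{j})$ by the renewal identity, and then separately control $\sum_j (L_j - \E[L_j\mid\mathcal F_{j-1}])R(\vecestbb{j})$, a second martingale with $O(k^2)$-bounded increments and at most $N \le T$ terms, via Azuma/Freedman. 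Summing the martingale tail bounds — each giving $O(k^2\sqrt{T\log(T/\d)})$ after noting $\sum_j L_j^2 \le k\sum_j L_j = kT$ for the variance proxy, which sharpens $k^2\sqrt T$ to $k\sqrt{T}\cdot\sqrt{k}$ or better — and combining with the two error terms yields the claimed bound; absorbing $\poly(k)$ constants into the $\order{\cdot}$ and the eventual $k=\tilde O(1)$ choice gives the stated $\order{k\sqrt{T\log(T/\d)}}$ form.
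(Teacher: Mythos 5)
Your proposal follows essentially the same route as the paper's proof: an epoch-wise decomposition, the renewal identity $\SR(\vec\bb)=R(\vec\bb)\,L(\vec\bb)$ to relate per-epoch rewards and payments to the infinite-horizon averages, two Azuma martingales (realized versus conditionally expected epoch totals, and realized versus conditionally expected epoch lengths), and a high-probability upper bound on cumulative payment showing the budget survives until round $T-\order*{k\sqrt{T\log(T/\d)}}$. The one detail you elide is that $\ExC{\rew_j}{\mathcal F_{j-1}}$ equals $R(\vecestbb{j})\ExC{L_j}{\mathcal F_{j-1}}$ only up to the error from truncating epochs at $k$ rounds; this must be bounded (as in \cref{lem:online:singe_rew,lem:online:singe_pay}) by $\order*{m T^{-1}}$ using the constraint $\estbb{j}_m=1$ and $k\ge m+\frac{1}{\bar c}\log T$, checking that the resulting inequalities point in the needed direction for the reward lower bound and the payment upper bound separately.
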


Using \cref{thm:calc} that bounds $\e_R(t)$, $\e_C(t)$ and assuming $m$ is large enough we get the unconditional regret bound against $\optInf_T$.

\begin{restatable}{corollary}{CorOnlineFinal} \label{cor:online:final_res}
    Let $m = \ceil*{\frac{2}{\bar c \rho} \log T}$ and $k = \ceil*{ m + \frac{1}{\bar c} \log T }$.
    Then for all $\d > 0$ with probability at least $1 - \d$, \cref{algo:online} achieves total reward at least
    \begin{equation*}
        T \cdot \optInf_T
        -
        \order{
            \frac{1}{\bar c^3 \rho^4} \log^3 T
            \sqrt{T \log\frac{T}{\d}}
        }
    \end{equation*}
\end{restatable}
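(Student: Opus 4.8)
The plan is to combine three ingredients that are all available by this point: the state-space reduction of \cref{thm:bench:small_m}, the parametric regret bound of \cref{thm:online:regret}, and the sampling-error bound of \cref{thm:calc} (which controls $\e_R(t)$ and $\e_C(t)$ by something of the form $\order*{\frac{m^3}{\rho\sqrt t}}$). With the stated choices $m = \ceil*{\frac{2}{\bar c\rho}\log T}$ and $k = \ceil*{m + \frac{1}{\bar c}\log T}$, the hypothesis $k \ge m + \frac{1}{\bar c}\log T$ of \cref{thm:online:regret} holds, and $m$ is large enough for \cref{thm:bench:small_m} to apply with $C = 1$ (say), so that $\optInf_m \ge \optInf_T - \order*{\frac{1}{\bar c\rho}T^{-1}}$; multiplying by $T$ this additive loss is $\order*{\frac{1}{\bar c\rho}}$, which is absorbed into the final error term. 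So it suffices to bound the regret of \cref{algo:online} against $\optInf_m$ and then add this negligible $\optInf_m$-vs-$\optInf_T$ gap.

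By \cref{thm:online:regret}, with probability at least $1-\d$ the regret against $\optInf_m$ is at most
\[
    \sum_{j=1}^N L_j\bigl(\e_R(T_{j-1}) + \e_C(T_{j-1})\bigr) + \order*{k\sqrt{T\log\tfrac{T}{\d}}}.
\]
For the second term, substitute $k = \Theta\!\bigl(\frac{1}{\bar c\rho}\log T\bigr)$ to get $\order*{\frac{1}{\bar c\rho}\log T\sqrt{T\log\frac{T}{\d}}}$, which is dominated by the claimed bound. The main work is the first (data-dependent) sum. Using \cref{thm:calc}, $\e_R(T_{j-1}) + \e_C(T_{j-1}) = \order*{\frac{m^3}{\rho\sqrt{T_{j-1}}}}$, and with $m = \Theta\!\bigl(\frac{1}{\bar c\rho}\log T\bigr)$ this is $\order*{\frac{1}{\bar c^3\rho^4}\log^3 T \cdot \frac{1}{\sqrt{T_{j-1}}}}$. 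It remains to bound $\sum_{j=1}^N \frac{L_j}{\sqrt{T_{j-1}}}$. Since $T_{j-1}$ is the total number of rounds elapsed before epoch $j$ and $L_j$ is the length of epoch $j$, the sum $\sum_j \frac{L_j}{\sqrt{T_{j-1}}}$ telescopes/integrates like $\int_1^T \frac{dt}{\sqrt t} = \order*{\sqrt T}$; more carefully, because each epoch has length $L_j \le k$, consecutive partial sums $T_{j-1}$ increase by at most $k$, and a standard summation-by-blocks argument gives $\sum_j \frac{L_j}{\sqrt{T_{j-1}}} = \order*{\sqrt T}$ up to an additive $\order*{k}$ from the first epoch (where $T_0 = k$). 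Plugging this in yields a contribution of $\order*{\frac{1}{\bar c^3\rho^4}\log^3 T\sqrt T}$, and folding in a $\sqrt{\log\frac{T}{\d}}$ factor (which also governs the concentration used to pass from expected to realized rewards, and to say all epochs end with a conversion except with probability $\order*{1/\poly(T)}$) gives exactly the stated bound.

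I expect the main obstacle to be the careful handling of the sum $\sum_j \frac{L_j}{\sqrt{T_{j-1}}}$ together with the high-probability control of the number of epochs $N$ and of the event that some epoch terminates \emph{without} a conversion. Epochs have stochastic length, so $T_{j-1}$ is a random quantity; one needs that the per-epoch conversion probability is bounded below (which follows from $W_m = \bar c$ being enforced and $k \ge m + \frac{1}{\bar c}\log T$, making a no-conversion epoch have probability $(1-\bar c)^{\Theta(\frac{1}{\bar c}\log T)} = \order*{T^{-\Omega(1)}}$ after reaching state $m$), and hence that $N = \order*{T}$ and that $T_{j-1} \ge \Omega(j)$ with high probability; this is what licenses replacing $\sum_j L_j/\sqrt{T_{j-1}}$ by an $\order*{\sqrt T}$ bound. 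Everything else is routine substitution of the chosen $m,k$ into the bounds already proved, plus the union bound over the $\order*{T}$ epochs.
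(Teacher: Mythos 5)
Your proposal is correct and follows essentially the same route as the paper: substitute the chosen $m,k$ into \cref{thm:online:regret}, control $\e_R+\e_C$ via \cref{thm:calc} with a union bound over rounds, and reduce the data-dependent sum to $\sum_j L_j/\sqrt{T_{j-1}} = \order*{\sqrt T}$ by the same block-summation/telescoping argument. The only (harmless) difference is that you anticipate needing high-probability control of $N$ and of $T_{j-1}=\Omega(j)$ to justify that sum bound, whereas the paper's comparison $L_j/\sqrt{T_{j-1}} \le \sum_{t=T_{j-1}-L_j+1}^{T_{j-1}} t^{-1/2}$ makes it a purely deterministic, pathwise inequality.
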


To prove \cref{thm:online:regret}, we face many challenges.
Many standard tools that analyze sequential decisions do not work in our setting since, depending on past rounds, the bidder is in a different state of the MDP.
This problem becomes even greater when the bidding policy is updated, and the underlying Markov Chain changes.
This is why we do not change the bidding policy during an epoch, fixing the Markov Chain over its duration.
While this is not entirely new in the Online Learning MDP literature, using epochs of unknown stochastic length is.

\begin{remark}
    We briefly go over why we need epochs of stochastic length, even under very simplistic error analysis.
    Assume we had epochs of fixed length $k$, i.e., commit to a bidding strategy for $k$ consecutive rounds.
    In that case, in every epoch, even if the bidding is close to optimal, we get $k \cdot \optInf_m - \Omega(1)$ value in expectation; the ``$- \Omega(1)$'' error is because of the change of policies between epochs.
    This makes the total reward over all epochs $\qty\big(k \cdot \optInf_m - \Omega(1)) \frac{T}{k} = T \cdot \optInf - \Omega(\frac{T}{k})$.
    However, this creates the following issue: since we need high probability bounds to deal with the event of running out of budget early, we have to use concentration inequalities.
    These concentration inequalities cannot be used on individual rounds as they are dependent, so we use them across epochs.
    However, this will lead to $\Omega(k\sqrt{T/k})$ error across all of the $\frac{T}{k}$ epochs, making the total error $\Omega(\frac{T}{k} + \sqrt{T k})$.
    Picking $k = \Theta(T^{1/3})$ yields the optimal error of $\Omega(T^{2/3})$, which is much higher than what we offer in \cref{cor:online:final_res}.
\end{remark}

\cref{thm:online:regret} follows by a series of lemmas.
First, we prove \cref{lem:online:total_rew}, a lower bound on the realized reward of the algorithm in every round, assuming it does not run out of budget.
Specifically, we prove that by any such round $\tau$, the realized reward is $\tau\optInf_\tau - \sum_j L_j \e_R(T_{j-1}) - \order*{k \sqrt T}$ with high probability.
This lemma proves \cref{thm:online:regret}, assuming that the budget is not depleted by some round $\tau$ close enough to $T$.
This is implied by \cref{lem:online:total_pay}, where we prove an upper bound on the realized spending of the algorithm by any round.
Specifically, \cref{lem:online:total_pay} proves that the total spending by any round $\tau$ is at most $\tau \rho + \sum_j L_j \e_C(T_{j-1}) + \order*{k\sqrt\tau}$ with high probability.
Combining these two lemmas, we get that by round $\tau = T - \sum_j L_j\e_C(T_{j-1}) - \order*{k\sqrt T}$ the budget is not depleted, and therefore the total reward is at least $\tau\optInf_\tau - \sum_i L_j \e_R(T_{j-1}) - \order*{k \sqrt T}$, with high probability.
The detailed proof of \cref{thm:online:regret} is in \cref{ssec:app:online:final}, after proving the aforementioned lemmas.

We now present the lower bound on the realized reward.
Specifically, we show that the total reward of the first $\tau$ rounds is at least $\tau \optInf_m$ with some error for any $\tau$.
The first part of the error comes from the sub-optimality of the bidding in epoch $i$, $\e_R(T_{i-1})$.
The second part of the error comes from the error in concentration.

\begin{restatable}{lemma}{LemOnlineTotalRew}
\label{lem:online:total_rew}
    Let $w_t \in \{0, 1\}$ indicate whether the bidder got a conversion in round $t$.
    Assume $k \ge m + \frac{1}{\bar c} \log T$.
    Fix round $\tau \le T - k$ and let $I_\tau$ be the epoch of round $\tau$.
    Then the total realized reward up to round $\tau$ is at least
    \begin{equation*}
        \tau \optInf_m
        -
        \sum_{j=1}^{I_\tau} L_j \e_R(T_{j-1})
        -
        \order{ k \sqrt{T \log\frac{T}{\d}} }
    \end{equation*}
    with probability at least $1 - \d$ for any $\d > 0$.
\end{restatable}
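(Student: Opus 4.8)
\textbf{Overall approach.} The plan is to track the realized reward epoch by epoch and compare it to what the fixed empirical policy $\vecestbb{j}$ \emph{should} earn per round in the infinite-horizon MDP, namely $R(\vecestbb{j}) \ge \optInf_m - \e_R(T_{j-1})$. Within epoch $j$, the policy is held fixed, so the state evolves according to a fixed Markov chain (started at state $1$), and the epoch earns exactly one conversion-reward $\rfunc(\tilde\l)$ (where $\tilde\l$ is the fake state at the time of conversion) unless the epoch times out after $k$ rounds with no conversion. The first step is to argue that the total realized reward across the first $I_\tau$ epochs equals $\sum_{j=1}^{I_\tau} \rfunc(\tilde\l^{(j)}) \One{\text{epoch }j\text{ has a conversion}}$, and that this is at least $\sum_{j=1}^{I_\tau} \rfunc_m(\tilde\l^{(j)}) \One{\cdots}$ by the monotonicity noted right before the algorithm (winning at fake state $\tilde\l$ after a $k$-delay actually sits at true state $\ge \tilde\l+k$, so the true allocated reward dominates $\rfunc_m(\tilde\l)$). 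This lets me lower-bound the realized reward by a quantity that depends only on the (fixed-policy) dynamics inside each epoch.

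\textbf{Key steps, in order.}
\begin{enumerate}
\item \emph{Per-epoch expected reward under the fixed policy.} Fix epoch $j$ with policy $\vecestbb{j}$. Condition on $T_{j-1}$ (hence on the policy, which is measurable w.r.t.\ the first $T_{j-1}$ rounds). Show that the \emph{conditional} expected contribution of epoch $j$ to the reward, normalized by its expected length, is at least $R(\vecestbb{j})$ minus a $O(1/\poly(T))$ term coming from (i) truncating the chain at $k$ steps and (ii) the constraint $\estbb{j}_m = 1$; both are negligible because $k \ge m + \frac1{\bar c}\log T$ forces a conversion within the epoch except with probability $T^{-\Omega(1)}$ (here I use \cref{cor:bench:increasing}/\cref{lem:bench:win_LB}: after $m\ge\frac{2}{\bar c\rho}\log T$ losses the per-round conversion probability is $\ge \bar c\rho/2$, and the extra $\frac1{\bar c}\log T$ rounds at $W_m=\bar c$ finish the job). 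So $\ExC{\text{reward of epoch }j}{\mathcal F_{T_{j-1}}} \ge \ExC{L_j}{\mathcal F_{T_{j-1}}}\big(\optInf_m - \e_R(T_{j-1})\big) - O(1/\poly(T))$.
\item \emph{Martingale concentration across epochs.} Define $X_j = \text{(realized reward of epoch }j) - \ExC{L_j}{\mathcal F_{T_{j-1}}}(\optInf_m - \e_R(T_{j-1})) + O(1/\poly(T))$. By Step 1 this is a supermartingale difference sequence adapted to $\{\mathcal F_{T_{j-1}}\}$, and each term is bounded in absolute value by $O(k)$ (an epoch has length $\le k$, reward per round $\le \rfunc(m)\le m \le k$). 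Apply Azuma--Hoeffding: with probability $\ge 1-\d$, $\sum_{j\le I_\tau} X_j \ge -O(k\sqrt{N\log(1/\d)})$, and since $N \le T$ (actually $N \le T$ trivially) this is $-O(k\sqrt{T\log(T/\d)})$, absorbing a union bound over the value of $I_\tau$.
\item \emph{From $\sum \ExC{L_j}{\cdot}$ back to $\tau$.} Replace $\sum_{j\le I_\tau}\ExC{L_j}{\mathcal F_{T_{j-1}}}$ by $\sum_{j\le I_\tau} L_j = T_{I_\tau} \ge \tau$, again via Azuma on the bounded differences $L_j - \ExC{L_j}{\mathcal F_{T_{j-1}}}$ (each $O(k)$), losing another $O(k\sqrt{T\log(T/\d)})$. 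Also bound $\sum_{j\le I_\tau} L_j \e_R(T_{j-1}) \le \sum_{j=1}^{I_\tau} L_j\e_R(T_{j-1})$ directly (it already appears in the statement). Finally note the one bad epoch-timeout event across all $\le T/(\text{something})$ epochs contributes only $O(k \cdot T \cdot T^{-\Omega(1)}) = o(1)$ by a union bound, and replacing $\optInf_m$ by $\optInf_\tau$ is free since $\optInf_\tau \le \optInf_m$ is not what we want — rather the statement uses $\optInf_m$, so no swap is needed. Collecting the three $O(k\sqrt{T\log(T/\d)})$ error terms gives the claim.
\end{enumerate}

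\textbf{Main obstacle.} The delicate point is Step 1: making rigorous that a \emph{single} fixed-policy epoch, started at the fake state $1$ and truncated at length $k$, earns in conditional expectation at least (its expected length) $\times\, R(\vecestbb{j})$ up to $\poly(T)^{-1}$ error. This is essentially a renewal-reward / regeneration argument — the infinite-horizon average reward $R(\vecestbb{j})$ equals (expected reward per regeneration cycle)/(expected cycle length), and each epoch \emph{is} one regeneration cycle of the fixed chain (it resets to state $1$ at the end), except for the low-probability truncation at $k$ and the forced $W_m=\bar c$ modification. I will need to quantify that the truncation changes the cycle's expected-reward-to-expected-length ratio by only $O(1/\poly(T))$, which is where the choice $k \ge m + \frac1{\bar c}\log T$ and the monotonicity/lower-bound lemmas (\cref{cor:bench:increasing}, \cref{lem:bench:win_LB}, \cref{thm:bench:small_m}) do the work. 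Everything after that is standard bounded-difference concentration with a union bound over epochs.
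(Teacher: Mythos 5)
Your proposal is correct and follows essentially the same route as the paper's proof: a per-epoch conditional-expectation bound via the regeneration identity $\SR(\vecestbb{j}) = R(\vecestbb{j})L(\vecestbb{j}) \ge (\optInf_m - \e_R(T_{j-1}))\,\ExC{L_j}{\calH_{T_{j-1}}}$ (with the $k$-truncation error controlled by $\estbb{j}_m = 1$ and $k \ge m + \tfrac{1}{\bar c}\log T$, exactly the paper's Lemma~\ref{lem:online:singe_rew}), followed by Azuma concentration over epochs and a second Azuma step converting $\sum_j \ExC{L_j}{\cdot}$ back to $\tau$. The only cosmetic differences are that the paper splits your single supermartingale into two separate martingales ($Z_i$ for rewards, $Y_i$ for lengths), and that your appeal to Lemma~\ref{lem:bench:win_LB} in Step~1 is unnecessary — the constraint $\estbb{j}_m = 1$ alone gives the $T^{-C}$ timeout probability.
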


The proof of the lemma is quite convoluted to get the high probability guarantee.
The proof involves bounding quantities for which we do not have useful information.
For example, the expected reward of an epoch might not correlate with the expected reward between wins with conversions of an optimal solution.
First, we prove that the realized reward across epochs is close to the reward of their expectation.
Then, we use the fact that the total expected reward of epoch $i$ equals the expected average-time reward of $\vecestbb{i}$ (which is close to $\optInf_m$) times the expected return time to state $1$ of $\vecestbb{i}$.
This step results in $\optInf_m$ appearing.
However, now we have to handle the expected return time to state $1$ of $\vecestbb{i}$.
We show that the sum of these expected lengths is close to their realized values, which sum up to $\tau$.
Throughout the proof, we take advantage of the fact that an epoch ending early happens with high probability.
This makes quantities like an epoch's expected length very close to the return time to state $1$ in the infinite horizon setting.
We defer the full proof to \cref{ssec:app:online:totalRew}.

We now present the upper bound on the realized payment.
We show that by any round $\tau$, the total payment of the algorithm is at most $\tau \rho$ plus some error.
Similar to \cref{lem:online:total_rew}, this error comes from the overpayment due to suboptimal bidding in each epoch $i$, $\e_C(T_{i-1})$, and some error due to concentration.

\begin{restatable}{lemma}{LemOnlineTotalPay}
\label{lem:online:total_pay}
    Assume that $k \ge m + \frac{1}{\bar c} \log T$.
    Fix a round $\tau$ and let $I_\tau$ be its epoch.
    For any $\d > 0 $, with probability at least $1 - \d$ the total payment of the algorithm until round $\tau$ is at most
    \begin{align*}
        \tau \rho
        +
        \sum_{j = 1}^{I_\tau} L_j \e_C(T_{j-1})
        +
        \order{k \sqrt{T \log \frac{T}{\d}}}
    \end{align*}
\end{restatable}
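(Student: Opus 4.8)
The plan is to mirror, on the cost side, the same bookkeeping used for the reward lower bound (\cref{lem:online:total_rew}), exploiting that the algorithm commits to a fixed bidding policy $\vecestbb{j}$ throughout epoch $j$, so that within each epoch the Markov chain is stationary and the per-epoch spending is governed by the infinite-horizon payment rate $C(\vecestbb{j})$. First I would write the realized total payment up to round $\tau$ as $\sum_{j=1}^{I_\tau} S_j$ where $S_j = \sum_{t \in \text{epoch } j} p_t \One{\text{win in round } t}$ is the realized spending in epoch $j$ (truncating the last epoch at $\tau$ if needed; since each $p_t \le 1$ this truncation only costs an extra $O(k)$, absorbed in the error term). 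The goal is to compare $\sum_j S_j$ against $\tau\rho$.

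The key decomposition has three parts. \textbf{Step 1 (per-epoch expectation $\approx$ rate $\times$ length).} Conditioned on the history up to $T_{j-1}$, epoch $j$ runs the fixed policy $\vecestbb{j}$ starting in state $1$ for at most $k$ rounds; I would show $\ExC{S_j}{\mathcal{F}_{T_{j-1}}}$ is within $O(1)$ (in fact $O(r(m))$-type additive error, or exponentially small if one uses the ``epoch ends early w.h.p.'' fact exactly as in \cref{lem:online:total_rew}) of $C(\vecestbb{j}) \cdot \ExC{L_j}{\mathcal{F}_{T_{j-1}}}$. This uses that $C(\vecestbb{j})$ is precisely the average-per-round payment in the infinite-horizon chain of $\vecestbb{j}$, times the expected return time to state $1$, and that the $k$-truncation of the epoch introduces only a $T^{-\Omega(1)}$ discrepancy because $\vecestbb{j}$ bids $1$ in state $m$ and $k \ge m + \frac1{\bar c}\log T$, making a no-conversion epoch occur with probability $T^{-\Omega(1)}$. \textbf{Step 2 (bound the rate).} By definition $C(\vecestbb{j}) \le \rho + \e_C(T_{j-1})$, so $\ExC{S_j}{\mathcal{F}_{T_{j-1}}} \le (\rho + \e_C(T_{j-1}))\ExC{L_j}{\mathcal{F}_{T_{j-1}}} + T^{-\Omega(1)}$. \textbf{Step 3 (concentration across epochs).} Summing, $\sum_j \ExC{S_j}{\mathcal{F}_{T_{j-1}}} \le \rho \sum_j \ExC{L_j}{\mathcal{F}_{T_{j-1}}} + \sum_j \e_C(T_{j-1})\ExC{L_j}{\mathcal{F}_{T_{j-1}}} + o(1)$. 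Now I would apply a Freedman/Azuma-type concentration to two martingale difference sequences: one for $\sum_j (S_j - \ExC{S_j}{\mathcal{F}_{T_{j-1}}})$ and one for $\sum_j (L_j - \ExC{L_j}{\mathcal{F}_{T_{j-1}}})$; since there are at most $T$ epochs and each $S_j, L_j \in [0, k]$, both deviations are $O(k\sqrt{T\log(T/\d)})$. Using $\sum_j L_j \le \tau$ (actual lengths sum to at most $\tau$), the $\rho\sum_j \ExC{L_j}{\cdot}$ term becomes $\rho(\tau + O(k\sqrt{T\log(T/\d)}))$, and replacing $\ExC{L_j}{\cdot}$ by $L_j$ inside $\sum_j \e_C(T_{j-1})\ExC{L_j}{\cdot}$ costs at most $k\sum_j|\e_C - \text{stuff}|$... more carefully, since $\e_C(T_{j-1}) \le 1$ and $L_j \le k$, the swap costs another $O(k\sqrt{T\log(T/\d)})$. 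Collecting everything gives the claimed bound $\tau\rho + \sum_{j=1}^{I_\tau} L_j \e_C(T_{j-1}) + O(k\sqrt{T\log(T/\d)})$.

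The main obstacle I anticipate is \textbf{Step 1}: cleanly relating the \emph{truncated} per-epoch spending (the epoch is forcibly cut at length $k$ even if no conversion has occurred) to the \emph{infinite-horizon} average payment rate $C(\vecestbb{j})$, which is defined for the untruncated stationary chain. The subtlety is that $C(\vecestbb{j})$ is an average rate — payment per unit time in steady state — and to convert it to ``expected payment in an epoch'' one must multiply by the expected return time to state $1$ and then argue the truncation error is negligible. The argument that a no-conversion epoch has probability $T^{-\Omega(1)}$ (because the win-with-conversion probability in state $m$ is $\bar c$ after forcing $\estbb{j}_m = 1$, and $k - m \ge \frac1{\bar c}\log T$ rounds are spent there) is what makes this work, and it is exactly the same device already used in \cref{lem:online:total_rew}, so I would invoke that machinery rather than redo it. A secondary nuisance is handling the \emph{budget-exhaustion override} (the ``if remaining budget $\ge 1$, else bid $0$'' line): but bidding $0$ can only \emph{decrease} payment, so for an \emph{upper} bound on spending this override is harmless and can simply be dropped in the analysis — one bounds the payment of the unmodified policy from above, which dominates.
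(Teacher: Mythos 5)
Your proposal is correct and follows essentially the same route as the paper: decompose the spending by epoch, bound the conditional per-epoch payment by $\SC(\vecestbb{j}) = C(\vecestbb{j})L(\vecestbb{j}) \le (\rho+\e_C(T_{j-1}))\,L(\vecestbb{j})$, control the gap between $L(\vecestbb{j})$ and the truncated epoch length via the $T^{-\Omega(1)}$ no-conversion probability, and apply two Azuma martingale bounds (one for the payments, one for the $\e_C$-weighted epoch lengths). The paper's only simplification relative to your Step~1 is that, for an upper bound, truncation at $k$ rounds can only decrease the epoch's payment, so $\ExC{\pay_j}{\calH_{T_{j-1}}} \le \SC(\vecestbb{j})$ holds with no error term and the $T^{-C}$ correction is needed only in the single inequality $L(\vecestbb{j}) \le \ExC{L_j}{\calH_{T_{j-1}}} + \frac{1}{\bar c}T^{-C}$ — exactly the one-sided estimate you identified as the main obstacle.
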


The proof follows similar steps to the proof in \cref{lem:online:total_rew}.
First, we prove that with high probability, the total payment of the epochs is close to their expected spending between wins with conversions in the infinite horizon setting.
However, this is not useful directly.
We show that for each epoch, this is equal to the expected average spending of the bidding of that epoch, times the return time to state $1$.
Similar to \cref{lem:online:total_rew}, the expected average spending is close to $\rho$, but the return time is not directly useful.
We show that these return times are about the realized length of the epochs, which equals $\tau$, the round we examine in \cref{lem:online:total_pay}.
The full proof can be found in \cref{ssec:app:online:totalPay}.

\section{Approximating \texorpdfstring{$\optInf$}{the Infinite Horizon Problem} from Samples} \label{sec:calc}

In this section, we examine the calculation of the optimal solution of $\optInf_m$, the benchmark we use in \cref{algo:online}.
Specifically, we prove the following theorem, showing that using $t$ price/conversion rate samples, we can use a linear program of size $\order{t m}$ to calculate the optimal solution of $\optInf_m$ with only $\orderT*{m^3 / \sqrt{t}}$ error.

\begin{restatable}{theorem}{ThmCalc} \label{thm:calc}
    Fix $m \in \N$, $m \ge \frac{2}{\bar c \rho} \log T$.
    Using price/conversion rate samples $(p_1, c_1), \ldots, (p_t, c_t)$, we can construct an empirical vector of mappings from contexts to bids $\vec\bb^t$ with a linear program of size $\order{t m}$ constrained to bid $1$ in state $m$.
    For this vector, for any $\d > 0$, with probability at least $1 - \d$, the expected average payment and reward are
    \begin{align*}
        C(\vec\bb^t)
        \le
        \rho
        +
        \order{m^3 \sqrt{\frac{1}{t} \log\frac{1}{\d}}}
        \quad\text{ and }\quad
        R_m(\vec\bb^t)
        \ge
        \optInf_M
        -
        \order{ \frac{m^3}{\rho} \sqrt{\frac{1}{t} \log\frac{1}{\d}} }
    \end{align*}
    %
\end{restatable}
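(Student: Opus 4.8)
The plan is to reduce learning the bidding functions $\vec\bb^t$ to learning $m$ scalar parameters, prove a single uniform-convergence bound for the resulting one-parameter curves, and then push that bound through the infinite-horizon objective by a sensitivity analysis of the stationary distribution of the chain in Figure~\ref{fig:mc}. For the reduction, recall from \cref{lem:calc:simple_single} that some policy attaining $\optInf_m$ uses, in state $\l$, a bidding rule of the form $x\mapsto\min\{c(x)/\mu_\l,\,1\}$ for a scalar $\mu_\l>0$. Write $W(\mu),P(\mu)$ for the true conversion probability and expected payment of this rule, and $\hat W_t(\mu),\hat P_t(\mu)$ for their empirical versions on $(p_1,c_1),\dots,(p_t,c_t)$. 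Since $\hat W_t,\hat P_t$ are monotone and piecewise linear in $\mu$ with breakpoints among $c_1,\dots,c_t$, the occupancy-measure linear program for problem~\eqref{eq:opt_inf_W} (see \cref{ssec:app:calc:lp}), with $(\hat W_t,\hat P_t)$ in place of $(W,P)$ and the extra constraint that state $m$ bids $1$, has $\order{tm}$ variables and constraints; let $\vec\bb^t$ be its optimizer, with induced per-state conversion probabilities $\vec W^t$. By \cref{lem:calc:sample_error}, with probability at least $1-\d$ we have $\sup_{\mu>0}\bigl|\hat W_t(\mu)-W(\mu)\bigr|\le\e$ and $\sup_{\mu>0}\bigl|\hat P_t(\mu)-P(\mu)\bigr|\le\e$ with $\e=\order{\sqrt{\tfrac1t\log\tfrac1\d}}$; condition on this event throughout. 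The crucial point is that $\{x:c(x)\ge\mu\,p(x)\}$, indexed by $\mu$, is a class of bounded VC dimension, so $\e$ has no dependence on $|\calX|$.

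Next I would translate curve closeness into closeness of the average reward and cost. For $\vec W\in[0,\bar c]^m$ the stationary distribution of the chain in Figure~\ref{fig:mc} is $\pi_\l(\vec W)\propto\prod_{j<\l}(1-W_j)$, whose normalizer is the return time to state~$1$, $\tau(\vec W)=\sum_\l\prod_{j<\l}(1-W_j)$, and once $W_m=\bar c$ we have $\tau(\vec W)\in[1,\,m+\tfrac1{\bar c}]$. A telescoping estimate gives $\sum_\l\bigl|\pi_\l(\vec W)-\pi_\l(\vec W')\bigr|=\order{m^2}\,\lVert\vec W-\vec W'\rVert_\infty$. Since $R_m(\vec W)=\sum_\l\pi_\l(\vec W)\,r_m(\l)\,W_\l$ with $r_m(\l)\le r(m)\le m$, and $C(\vec W)=\sum_\l\pi_\l(\vec W)\,P(\mu_\l)$ with $P(\mu_\l)\le1$, any fixed parameter vector $(\mu_1,\dots,\mu_m)$ has true versus empirical reward differing by $\order{m^3}\e$ and true versus empirical cost differing by $\order{m^2}\e$, where we also use that the true and empirical per-state conversion probabilities of these parameters differ by at most $\e$ in $\ell_\infty$.

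With these sensitivity bounds in hand, I would run a two-sided comparison between the empirical and true constrained optima. Applying the previous step to an $\e$-optimal (and, by \cref{cor:bench:increasing}, monotone) solution of $\optInf_m$ shows that it is empirically feasible after a standard shrink-to-feasibility that loses at most an $\order{m^2\e/\rho}$ fraction of the objective; hence the empirical optimum satisfies $\hat R_m(\vec\bb^t)\ge\optInf_m-\order{m^3\e}-\order{m^2\e/\rho}\cdot\optInf_m\ge\optInf_m-\order{\tfrac{m^3}{\rho}\e}$, using $\optInf_m\le r(m)\le m$. Conversely, $\vec\bb^t$ has empirical cost at most $\rho$, so by the previous step its true cost is at most $\rho+\order{m^2\e}\le\rho+\order{m^3\sqrt{\tfrac1t\log\tfrac1\d}}$, and its true reward is at least $\hat R_m(\vec\bb^t)-\order{m^3\e}\ge\optInf_m-\order{\tfrac{m^3}{\rho}\e}$. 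Finally, \cref{thm:bench:small_m} with $C=1$ and the hypothesis $m\ge\frac{2}{\bar c\rho}\log T$ gives $\optInf_m\ge\optInf_M-\order{\tfrac1{\bar c\rho}T^{-1}}$ for every $M\ge m$, and since $t\le T$ and $m\ge\tfrac1{\bar c}$ this extra term is absorbed into $\order{\tfrac{m^3}{\rho}\e}$; substituting $\e=\order{\sqrt{\tfrac1t\log\tfrac1\d}}$ and replacing $\optInf_m$ by $\optInf_M$ yields the stated bounds on $C(\vec\bb^t)$ and $R_m(\vec\bb^t)$.

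The main obstacle is the passage in the second and third paragraphs: converting an $\ell_\infty$ bound on the scalar curves $\hat W_t,\hat P_t$ into a bound on the \emph{average} reward and cost of the induced infinite-horizon policies requires carefully controlling how sensitively the stationary distribution (equivalently the return time to state~$1$) reacts to perturbations of the per-state winning probabilities — this is precisely where the $\poly(m)$ and $1/\rho$ factors enter — and then making the shrink-to-feasibility comparison between the empirical and true constrained optima rigorous despite the action set being a continuum and the optimum being, in general, only approached rather than attained.
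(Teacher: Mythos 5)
Your proposal follows essentially the same route as the paper's proof: reduce to the one-parameter bid families of \cref{lem:calc:simple_single}, apply the uniform convergence bound of \cref{lem:calc:sample_error}, convert curve error into average reward/cost error via a sensitivity analysis of the stationary distribution (this is exactly \cref{lem:calc:approximation}), and close with the shrink-to-feasibility mixture with the always-bid-$0$ policy together with \cref{thm:bench:small_m}. The only point you leave slightly implicit is that the comparison solution must itself be constrained to bid $1$ in state $m$ before it can be matched against the optimizer of the constrained empirical LP; the paper handles this with an extra application of \cref{thm:bench:small_m} to the empirical problem, costing only an additional $m/T$ that is absorbed into the error term.
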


To prove this theorem, we will provide the following lemmas.
First, in \cref{ssec:calc:single}, \cref{lem:calc:simple_single} proves a crucial result: the optimal bidding does not fully consider the context.
In particular, any optimal solution uses bids of the form $\min(1,\frac{c}{\mu})$ for some $\mu \ge 0$ ($\mu = 0$ corresponds to bidding $1$).
This is the key simplifying step to \cref{thm:calc}.
First, it simplifies the calculation of the optimal solution since we only look for linear mappings from conversion rates to bids, allowing us to compute the optimal solutions with the linear program \eqref{eq:calc:LP} in \cref{ssec:app:calc:lp}.
Second, it simplifies the functions we want to learn: the probability of winning with a conversion $W(\cdot)$ and the expected payment $P(\cdot)$ become monotone mappings from the parameter $\mu$ to $[0, 1]$.
Utilizing this, \cref{lem:calc:sample_error} in \cref{ssec:calc:sample_error} proves a uniform convergence bound on $W(\cdot)$ and $P(\cdot)$.
The full proof can be found in \cref{ssec:app:calc:theorem}, after proving the above lemmas.

\subsection{Simplicity of Optimal Bidding} \label{ssec:calc:single}

This section examines the problem of maximizing the probability for a single conversion subject to an expected budget constraint.
Specifically, we assume that the bidder has per-round budget $\rho' \in (0, 1]$ (note this can depend on the state $\l$ and does not need to be $\rho$) and bids to maximize her probability of a conversion.
She picks a function $b(\cdot)$ that maps a context and a conversion rate to a (potentially randomized) bid.
This results in the following optimization problem:
\begin{equation} \label{eq:calc:single}
\begin{split}
    \sup_{\bb: \calX \to \Delta([0, 1])}
    \quad
    \Ex[x,p]{\Big. c \One{\bb(x) \ge p}}
    \quad\textrm{s.t. }\quad
    \Ex[x,p]{\Big. p \One{\bb(x) \ge p}} \le \rho'
\end{split}
\end{equation}

Analyzing the structure of the above optimization problem will give insight into how the bidder should bid in a single state $\l$ of the MDP of $\optInf$.
We proceed to show that the optimal solution takes a very simple form.

\begin{restatable}{lemma}{LemCalcSingle} \label{lem:calc:simple_single}
    The bidding function of \eqref{eq:calc:single} does not need to depend on the context $x$, only the conversion rate $c$.
    In addition, the optimal bidding takes the following form:
    \begin{itemize}
        \item If the price distribution for a given conversion rate has no atoms, the optimal solution is $b^\mu(c) = \min(1,\frac{c}{\mu})$ for some $\mu \ge 0$; note $\mu = 0$ implies bidding $1$ for any context. 
        \item If the price distribution for a given conversion rate has finite support, the optimal solution is supported on at most two bidding policies, $b^{\mu_1}(c)$ and $b^{\mu_2}(c)$, for some $\mu_1, \mu_2 \ge 0$.
        \item In all other cases, the maximum might not be obtained, but a solution supported on $b^{\mu_1}(c)$ and $b^{\mu_2}(c)$ can get arbitrarily close.
    \end{itemize}
\end{restatable}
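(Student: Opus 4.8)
The plan is to attack the constrained problem \eqref{eq:calc:single} through its Lagrangian relaxation, mirroring the approach used elsewhere in the paper for the MDP-level results but now applied to this one-shot conversion-maximization problem. Fix a multiplier $\mu \ge 0$ and consider the unconstrained Lagrangian objective
\[
    W(\bb) - \mu P(\bb) = \Ex[x,p]{(c - \mu p)\One{\bb(x) \ge p}} .
\]
Because the integrand decomposes over $(x,p)$, the optimal (a priori randomized) bid for a context $x$ with conversion rate $c$ is to win exactly the prices with $c - \mu p > 0$, i.e.\ $p < c/\mu$; this is attained by the deterministic bid $\min(1, c/\mu)$, and ties at $p = c/\mu$ contribute nothing, so they are irrelevant. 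Hence $b^\mu(c) = \min(1, c/\mu)$ maximizes $W(\cdot) - \mu P(\cdot)$ over all bidding functions, and it depends on the context only through $c$ --- which already establishes the first assertion of the lemma and the functional form $\min(1, c/\mu)$ (with $\mu \to \infty$ giving the ``skip'' bid $0$ and $\mu = 0$ giving the bid $1$).

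The second step is the weak-duality sandwich: for every feasible $\bb$ (i.e.\ $P(\bb) \le \rho'$) and every $\mu \ge 0$,
\[
    W(\bb) \;\le\; W(\bb) - \mu\big(P(\bb) - \rho'\big) \;\le\; \big(W(b^\mu) - \mu P(b^\mu)\big) + \mu \rho'.
\]
So it suffices to produce a multiplier $\mu$ together with a Lagrangian-optimal policy whose expected payment equals (or approaches) $\rho'$: then the right-hand side collapses to the value of that policy, certifying its optimality. To find such a $\mu$, I would study the scalar map $h(\mu) = P(b^\mu)$, which is non-increasing (a smaller $\mu$ means a higher threshold, hence more wins and more spending), with $h(\mu) \to 0$ as $\mu \to \infty$ and $h(0) = \Ex{p}$. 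If $\Ex{p} \le \rho'$, then $\mu = 0$ (bidding $1$ everywhere) is itself feasible and Lagrangian-optimal, giving the first bullet with $\mu = 0$. Otherwise $\rho' < \Ex{p}$, so $h$ starts above $\rho'$ and decreases below it; either it takes the value $\rho'$ at some $\mu^*$ --- in which case $b^{\mu^*}$ is feasible with equality and optimal, again the first bullet --- or $\rho'$ lies strictly inside a downward jump of $h$ at some breakpoint $\mu_0$. Such a jump corresponds to an atom of the price distribution sitting exactly at the threshold $c/\mu_0$ for a positive-measure set of contexts, and at $\mu = \mu_0$ that atom contributes zero to the Lagrangian, so every way of winning or not winning it keeps a policy Lagrangian-optimal. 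The payments realized by these optimizers sweep the whole jump interval and are obtained as mixtures $\beta\, b^{\mu_1} + (1-\beta)\, b^{\mu_2}$ of the two threshold policies on either side of $\mu_0$; choosing $\beta$ so that the payment equals $\rho'$ yields the second bullet when the joint support of prices and conversion rates is finite (finitely many breakpoints, so $\mu_1,\mu_2$ straddle $\mu_0$ and the mixture is exact), and choosing $\mu_1 = \mu_0 - \epsilon$, $\mu_2 = \mu_0 + \epsilon$ with $\epsilon \to 0$ and the matching $\beta$ gives a policy with payment converging to $\rho'$ and value converging to the optimum, which is the third bullet.

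The step I expect to be the main obstacle is this last case analysis around atoms and the boundary. The delicate points are: verifying that at a breakpoint the Lagrangian-optimal policies are precisely the mixtures of the two adjacent threshold policies (and not something larger, or something that genuinely needs to look at $x$ beyond $c$), that the attained payments really do fill the full jump interval so that $\rho'$ is reachable, and that a two-point mixture already suffices rather than a more elaborate randomization. I would also need to dispose of a few degeneracies --- for instance $\Pr{c = 0} > 0$, where $b^\mu$ ought to skip those auctions rather than bid $1$, and the situation where the constraint binds only in the limit $\mu \to 0^+$ or $\mu \to \infty$, which is exactly where the supremum in \eqref{eq:calc:single} can fail to be attained and only the ``arbitrarily close'' conclusion survives. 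Everything else --- monotonicity of $h$, the sandwich inequality, and the pointwise optimization that defines $b^\mu$ --- should be routine.
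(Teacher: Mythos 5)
Your proposal is correct and follows essentially the same route as the paper: Lagrangian relaxation with pointwise maximization yielding $b^\mu(c)=\min(1,c/\mu)$, monotonicity of $\mu\mapsto P(b^\mu)$, and a two-point mixture of threshold policies straddling the breakpoint when $\rho'$ falls inside a jump (exact for finite support, $\epsilon\to 0$ otherwise). The "delicate points" you flag are exactly the ones the paper resolves via its explicit $\e$-perturbation bound $\Ex{(c-\mu^* p)\One{c\ge p(\mu^*+\e)}}\ge \Ex{(c-\mu^* p)\One{c\ge p\mu^*}}-\e\Pr{c\in(p\mu^*,p(\mu^*+\e))}$, which vanishes for finite support once $\e$ is small enough.
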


The proof starts by defining the Lagrangian of \eqref{eq:calc:single}: $\Ex[x,p]{( c - \mu p ) \One{b(x) \ge p}} + \mu \rho'$ for some multiplier $\mu \ge 0$.
We next observe that the optimal solution to this is to bid $b(c) = \frac{c}{\mu}$.
The rest of the proof involves proving that the optimal solution for the Lagrangian can be translated to an optimal one for the constrained problem.
We defer the full proof in \cref{ssec:app:calc:single}.

The following subsections show how we can approximate $\optInf$ with the empirical distribution of $t$ samples we have collected by round $t$.
Using this empirical distribution with finite support and the fact that optimization problem \eqref{eq:opt_inf_W} can be formulated as a Linear Program, we can find the optimal empirical solution with a Linear Program of size at most $\order*{m t}$.
The Linear Program is written via the occupancy measure, which is the standard re-formulation of the decision variables to make the objectives and the constraint linear.
We present the full Linear Program in \cref{ssec:app:calc:lp}.

\subsection{Approximating \texorpdfstring{$\optInf_m$}{the Infinite Horizon problem} with an Approximate Distribution} \label{ssec:calc:approx}

In this section, we examine finding the optimal solution for the infinite horizon problem using an approximate distribution over prices and conversion rates.
Specifically, we model this as having functions $W'(\mu)$ and $P'(\mu)$ that are close to the real ones, $W(\mu)$ and $P(\mu)$.
We then prove that any vector $\vec\bmu$ has expected average reward and payment that are similar for both functions.
The proof can be found in \cref{ssec:app:calc:approximation}.

\begin{restatable}{lemma}{LemCalcApproximation} \label{lem:calc:approximation}
    Let $W(\mu)$ and $P(\mu)$ be the probability of winning and the expected payment of bid $\min\{1, \frac{c}{\mu} \}$.
    Assume that $W'(\mu), P'(\mu) \in [0, 1]$ are such that $|W(\mu) - W'(\mu)| \le \e$ and $|P(\mu) - P'(\mu)| \le \e$ for all $\mu \ge 0$.
    Fix a $\vec\bmu$ such that $\bmu_m = 0$ (i.e., bids $1$ at state $m$).
    Assume that $m \ge \frac{1}{\bar c}$.
    Then we have
    $
        \abs{ R'(\vec\bmu)  - R(\vec\bmu) }
        \le 
        36 m^2 \e
    $
    and
    $
        \abs{ C'(\vec\bmu) - C(\vec\bmu) }
        \le 
        39 m^3 \e
    $.
\end{restatable}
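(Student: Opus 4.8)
The plan is to compare the expected average reward and payment of a fixed policy $\vec\bmu$ under the two pairs of functions $(W,P)$ and $(W',P')$, using the standard representation of infinite-horizon average reward via the stationary distribution on the states $[m]$. Recall from \cref{ssec:app:bench:stationary} that a winning-probability vector $\vec W$ induces a Markov chain on $[m]$ (Figure \ref{fig:mc}), whose stationary distribution I will call $\vec\pi(\vec W)$, and that $R_m(\vec W) = \sum_{\l=1}^m \pi_\l(\vec W)\, r_m(\l)\, W_\l$ and $C(\vec W) = \sum_{\l=1}^m \pi_\l(\vec W)\, P(W_\l)$. Writing $W_\l = W(\bmu_\l)$, $W'_\l = W'(\bmu_\l)$, etc., the difference $|R'(\vec\bmu) - R(\vec\bmu)|$ splits into two contributions: (i) the change caused by replacing $W_\l$ by $W'_\l$ while keeping the stationary distribution fixed, and (ii) the change in the stationary distribution itself, $\|\vec\pi(\vec W') - \vec\pi(\vec W)\|_1$, propagated through the rewards (which are bounded by $r_m(m) = O(m)$ since $r$ has increments at most $1$).

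The first step is to bound the stationary-distribution perturbation. The chain has an explicit product-form stationary distribution: starting from state $1$, the probability of reaching state $\l$ without winning is $\prod_{j=1}^{\l-1}(1-W_j)$, so $\pi_\l \propto \prod_{j<\l}(1-W_j)$ with $\pi_m$ absorbing the tail. The crucial point is that because $\bmu_m = 0$ we have $W_m = W'_m = \bar c$ (this is where the hypothesis $\bmu_m=0$ and $m \ge 1/\bar c$ enters), so the chain returns to state $1$ within $m$ steps with probability at least $\bar c$ regardless of the other $W_\l$'s; hence the expected return time is $O(m)$ and the normalizing constant $Z = \sum_\l \prod_{j<\l}(1-W_j)$ is $\Theta(1)$ on the relevant scale. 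Perturbing each $W_j$ by at most $\e$ perturbs each unnormalized weight $\prod_{j<\l}(1-W_j)$ by at most $m\e$ (telescoping the product of terms each within $\e$), so $\|\vec\pi(\vec W') - \vec\pi(\vec W)\|_1 = O(m^2\e)$ after accounting for renormalization. I then combine: contribution (i) is $\sum_\l \pi_\l r_m(\l) |W'_\l - W_\l| \le (\max_\l r_m(\l)) \cdot \e = O(m\e)$, and contribution (ii) is $\|\vec\pi' - \vec\pi\|_1 \cdot \max_\l r_m(\l) W_\l = O(m^2\e)\cdot O(m) = O(m^3\e)$ — but a sharper bookkeeping, using that $\sum_\l \pi_\l r_m(\l)W_\l \le \sum_\l \pi_\l r_m(\l)$ and that the reward is concentrated on small states because the return time is $O(m)$, should bring the reward bound down to $36 m^2\e$. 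For the payment, $P(\cdot) \le 1$ is bounded by an absolute constant (not by $m$), so contribution (i) is $O(\e)$ and contribution (ii) is $\|\vec\pi' - \vec\pi\|_1 \cdot 1 = O(m^2\e)$; the extra factor of $m$ in the claimed $39m^3\e$ presumably comes from a cruder but simpler bound on $\|\vec\pi'-\vec\pi\|_1$ that avoids optimizing constants, or from bounding the return time more loosely.

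The main obstacle I anticipate is controlling the stationary-distribution perturbation \emph{cleanly} — i.e.\ getting the dependence on $m$ right and the constants explicit. Naively, perturbation bounds for stationary distributions of Markov chains involve the mixing time or the fundamental matrix, which could blow up if some $W_\l$ are close to $1$ (making the chain nearly periodic) or close to $0$. The structural facts that save us are (a) the product form of $\vec\pi$, which lets us bound everything by elementary manipulations of products of numbers in $[0,1]$ rather than by spectral quantities, and (b) the forced reset $W_m = \bar c > 0$, which uniformly controls the return time and hence the normalizer $Z$ from below. I would carry the argument out entirely at the level of the unnormalized weights $q_\l := \prod_{j<\l}(1-W_j)$ and $q'_\l$: show $|q_\l - q'_\l| \le m\e$ for each $\l$, show $Z = \sum_\l q_\l \ge $ (something like) $1$ and $Z \le m$, and then $|\pi_\l - \pi'_\l| = |q_\l/Z - q'_\l/Z'|$ is controlled by $(|q_\l - q'_\l| + \pi'_\l |Z - Z'|)/Z$, summing to $O(m^2\e)$. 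The final inequalities $36 m^2\e$ and $39 m^3\e$ then follow by tracking constants; I would not fight to make these tight, only to make them correct.
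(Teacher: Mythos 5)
Your overall architecture is the same as the paper's: work with the unnormalized product weights $\prod_{j<\l}(1-W_j)$, perturb them elementarily (the paper's \cref{cl:calc:prod} is exactly your ``telescoping the product'' step), use $\bmu_m=0$ together with $m\ge 1/\bar c$ to control the return time $L$ from below, and then push the perturbation through the normalization. Your payment bound and your $\ell_1$ bookkeeping for $\|\vec\pi-\vec\pi'\|_1=O(m^2\e)$ are fine (indeed sharper than the paper's per-state bounds summed over $m$ states, which is where its $39m^3\e$ comes from). One small slip: $\bmu_m=0$ gives $W(\bmu_m)=\bar c$ but \emph{not} $W'(\bmu_m)=\bar c$; the primed function is only within $\e$ of the true one, so you must separately control $1/W'(\bmu_m)$ versus $1/\bar c$, which is precisely where the hypothesis $m\ge 1/\bar c$ (giving $1/\bar c^2\le m^2$) is consumed.

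The genuine gap is in the reward bound. Starting from $R=\sum_\l \pi_\l\, r_m(\l)\, W_\l$, the stationary-distribution error gets multiplied by $r_m(\l)$, which can be as large as $m$, so your contribution (ii) is $O(m^3\e)$, and your proposed rescue --- that ``the reward is concentrated on small states because the return time is $O(m)$'' --- is false in general. Take $W_\l=0$ for $\l<m$ and $W_m=\bar c$ with $m\approx 1/\bar c$: then $\pi_m\approx 1/2$, all reward is collected in state $m$ with value $r(m)$ which may be $\Theta(m)$, and a $\Theta(m^2\e)$ shift in $\pi_m$ moves $R$ by $\Theta(m^3\e)$. The missing idea is the summation-by-parts identity the paper proves in \cref{cl:learn:rew_cos} (see \cref{def:bench:additional}): $R(\vec W)=\sum_{\l=1}^{m-1}\bigl(r(\l)-r(\l-1)\bigr)\pi_\l+\bigl(r(m)-r(m-1)\bigr)W_m\pi_m$. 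In this form the coefficients are the \emph{increments} of $r$, which the model bounds by $1$, so pairing it with your $\ell_1$ control $\|\vec\pi-\vec\pi'\|_1=O(m^2\e)$ immediately yields the $O(m^2\e)$ reward bound without any concentration claim. You need to add that rewrite; without it your argument proves only $|R'-R|=O(m^3\e)$.
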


\subsection{Approximating \texorpdfstring{$W(\mu)$ and $P(\mu)$}{the probability to win and expected payment} with Samples} \label{ssec:calc:sample_error}

This section presents the bounded sample error needed for \cref{lem:calc:approximation}.
Specifically, we show the following lemma that shows that with $n$ samples, the error on $P(\cdot)$ and $W(\cdot)$ is $\orderT*{1/\sqrt{n}}$ with high probability.

\begin{restatable}{lemma}{LemCalcSampleError} \label{lem:calc:sample_error}
    Let $W(\mu)$ and $P(\mu)$ be the probability of winning a conversion and the expected payment when bidding $\min\{1, \frac{c}{\mu} \}$.
    Let $W_n(\mu)$ and $P_n(\mu)$ be the empirical estimates of these two functions using $n$ samples $\{(p_i, c_i)\}_{i \in [n]}$. 
    %
    Then, for all $\d \in (0, 1)$ with probability at least $1 - \d$ it holds that for all $\mu \ge 0$
    \begin{equation*}
        \abs{W_n(\mu) - W(\mu)}
        \le
        \order{\sqrt{\frac{1}{n} \log\frac{2}{\d}}}
        \quad\text{ and }\quad
        \abs{P_n(\mu) - P(\mu)}
        \le
        \order{\sqrt{\frac{1}{n} \log\frac{2}{\d}}}
    \end{equation*}
\end{restatable}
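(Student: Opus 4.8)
The plan is to reduce the winning event to a one-dimensional threshold and then invoke a Glivenko--Cantelli--type bound for a class of bounded, ``nested'' functions. First I would observe that for the bid $b^\mu(c)=\min\{1,c/\mu\}$ and a price $p\in[0,1]$ we have $b^\mu(c)\ge p$ exactly when $\mu\le c/p$, reading $c/p$ as $+\infty$ when $p=0$: if $\mu\le c$ then $b^\mu(c)=1\ge p$ and $c/p\ge c\ge\mu$, while if $\mu>c$ then $b^\mu(c)=c/\mu\ge p\iff\mu\le c/p$. Hence, writing $\theta=c/p\in[0,\infty]$ (and $\theta_i=c_i/p_i$, with $\theta_i=+\infty$ if $p_i=0$), we get $W(\mu)=\Ex{c\,\One{\theta\ge\mu}}$, $P(\mu)=\Ex{p\,\One{\theta\ge\mu}}$, and $W_n(\mu)=\tfrac1n\sum_i c_i\One{\theta_i\ge\mu}$, $P_n(\mu)=\tfrac1n\sum_i p_i\One{\theta_i\ge\mu}$. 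The structural point is that, as $\mu$ ranges over $[0,\infty)$, the events $\{\theta\ge\mu\}$ form a nested decreasing family, so this is a uniform-convergence problem over a chain of sets, weighted by the bounded quantities $c,p\in[0,1]$; note $\mu=0$ (bidding $1$) is included since $\{\theta\ge0\}$ is the whole space, and the degenerate case $p_i=0$ causes no discrepancy (the weights $c_i,p_i$ vanish there whenever there is any ambiguity).

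\textbf{Main argument.} For the \emph{unweighted} survival function $\mu\mapsto\tfrac1n\sum_i\One{\theta_i\ge\mu}$, the Dvoretzky--Kiefer--Wolfowitz inequality already gives $\sup_\mu\bigl|\tfrac1n\sum_i\One{\theta_i\ge\mu}-\Pr{\theta\ge\mu}\bigr|\le\sqrt{\tfrac1{2n}\log\tfrac2\d}$ with probability $1-\d$. To bring in the weights I would use the layer-cake identity $c=\int_0^1\One{u\le c}\,du$ to write $W(\mu)=\int_0^1\Pr{c\ge u,\ \theta\ge\mu}\,du$ and, with $\widehat{\mathbb P}_n$ the empirical measure, $W_n(\mu)=\int_0^1\widehat{\mathbb P}_n\bigl(A_{u,\mu}\bigr)\,du$, so that
\[
\sup_\mu|W_n(\mu)-W(\mu)|\ \le\ \sup_{u\in[0,1],\,\mu\ge0}\Bigl|\widehat{\mathbb P}_n\bigl(A_{u,\mu}\bigr)-\Pr{(p,c)\in A_{u,\mu}}\Bigr|,\qquad A_{u,\mu}=\{(p,c):c\ge u,\ c\ge\mu p\}.
\]
Each $A_{u,\mu}$ is the intersection of a horizontal half-plane with a half-plane through the origin; the family $\{A_{u,\mu}\}$ is an intersection of two chains of sets (each of VC dimension $1$), hence has VC dimension $\mathcal O(1)$. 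Applying the VC uniform-convergence bound --- equivalently, bounding the expected supremum of the empirical process by $\mathcal O(\sqrt{1/n})$ via symmetrization and finite VC dimension, then applying McDiarmid's inequality (bounded-differences constant $1/n$) to concentrate the supremum --- gives $\sup_{u,\mu}|\widehat{\mathbb P}_n(A_{u,\mu})-\Pr{(p,c)\in A_{u,\mu}}|=\mathcal O\bigl(\sqrt{\tfrac1n\log\tfrac2\d}\bigr)$ with probability $1-\d$, and hence the claimed bound for $W_n$. The bound for $P_n$ is identical after writing $p=\int_0^1\One{u\le p}\,du$ and taking $A'_{u,\mu}=\{(p,c):p\ge u,\ c\ge\mu p\}$, again a VC class of constant dimension; a union bound over the two events (absorbing the factor $2$ into the constant) completes the proof.

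\textbf{Main obstacle.} The routine parts are the half-plane/VC bookkeeping and dispatching the degenerate cases. The one genuinely delicate point is invoking the form of the VC uniform-convergence bound that yields a clean $\sqrt{\tfrac1n\log\tfrac1\d}$ tail with \emph{no} spurious $\log n$ factor; this is why I would route through ``expected supremum $\mathcal O(\sqrt{1/n})$ plus McDiarmid'' rather than a naive union bound over the $n$ break-points $\theta_i$, which would cost an extra $\log n$. Confirming that the weighting by $c$ (resp.\ $p$) and the passage to the two-parameter set family $\{A_{u,\mu}\}$ do not inflate the complexity beyond a constant is the only step requiring real care; everything else is a direct computation.
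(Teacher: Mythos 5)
Your proposal is correct and shares the paper's overall architecture --- bound the expected supremum of the deviation by $\mathcal{O}(\sqrt{1/n})$, concentrate with McDiarmid's inequality (bounded differences $1/n$), and union-bound over the two functions --- but it justifies the key expected-supremum step differently. The paper treats $\sup_{\mu}|W_n(\mu)-W(\mu)|$ directly as a uniform-convergence problem for an average of functions $\mu\mapsto c_i\One{c_i\ge\mu p_i}$ that are each monotone in $\mu$ and bounded in $[0,1]$, and invokes a black-box theorem (Theorem~23 of Kleinberg--Leme--Schneider--Teng) for such monotone families to get $\Ex{\sup_\mu|\cdot|}=\mathcal{O}(1/\sqrt{n})$. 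You instead strip the weights via the layer-cake identity $c=\int_0^1\One{u\le c}\,du$ and reduce to uniform convergence of the empirical measure over the two-parameter family $A_{u,\mu}=\{c\ge u,\ c\ge\mu p\}$; since each one-parameter slice is a chain, the trace of this family on any finite sample is the class of upper-right quadrants in the two induced rank orders, so its VC dimension is at most $2$ and symmetrization gives the same $\mathcal{O}(\sqrt{1/n})$ bound. Your route is self-contained where the paper's is a citation, at the cost of the (correctly handled) VC bookkeeping; both correctly avoid the spurious $\log n$ that a naive union bound over the breakpoints $\theta_i=c_i/p_i$ would incur, and your reduction of the win event to the one-dimensional threshold $\mu\le c/p$ matches the structure the paper exploits implicitly. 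One minor remark: the paper's own write-up states the bounded-difference constant as $1$ before using the $\sqrt{\tfrac{1}{2n}\log\tfrac{1}{\d}}$ tail (an evident typo for $1/n$); your version states the constant correctly.
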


The proof follows by a standard result: the expected maximum error, $\Ex{\sup_\mu\abs{W(\mu) - W_n(\mu)}}$, is $\order{1/\sqrt n}$.
Using McDiarmid's inequality, the lemma follows, converting that bound to a high probability one.
The full proof can be found in \cref{ssec:app:calc:sample_error}



{\small
\bibliographystyle{acm}
\bibliography{ref}
}

\newpage
\appendix
\section*{Appendix}
\section{Warm-up Example: Further Details}\label{appendix:warm-up}

\begin{figure}[!ht]
\centering 
\includegraphics[width=0.5\linewidth]{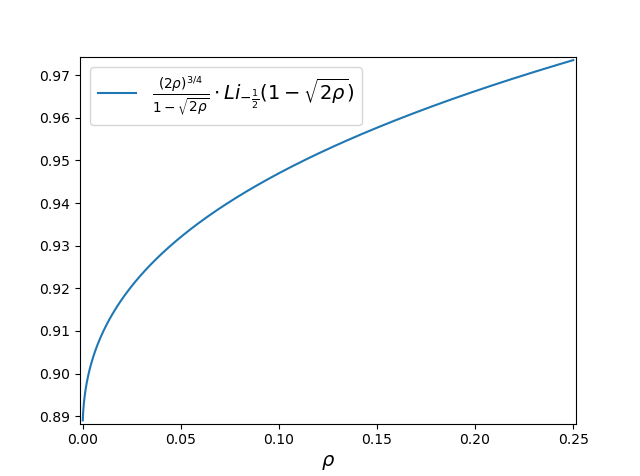}
\Description{A plot for every value $\rho \in [0, 1/4]$ starting from $0.89$ and increasing to $0.97$.}
\caption{Numerical evaluation of the competitive ratio of using a fixed bid distribution, shown as a function of the budget per step $\rho$ in the example of uniform prices and square-root objective.}
\label{fig:uniform-price-example}
\end{figure}

\paragraph{Calculation of the utility with a fixed bid:} An upper bound on our bidder's utility is the scenario where we have the same number of wins, and the winning times are equally spaced in intervals of $T/\sqrt{2BT}$, which gives an upper bound on the utility of $(2 \rho)^{\nicefrac{1}{4}} T$. 

While perfect spacing is not attainable, we would like to see how far our bidder's utility is from this value.
Suppose there are $k$ winning events, and the intervals between wins (including times zero and $T$ as interval ends) are $\l_1, \dots, \l_{k+1}$. The utility for our bidder is $u = \sum_{i=1}^{k+1} \sqrt{\l_i}$.
By Wald's equality, the expected utility equals the product of expectations:
$$
\E[u] = \sum_{i = 1}^{k+1} \sqrt{\l_i} = \E[k + 1]\E[\sqrt{\l_i}].
$$

The number of wins follows a binomial distribution with expectation $\E[k] = bT$ (to simplify, we henceforth omit one interval and look at $k$ rather than $k+1$). 
Except the first and last intervals, the interval lengths are geometrically distributed; the expectation is approximately (up to an $o(1)$ error as $T \rightarrow \infty$. See below). 
$$
\E[\sqrt{\l_i}] = 
b \sum_{\l=1}^\infty \sqrt{\l} (1-b)^{\l-1} = \frac{b}{1-b} Li_{-\frac{1}{2}}(1-b).
$$
where $Li_s(x)$ is the polylogarithm function, defined as $Li_s(x) = \sum_{n=1}^\infty \frac{x^n}{n^s}$.  Thus, the long-term expected utility is
$$
\E[u] = \frac{\sqrt{2\rho}}{1 - \sqrt{2\rho}} \cdot \sqrt{2\rho} T \cdot Li_{-\frac{1}{2}}(1-\sqrt{2\rho}) = 
\frac{2\rho T}{1 - \sqrt{2\rho}} \cdot  Li_{-\frac{1}{2}}(1-\sqrt{2\rho}) =
\qty( \frac{\sqrt\pi}{2^{3/4}} \rho^{1/4} + \order{\rho^{3/4}} ) T.
$$
The ratio between this expected utility and the optimal benchmark,
$$
\frac{\E[u]}{(2 \rho)^{\nicefrac{1}{4}} T} = 
\frac{(2\rho)^{\nicefrac{3}{4}}}{1 - \sqrt{2\rho}} \cdot  Li_{-\frac{1}{2}}(1-\sqrt{2\rho}) =
\frac{\sqrt\pi}{2} + \order{\sqrt\rho},
$$
is an increasing function of $\rho$ that is strictly less than one for all $\rho < 1/2$. 
By numerical evaluation (see Figure \ref{fig:uniform-price-example}), 
in our case where $\rho < 1/4$, the fixed bidding strategy obtains an approximation ratio between $\approx0.886$ and $\approx 0.973$ of the optimal utility.

\paragraph{Error due to infinite horizon summation} 
The error term arises from the fact that with a finite time horizon $T$, the values of $\l$ are not infinite. However, the error from not truncating the sum vanishes as $T$ becomes large:  
the expectation of $\sqrt{\l}$ is
$$
\E[\sqrt{\l_i}] = 
z \bigg(
b \sum_{\l=1}^\infty \sqrt{\l} (1-b)^{\l-1} - 
b \sum_{\l=T+1}^\infty \sqrt{\l} (1-b)^{\l-1}.
\bigg)
$$
where $z = \frac{1}{1 - (1-b)^T} = 1-o(1)$ is the normalization constant from the truncated geometric distribution. The last term is bounded by
$$
b \sum_{\l=T+1}^\infty \sqrt{\l} (1-b)^{\l-1} <
\frac{b}{1-b} \sum_{\l=T+1}^\infty \l (1-b)^{\l} = 
\frac{b}{1-b} \cdot \frac{(1-b)^{T+1}(bT + 1)}{b^2} = 
\frac{1}{b} \cdot (1-b)^{T}(bT + 1) \xrightarrow[T \to \infty]{} 0.
$$

\section{State-Independent Strategies}\label{appendix:state-independent}

\subsection{Utility Guarantee for State Independent Bidding}
\label{appendix:reverse-jensen}

In our example in the introduction (details in \cref{appendix:warm-up}), we saw that the static bidding policy of using a constant bid attained a constant factor approximation to the utility of an optimal policy. 
The following theorem shows that the example is, in fact, an instance of a more general result.

\begin{theorem} \label{thm:state-independent}
    There exists a state-independent static policy that depends only on the context that achieves an $(1-\frac{1}{e}$) fraction of the optimal policy as $T \rightarrow \infty$.
\end{theorem}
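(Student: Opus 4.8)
The plan is to show that picking the \emph{single best winning probability} $W \in [0, \bar c]$ and using it in every state (a state-independent policy, realized via the context-only bidding function from \cref{lem:calc:simple_single}) already recovers a $(1-\tfrac1e)$ fraction of $\optInf_m$, hence of $\optAlg$ by \cref{thm:bench:comparison}. First I would pass to the infinite-horizon benchmark: by \cref{thm:bench:comparison} it suffices to compare against $\optInf_T$, and a state-independent policy corresponds to choosing one value $W$ for all states, which induces a renewal process where the gaps between consecutive conversions are i.i.d.\ $\mathrm{Geometric}(W)$. Its average-per-round reward is $W \cdot \Ex[\ell \sim \mathrm{Geom}(W)]{r(\ell)} / \Ex{\ell} = W^2 \sum_{\ell \ge 1} r(\ell)(1-W)^{\ell-1}$ (using $\Ex{\ell} = 1/W$), and its average payment is $W \cdot P(W)/\Ex{\ell}^{-1}$ interpreted appropriately, i.e.\ it satisfies the budget constraint whenever $W$ does in the sense of \eqref{eq:opt_inf_W}.

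The core of the argument is a ``reverse Jensen'' step. Fix an optimal (weakly increasing, by \cref{cor:bench:increasing}) solution $\vec W^*$ to \eqref{eq:opt_inf_W} with stationary distribution $\pi^*$ on states, and let $\bar W = \sum_\ell \pi^*_\ell W^*_\ell$ be its average conversion rate per round; note $\bar W$ equals the reciprocal of the expected gap length. The plan is to choose the state-independent policy with winning probability exactly $\bar W$. This policy spends no more in expectation than $\vec W^*$: by convexity of $P(\cdot)$ (which follows from its definition as an infimum over bidding functions, since mixing bidding functions mixes $(W,P)$ pairs) and Jensen, $P(\bar W) \le \sum_\ell \pi^*_\ell P(W^*_\ell) / (\text{normalization})$, so feasibility is preserved. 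For the reward, I would compare $\bar W^2 \sum_\ell r(\ell)(1-\bar W)^{\ell-1}$ against $\optInf_T = \sum_\ell \pi^*_\ell r_m(\ell) W^*_\ell$. Writing both as expectations of $r$ over a distribution on gap lengths with the \emph{same mean} $1/\bar W$ — the geometric one for our policy, and the (stochastically larger-tailed, since $W^*$ is increasing) renewal-gap distribution of $\vec W^*$ — concavity of $r$ would push the optimum's value up, but only by a bounded factor: the worst case is $r(\ell) = \min\{\ell, L\}$ for the threshold $L$ realizing the constraint, where a direct computation of $\bar W^2 \sum_{\ell} \min\{\ell,L\}(1-\bar W)^{\ell-1}$ versus $\bar W \cdot L$ (the best any mean-$1/\bar W$ gap distribution can do against $\min\{\cdot,L\}$, attained by a point mass at $L$) gives the ratio $1 - (1-\bar W)^{L}$, and since the mean gap is $1/\bar W \le L$ we get $(1-\bar W)^L \le (1 - 1/L)^L \le 1/e$. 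Stitching these: for general concave $r$, decompose $r$ as a nonnegative combination of truncation functions $r(\ell) = \sum_{L} \alpha_L \min\{\ell, L\}$ (possible since $r$ is concave, increasing, $r(0)=0$, with $\alpha_L = (r(L)-r(L-1)) - (r(L+1)-r(L)) \ge 0$), apply the $\min$-case bound termwise, and note $1 - (1-\bar W)^{L} \ge 1 - 1/e$ uniformly for all $L \ge 1/\bar W$ while for $L < 1/\bar W$ one checks the geometric policy does even better.

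The main obstacle I anticipate is making the ``same mean, concave objective, geometric is within $1-1/e$'' comparison fully rigorous in the contextual setting, because (i) the optimum's gap distribution is not literally a renewal process with i.i.d.\ gaps — the state caps at $m$ and $W^*_m$ is used for all $\ell \ge m$ — so I would first invoke \cref{thm:bench:small_m}/\cref{cor:bench:increasing} to work with the genuinely monotone capped structure and absorb the $\order*{1/\poly(T)}$ slack, and (ii) the decomposition of $r$ into truncations and the termwise bound must respect that all pieces are evaluated against the \emph{one} fixed policy with winning probability $\bar W$, which is fine since the policy does not depend on the decomposition. A secondary subtlety is the budget: I should confirm $\bar W$ is itself a feasible winning probability, i.e.\ $\bar W \le \bar c$ (immediate, as each $W^*_\ell \le \bar c$) and that $P(\bar W)$ times the per-round conversion frequency stays $\le \rho$ (the convexity/Jensen argument above). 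Once these are in place the theorem follows by taking $T \to \infty$ so the $r_m$ truncation and the $\poly(T)^{-1}$ errors vanish. I would relegate the explicit polylog/$\min$-function calculations — which mirror the warm-up computation in \cref{appendix:warm-up} — to the appendix, since here only the structure of the bound matters.
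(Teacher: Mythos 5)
Your proposal is correct in substance and its mathematical heart is the same as the paper's: both reduce the claim to a ``reverse Jensen'' inequality comparing a geometric gap distribution against an arbitrary gap distribution with the same mean, proven by decomposing the concave reward $\rfunc$ into a non-negative combination of truncations $\min\{\ell,L\}$ and checking the single inequality $1-(1-\bar W)^L \ge (1-\tfrac1e)\,\bar W\min\{L,1/\bar W\}$ termwise; this is exactly \cref{lem:reverse-jensen}. The packaging differs: you route through the infinite-horizon benchmark $\optInf_T$ (via \cref{thm:bench:comparison} and the monotone optimum of \cref{cor:bench:increasing}) and average the stationary winning probabilities, whereas the paper works directly with the finite-horizon optimal dynamic policy, defines the gap variable $Y$ from its joint win statistics $w_{yt}$, and matches the static policy's win rate to $\wtot/T$ — a more self-contained argument that avoids the Section~4 machinery and the worry you raise about the $m$-cap (which is in any case harmless, since gaps under a stationary policy are genuinely i.i.d.). One concrete gap to fix: your decomposition $\rfunc(\ell)=\sum_L \alpha_L\min\{\ell,L\}$ with $\alpha_L=(\rfunc(L)-\rfunc(L-1))-(\rfunc(L+1)-\rfunc(L))$ loses the asymptotically linear component of $\rfunc$ — e.g.\ for $\rfunc(\ell)=\ell$ all $\alpha_L=0$ — so you must add the term $v_\infty\cdot\ell$ with $v_\infty=\lim_n(\rfunc(n)-\rfunc(n-1))\ge 0$, for which the ratio is exactly $1$ since both gap distributions have the same mean; the paper handles this explicitly via $r_\infty$.
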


To prove this theorem, we need the following key Lemma.

\begin{lemma}[A Reverse Jensen Inequality for Geometric Random Variables] \label{lem:reverse-jensen}
    If $\rfunc : \N \to [0,\infty)$ is a non-decreasing concave function, $X$ is a geometric random variable supported on the positive integers, and $Y$ is an integer-valued random variable satisfying $\E[Y] = \E[X]$, then 
    \begin{equation} \label{eq:lem-reverse-jensen}
        \E[\rfunc(X)] \geq \left( 1 - \tfrac{1}{e} \right) \E[\rfunc(Y)] .
    \end{equation}
\end{lemma}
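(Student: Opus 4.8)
The plan is to reduce the inequality to a one‑parameter family of elementary facts about geometric variables, by decomposing the concave function $r$ into ``cap'' functions $n\mapsto\min(n,k)$, and then to verify each of those facts by a short calculus computation. Throughout I use that $r(0)=0$ (as for the reward functions in our model); this normalization is what makes the constant $1-\tfrac1e$ come out exactly, and without it one can only get a weaker constant.

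\textbf{Step 1: decompose $r$.} Since $r$ is non‑decreasing and concave with $r(0)=0$, its forward differences $d_n := r(n+1)-r(n)$ are non‑negative and non‑increasing. Put $c_\infty := \lim_{n\to\infty} d_n \ge 0$ and $c_k := d_{k-1}-d_k \ge 0$ for $k\ge 1$. Matching forward differences and the value at $0$, one checks that
\[
  r(n) \;=\; c_\infty\, n \;+\; \sum_{k=1}^{\infty} c_k \min(n,k) \qquad\text{for all } n\ge 0,
\]
so $r$ is a non‑negative combination of the identity map and the cap functions $\min(\cdot,k)$. Taking expectations and using $\mathbb{E}[Y]=\mathbb{E}[X]$ to cancel the linear term, the lemma reduces to proving, for every $k\ge 1$,
\[
  \mathbb{E}[\min(X,k)] \;\ge\; \bigl(1-\tfrac1e\bigr)\,\mathbb{E}[\min(Y,k)].
\]

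\textbf{Step 2: each cap function.} Write $\mu := \mathbb{E}[X]=\mathbb{E}[Y]$, and note $\mu\ge 1$ since $X\ge 1$. For the right‑hand side, $\min(Y,k)\le k$ and $\min(Y,k)\le Y$ give the crude bound $\mathbb{E}[\min(Y,k)]\le\min(k,\mu)$. For the left‑hand side, $X$ is geometric with $\mathbb{P}[X\ge j]=(1-1/\mu)^{j-1}$, hence $\mathbb{E}[\min(X,k)]=\sum_{j=1}^{k}(1-1/\mu)^{j-1}=\mu(1-(1-1/\mu)^{k})$. So it suffices to show $\mu(1-(1-1/\mu)^{k})\ge(1-\tfrac1e)\min(k,\mu)$. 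If $k\ge\mu$, then $(1-1/\mu)^{k}\le(1-1/\mu)^{\mu}\le e^{-1}$, giving $\mathbb{E}[\min(X,k)]\ge\mu(1-\tfrac1e)=(1-\tfrac1e)\min(k,\mu)$. If $k\le\mu$, bound $(1-1/\mu)^{k}\le e^{-k/\mu}$ and set $x=k/\mu\in(0,1]$, so that $\mathbb{E}[\min(X,k)]\ge k\cdot\frac{1-e^{-x}}{x}$; since $x\mapsto\frac{1-e^{-x}}{x}$ is decreasing on $(0,\infty)$ — its derivative has the sign of $e^{-x}(x+1)-1$, which equals $0$ at $x=0$ and is strictly decreasing afterwards — its value on $(0,1]$ is at least $\frac{1-e^{-1}}{1}=1-\tfrac1e$, and $\min(k,\mu)=k$ in this case.

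\textbf{Main obstacle.} The substantive step is the first one: recognizing the cap functions as the right generating family, so that the trivial Jensen‑type bound on the $Y$‑side becomes the \emph{exact} quantity $\min(k,\mu)$, and that the linear part of $r$ is harmlessly absorbed by the equal‑means hypothesis. Everything else is routine, the only genuine content being the monotonicity of $\frac{1-e^{-x}}{x}$, which pins down the extremal instance — essentially $r=\min(\cdot,k)$ with $k\approx\mu\to\infty$ and $Y$ a point mass at $\mu$, for which the ratio tends to exactly $1-\tfrac1e$, showing the constant is best possible.
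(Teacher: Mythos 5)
Your proof is correct and follows essentially the same route as the paper's: the identical decomposition of $\rfunc$ into a non-negative combination of the identity and the cap functions $\min(\cdot,k)$, the same reduction of the $Y$-side to $\min(k,\mu)$, and the same closed form $\E[\min(X,k)]=\mu\bigl(1-(1-1/\mu)^k\bigr)$ for the $X$-side. The only (cosmetic) differences are that you bound $\E[\min(Y,k)]\le\min(k,\mu)$ by the two trivial pointwise bounds rather than Jensen, and close the final inequality via monotonicity of $(1-e^{-x})/x$ rather than concavity of $(1-p^x)\mu$ in $x$.
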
\label{thm:reverse-jensen-lemma}

The lemma can be thought of as a reverse version of Jensen's inequality and it applies generally to geometric random variables.
In our context, this result implies that any concave reward function and any distribution of prices, there exists a static bidding policy that achieves at least $1 - \frac{1}{e} - o(1)$ times the value of the optimal policy.

\begin{proof}[Proof of Lemma \ref{thm:reverse-jensen-lemma}]
    Consider the sequence of non-decreasing concave functions
    $r_1, r_2, \ldots$ defined by 
    \[ r_m(n) = \min \{ m,n \} . \]
    Also, let $r_{\infty}(n) = n .$
    The proof of the lemma begins by showing that 
    $\rfunc$ is equal to a non-negative
    weighted sum of functions in the set
    ${\mathscr R} = \{r_1, r_2, \ldots, \} \cup \{r_{\infty}\}$.
    The proof concludes by showing that 
    each of the functions in ${\mathscr R}$ satisfies
    inequality~\eqref{eq:lem-reverse-jensen}.

    Let $w_1 = \rfunc(1)$ and, for each integer $n > 1$, 
    let $w_n = \rfunc(n) - \rfunc(n-1).$ 
    Since $\rfunc$ is concave and non-decreasing, the 
    sequence $w_1, w_2, \ldots$ is non-negative and 
    non-increasing, hence it converges 
    to its greatest lower bound
    \[ v_{\infty} = \inf \{ w_n \, : \, n > 0 \}
    = \lim_{n \to \infty} w_n \geq 0 \]
    For each positive integer $n$ let $v_n = w_n - w_{n+1} \geq 0$.
    We have $w_n = v_{\infty} + \sum_{m \geq n} v_m$
    and
    \begin{equation} \label{eq:rm-decomp}
        \rfunc(n) = \sum_{k=1}^n w_k 
        = \sum_{k=1}^n \left( v_\infty + \sum_{m \geq k} v_m \right) 
        = v_\infty \cdot n + \sum_{m = 1}^{\infty} v_m \cdot 
        \min \{ m, n \} 
        = \sum_{m \in \N \cup \{\infty\}} v_m r_m(n) .
    \end{equation}
    Since the coefficients $v_m \; (1 \leq m \leq \infty)$
    are all non-negative, 
    Equation~\eqref{eq:rm-decomp} shows that 
    $r$ is equal to a non-negative sum of functions
    in ${\mathscr R}$, as claimed. Now, by linearity of
    expectation,
    \begin{align*}
        \E[\rfunc(X)] & = \E \left[ \sum_{m \in \N \cup \{\infty\}} 
        v_m r_m(X) \right] = \sum_{m \in \N \cup \{\infty\}} v_m 
        \E[r_m(X)] \\
        \E[\rfunc(Y)] & = \E \left[ \sum_{m \in \N \cup \{\infty\}} 
        v_m r_m(Y) \right] = \sum_{m \in \N \cup \{\infty\}} v_m 
        \E[r_m(Y)] .
    \end{align*}
    If we can show that 
    \begin{equation} \label{eq:jensen-rm}
      \forall m \in \N \cup \{\infty\} \quad
      \E[r_m(X)] \geq \left( 1 - \tfrac1e \right) \E[r_m(Y)]
    \end{equation}    
    then the lemma follows by taking a weighted sum
    of inequalities of the form~\eqref{eq:jensen-rm},
    with the $m^{\mathrm{th}}$ inequality weighted by $v_m$.
    Let $\mu = \E[X] = \E[Y]$. By Jensen's Inequality, 
    we have $r_m(\mu) \geq \E[r_m(Y)]$, so inequality~\eqref{eq:jensen-rm}
    will follow if we can prove
    \begin{equation} \label{eq:jensen-rm-2}
        \forall m \in \N \cup \{\infty\} \quad
        \E[r_m(X)] \geq \left( 1 - \tfrac1e \right) r_m(\mu) .
    \end{equation}
    When $m=\infty$, the function $r_m$ is the
    identity function, so $\E[r_m(X)] = r_m(\E[X])$,
    establishing the $m=\infty$ case of~\eqref{eq:jensen-rm-2}.
    When $m < \infty$, we can calculate $\E[r_m(X)]$ by setting
    $p = 1 - \frac{1}{\mu}$ and observing that for all $n \geq 1$ 
    we have $\Pr{X \geq n} = p^{n-1}$. Hence,
    \[
        \E[r_m(X)] = \E[\min\{m,X\}]
        = \sum_{n=1}^{\infty} \Pr{\min\{m,X\} \geq n}
        = 1 + p + \cdots + p^{m-1} = \frac{1 - p^m}{1-p} = (1-p^m) \mu .
    \]  
    The function $(1 - p^x) \mu$ is an increasing concave function of $x$,
    so it satisfies 
    \begin{equation} \label{eq:1-px}
        (1 - p^x) \mu \geq \begin{cases}
            (1 - p^{\mu}) x & \mbox{if } 0 \leq x \leq \mu \\
            (1 - p^{\mu}) \mu & \mbox{if } x > \mu .
        \end{cases}
    \end{equation}
    Recalling that $p = 1 - \frac{1}{\mu} < e^{-1/\mu}$, 
    we see that $p^{\mu} < e^{-1}$ so $1 - p^{\mu} > 1 - \frac1e$.
    Now, when $0 \leq m \leq \mu$,
    \[
      \E[r_m(X)] = (1 - p^m) \mu \geq (1 - p^{\mu}) m 
      > \left( 1 - \tfrac1e \right) m = \left( 1 - \tfrac1e \right)
      r_m(\mu) . 
    \]
    When $\mu < m < \infty$, 
    \[
      \E[r_m(X)] = (1 - p^m) \mu \geq (1 - p^{\mu}) \mu 
      > \left( 1 - \tfrac1e \right) \mu 
      = \left( 1 - \tfrac1e \right) r_m(\mu) .
    \]
    We have established inequality~\eqref{eq:jensen-rm-2}
    for all $m \in \N \cup \{\infty\}$, which completes the proof.
\end{proof}

We now proceed to prove \cref{thm:state-independent}.

\begin{proof}[Proof of \cref{thm:state-independent}] 
\newcommand{\wtot}{w^{\mathrm{tot}}}
Now we apply the reverse Jensen inequality to show that 
there is always a static (potentially randomized) bidding
policy whose expected value is at least $1 - \frac1e$ times
the value of the optimal dynamic bidding policy. 

Consider the optimal dynamic policy for time horizon $T$. 
Let $w_{yt}$ denote the probability that 
this policy wins at time $t$ and that the preceding win is at 
time $t-y$. The overall probability of winning at time $t$ will
be denoted by $w_t = \sum_{y=1}^t w_{yt},$ and we will denote the 
policy's expected number of wins by $\wtot = \sum_{t=1}^T w_t$.

Let $Y$ be a random variable supported on the positive integers,
whose distribution is given by 
\[
    \Pr{Y=y} = \frac{\sum_{t=1}^T w_{yt}}{\wtot} .
\]
The expected value of running the optimal policy is 
\begin{equation} \label{eq:expectY}
    \sum_{t=1}^T w_{yt} \rfunc(y) = 
    {\E[\rfunc(Y)]} \cdot \wtot
\end{equation} 

A useful observation about the expected value of $Y$ is 
the following. Let $\tau$ denote the final time that the
policy wins in the time interval $[1,T]$, or $\tau=0$ for
a sample path on which the policy never wins. We have
\begin{align*}
    \E[\tau] &= \sum_{s=1}^{T} \Pr{\tau \geq s} \\
    &= \sum_{s=1}^T \sum_{t=s}^T \Pr{\mbox{first win in } [s,t] \mbox{ occurs at } t} \\
    &= \sum_{s=1}^T \sum_{t=s}^T \sum_{y = t-s+1}^t w_{yt} \\
    &= \sum_{t=1}^T \sum_{y=1}^t \sum_{s = t-y+1}^t w_{yt} \\
    &= \sum_{y=1}^T \sum_{t=y}^T y w_{yt} \\
    &= \E[Y] \cdot \wtot .
\end{align*}

Now consider a static bidding policy defined as follows.
Let $\bar{w} = \wtot/T$ and find the distribution over 
bids $b$ that minimizes $\E[P(b)]$ subject to the 
constraint $\E[W(b)] = \bar{w}.$ If $b'$ is the empirical
bid distribution of the optimal policy then $\E[W(b')] = \bar{w}$,
while $\E[P(b')] \leq \rho$ because the policy obeys the 
budget constraint. Hence, by the definition of $b$, we 
must have $\E[P(b)] \leq \rho$. If one runs the static
policy with bid distribution $b$, the expected number 
of wins\footnote{Actually, this overestimates the
expected number of wins, but only slightly. The
issue is that the static policy obeys the budget
constraint \emph{in expectation}, but it may run out of budget before $T$
and, for this reason, the probability of winning in the final
$o(T)$ rounds will be strictly less than $\bar{w}$.} 
is $\wtot$ and the distribution of the 
spacing between wins is a geometric random variable, $X$.
If $\tau'$ denotes the final time that the static 
bidding policy wins in the interval $[1,T]$ then,
as above, $\E[\tau'] = \E[X] \cdot \wtot$, and
since both $\tau$ and $\tau'$ should be
$T - o(T)$, this implies $\E[X] = (1 \pm o(1)) \cdot \E[Y].$
Applying \Cref{lem:reverse-jensen}, we have
$\E[\rfunc(X)] \geq (1 - \frac1e - o(1)) \E[\rfunc(Y)]$. Finally,
since the expected values of the static policy and the
optimal policy are given, respectively, by 
$\E[\rfunc(X)] \cdot \wtot$ and 
$\E[\rfunc(Y)] \cdot \wtot$, we 
conclude that the static policy obtains at 
least $1 - \frac1e - o(1)$ times the value
of the optimal policy.
\end{proof}

\subsection{State-independent Strategies occur linear regret} \label{ssec:app:state_independent_optimality}

We show an example demonstrating that a state bidding strategy cannot attain a perfect approximation ratio of $1$, and so must have linear regret.
We show that this can happen even in the simpler case without contexts and when the conversion rate is always $1$.
Consider a uniform price distribution $p_t \sim U[0, 1]$ and the concave reward function is $r(\l) = \min\{\l, 2\}$.
The optimal state-independent policy is (similar to what we saw in our warm-up example) to bid $\sqrt{2 \rho}$ every round since that depletes the budget in expectation. 
This leads to winning with probability $\sqrt{2 \rho}$ every round.

Let $L$ be the geometric random variable that denotes the time between wins with conversions when bidding $\sqrt{2 \rho}$ every round.
Specifically, we have with $\Pr{L = \l} = \sqrt{2 \rho} (1 - \sqrt{2 \rho})^{\l-1}$ for every $\l \in \N$.
The expected time-average reward of our state-independent policy is the ratio of the expected reward of a win divided by the expected time between wins (see \cref{def:bench:additional} for a formal statement of this result):
\begin{equation} \label{eq:app:50}
    \frac{ \Ex{r(L)} }{ \Ex{L} }
    =
    \frac{1\cdot\sqrt{ 2 \rho } + 2\cdot(1 - \sqrt{ 2 \rho }) }{ 1 / \sqrt{ 2 \rho } }
    =
    ( 2 - \sqrt{ 2 \rho } ) \sqrt{ 2 \rho }
\end{equation}

Since the reward function has only two possible values, $\rfunc(1)$, and $\rfunc(2)$, for every value of $\rho$ we can find the optimal pair of bids to use for $\l = 1$ and $\l \geq 2$, and programmatically search the value of $\rho$ that maximizes the utility gap between our fixed bidding strategy and the optimal strategy.
This turns out to be $\rho = \sqrt{3} - \frac{3}{2} \approx 0.23$, in which case the optimal state-dependent policy results in $\approx 1.11$ more utility than the utility of the state-independent policy.
We present the details next.

Fix $\rho = \sqrt{3} - \frac{3}{2}$ consider the strategy that bids $b_1 = 2 - \sqrt{3} \approx 0.27$ when $\l = 1$ and $b_2 = 1$ when $\l \ge 2 2$.
Now consider the random variable $L'$ that denotes the time between wins of this strategy.
Specifically, we have $\Pr{L' = 1} = b_1 = 2 - \sqrt{3}$ and $\Pr{L' = 2} = 1 - b_1 = \sqrt{3} - 1$.
First, note that $\Ex{L'} = \sqrt{3}$.

Now, we show that this policy does not run out of budget in expectation (this results in only $o(T)$ error with high probability when always adhering to the budget constraint, which does not affect the approximation ratio bound we want to show).
To calculate the expected time-average spending of this policy, we first calculate the expected payment of a conversion:
\begin{equation*}
    \frac{b_1}{2} \cdot b_1  + \frac{b_2}{2} \cdot (1 - b_1)
    =
    3 - \sqrt{3} \frac{3}{2}
\end{equation*}

The expected time-average payment of this policy is the ratio of the expected payment between conversions and the expected time between conversions (again see \cref{def:bench:additional}), which is $\frac{3 - \sqrt{3} \frac{3}{2}}{\sqrt{3}} = \sqrt{3} - \frac{3}{2} = \rho$.
The expected time-average reward of this policy is (similar to \eqref{eq:app:50}):
\begin{equation*}
    \frac{ \Ex{\rfunc(L')} }{ \Ex{L'} }
    =
    \frac{1 \cdot b_1 + 2 \cdot (1 - b_1)}{\sqrt{3}}
    =
    1
\end{equation*}

This reward is $\frac{1}{3 + 2 \sqrt{ 2\sqrt{3} - 3 } - 2 \sqrt{3}} \approx 1.11$ times higher than the utility of the state-independent strategy, as promised.
\section{Deferred Text and Proofs of Section \ref{sec:bench}} \label{sec:app:bench}

In this section, we complete the proofs of the results in \cref{sec:bench}.

\subsection{Reformulation of (\ref{eq:opt_inf_W}) using stationary distributions}
\label{ssec:app:bench:stationary}

To know the long-term average reward or cost, we need to consider the stationary distribution induced by a bidding policy.
Consider the stationary distribution defined by this vector of winning probabilities $\vec W \in [0, \bar c]^m$.
Specifically, we denote with $\pi_\l(\vec W)$ the probability mass of state $\l \in [m]$ in the stationary distribution defined by $\vec W$.
We now prove the following.

\begin{lemma}
    Let $\optInf_m$ be as defined in Equation \eqref{eq:opt_inf_W} for any $m \in \N$.
    Then it holds that
    \begin{equation} \label{eq:app:opt_inf_stat}
    \begin{aligned}
        \optInf_m = \;
        & \sup_{ \vec W \in [0, \bar c] } &&
        \sum_{\l=1}^m r(\l) W_\l \pi_\l(\vec W)
        \\
        & \textrm{\ \ \  s.t.} &&
        \sum_{\l=1}^m P(W_\l) \pi_\l(\vec W) \le \rho,
        \\
        & \textrm{where} &&
        \pi_1(\vec W) = \sum_{\l = 1}^m W_\l \pi_\l(\vec W)
        \\
        & &&
        \pi_\l(\vec W) = \qty\big( 1-W_{\l-1} ) \pi_{\l-1}(\vec W) \qquad\qquad \forall \l = 2, 3, \ldots, m-1
        \\
        & &&
        \pi_m(\vec W) = \sum_{\l=m-1}^m ( 1-W_\l ) \pi_{\l-1}(\vec W)
        \\
        & &&
        \sum_{\l = 1}^m \pi_\l(\vec W) = 1.
    \end{aligned}
    \end{equation}
\end{lemma}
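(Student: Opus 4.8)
The plan is to show that a stationary policy $\vec W\in[0,\bar c]^m$ together with the occupancy measure $\pi(\vec W)$ it induces on the state space $[m]$ encodes exactly the data needed to evaluate the objective and the constraint of \eqref{eq:opt_inf_W}, so that the two optimization problems have the same value and the same optimizers $\vec W$.

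First I would fix a stationary policy $\vec W$ and read its transition structure off Figure~\ref{fig:mc}: from state $\l$ the chain jumps to state $1$ with probability $W_\l$ (a win that converts, resetting the state to $1$) and to state $\min\{m,\l+1\}$ with probability $1-W_\l$. Since the state space is finite and the policy is stationary, the Cesàro-average occupancy measure $\pi_\l(\vec W)=\lim_{H\to\infty}\frac1H\sum_{h=1}^H\Pr{\l_h=\l}$ exists and is a stationary distribution of the chain --- this is the content of \cite[Theorem~8.1.1]{DBLP:books/wi/Puterman94}, already invoked after \eqref{eq:opt_inf}. Because the per-step reward and payment in state $\l$ are $r_m(\l)W_\l$ and $P(W_\l)$, and both are bounded, the same theorem gives
\begin{align*}
    \lim_{H\to\infty}\frac1H\sum_{h=1}^H\Ex[\l_h]{r_m(\l_h)W_{\l_h}} &= \sum_{\l=1}^m r_m(\l)W_\l\,\pi_\l(\vec W),\\
    \lim_{H\to\infty}\frac1H\sum_{h=1}^H\Ex[\l_h]{P(W_{\l_h})} &= \sum_{\l=1}^m P(W_\l)\,\pi_\l(\vec W),
\end{align*}
so the value and feasibility of $\vec W$ in \eqref{eq:opt_inf_W} coincide with those of the pair $(\vec W,\pi(\vec W))$ in \eqref{eq:app:opt_inf_stat}.

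Next I would verify that $\pi(\vec W)$ satisfies the flow-balance equations displayed in \eqref{eq:app:opt_inf_stat}, which are just the global-balance equations of the chain above: state $1$ receives probability mass $W_\l\pi_\l$ from every state $\l$, hence $\pi_1=\sum_\l W_\l\pi_\l$; a state $2\le\l\le m-1$ is entered only from $\l-1$ and only on a loss, hence $\pi_\l=(1-W_{\l-1})\pi_{\l-1}$; state $m$ is entered from $m-1$ on a loss and from $m$ via the self-loop; and $\sum_\l\pi_\l=1$ because $\pi$ is a probability vector. For the converse direction, given any $(\vec W,\pi)$ feasible for \eqref{eq:app:opt_inf_stat}, unwinding the recursion gives $\pi_\l=\pi_1\prod_{j=1}^{\l-1}(1-W_j)$ for $\l\le m-1$ and $W_m\pi_m=\pi_1\prod_{j=1}^{m-1}(1-W_j)$, so $\pi$ is the stationary distribution reached from state $1$ under $\vec W$; running the stationary policy $\vec W$ then realizes this $\pi$ and attains the corresponding value and payment in \eqref{eq:opt_inf_W}. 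Combining the two directions yields $\optInf_m$ equal to the supremum in \eqref{eq:app:opt_inf_stat}.

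The main obstacle is the degenerate case in which the induced chain is not unichain, which happens exactly when some $W_\l\in\{0,1\}$ --- e.g.\ $W_1=1$ makes state $1$ absorbing while $W_m=0$ makes state $m$ absorbing, in which case the balance equations together with normalization admit a one-parameter family of solutions. I would resolve this by fixing the initial condition $\l_1=1$ throughout (the natural choice, since any conversion resets the state to $1$); then $\pi(\vec W)$ is unambiguously the Cesàro limit started at state $1$, it is always a stationary distribution, and it is the solution of the balance equations selected in \eqref{eq:app:opt_inf_stat}. I would also note that the degeneracy is irrelevant to the supremum: the recursion has a unique solution as soon as $W_m>0$, and a policy with $W_m=0$ is either budget-infeasible or dominated or never reaches state $m$ (so the choice of $W_m$ is immaterial), so restricting both \eqref{eq:opt_inf_W} and \eqref{eq:app:opt_inf_stat} to $\vec W$ with $W_m>0$ changes neither side. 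A last, routine point --- justifying the interchange of $\lim_H$ with the expectation --- is already covered by \cite[Theorem~8.1.1]{DBLP:books/wi/Puterman94}, or can be sidestepped by reading every limit above as a Cesàro limit, as the excerpt permits elsewhere.
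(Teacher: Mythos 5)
Your proposal is correct and follows essentially the same route as the paper: for each fixed stationary $\vec W$, identify the Cesàro-averaged state distribution started from state $1$ with the solution of the balance equations in \eqref{eq:app:opt_inf_stat}, and conclude that the limiting average reward and payment equal the corresponding sums against $\pi(\vec W)$ (the paper phrases this via $e_1^\top \bar A \vec f$ and $e_1^\top \bar A \vec q$ with $\bar A = \lim_H \frac1H \sum_h A^{h-1}$, citing Puterman). Your extra care about the degenerate $W_m=0$ case matches the remark the paper places immediately after the lemma statement rather than inside the proof.
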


We note that the equations of optimization problem \eqref{eq:app:opt_inf_stat} uniquely identify the stationary distribution unless $W_m = 0$.
We can avoid this issue by assuming that $W_m$ is at least some infinitesimally small positive constant.
In addition, \cref{lem:bench:incr} shows that there exists an optimal solution where $W_\l$ is non-decreasing in $\l$, in which case $W_m = 0$ implies zero reward.

The proof of the lemma shows that for any stationary policy, the average reward in the objective of optimization problem \ref{eq:app:opt_inf_stat} converges to the stationary distribution of that policy.
Proving the same for the average payment proves the lemma.

\begin{proof}
    Fix some $\vec W$.
    All we have to prove is that for the solution of the equations of $\big\{ \pi_\l(\vec W) \big\}_{\l \in [m]}$ in \eqref{eq:app:opt_inf_stat} it holds that 
    \begin{equation*}
        \lim_{H\to\infty}\frac{1}{H} \sum_{h=1}^H \Ex{ r_m(\l_h) W_{\l_h}}
        =
        \sum_{\l=1}^m \rfunc(\l) W_\l \pi_\l(\vec W)
        \quad \mbox{and} \quad 
        \lim_{H\to\infty}\frac{1}{H} \sum_{h=1}^H \Ex{P(W_{\l_h})}
        =
        \sum_{\l=1}^m P(W_\l) \pi_\l(\vec W)
        ,
    \end{equation*}
    where the expectation is taken over $\l_h$.
    Let $A$ be the (right-stochastic) transition matrix on the state space $[m]$ as defined by $\vec W$.
    Let $\vec f$ and $\vec q$ be the vectors with $m$ elements that have $r(\l) W_\l$ 
    and $P(W_\l)$, respectively, in the $\l$-th entry, and $e_1$ be the unit vector with $1$ in the first coordinate.
    Then we have,
    \begin{align*}
        \lim_{H\to\infty}\frac{1}{H} \sum_{h=1}^H \Ex{r_m(\l_h) W_{\l_h}}
        & =
        \lim_{H\to\infty}\frac{1}{H} \sum_{h=1}^H e_1^\top A^{h-1} \vec f
        =
        e_1^\top \bar A \vec f
        \\
        \lim_{H\to\infty}\frac{1}{H} \sum_{h=1}^H \Ex{P(W_{\l_h})}
        & =
        \lim_{H\to\infty}\frac{1}{H} \sum_{h=1}^H e_1^\top A^{h-1} \vec q
        =
        e_1^\top \bar A \vec q
        ,
    \end{align*}
    where $\bar A = \lim_H\frac{1}{H}\sum_{h \in [H]} A^t$ and the last equality on each line above follows from \cite{DBLP:books/wi/Puterman94} (see the discussion at the start of Section 8.2.1 there) which uses the fact that $\vec f$ and $\vec q$ are bounded and the state space is finite, meaning that $\bar A$ is stochastic.
    This proves the lemma, since $e_1^\top \bar A$ is the vector $\{\pi_\l(\vec W)\}_\l$, as described in \eqref{eq:app:opt_inf_stat} (also see Figure \ref{fig:mc}).
\end{proof}

\subsection{Proof of Lemma \ref{lem:bench:comparison}} \label{ssec:app:bench:comparison}

We first restate the lemma.

\benchcomparison*

\begin{proof}
    Fix the strategy of the optimal algorithm that achieves $\optAlg$; this depends on (i) the current round $t \in [T]$, (ii) how much is the total payment by that round, and (iii) the last win $\l_t$.
    Now consider the infinite-horizon setting, where we follow a strategy inspired by the optimal algorithm in the finite-horizon setting.
    Fix a round $h$ and let $n\in \N\cup\{0\}$ and $t \in [T]$ such that $h = T n + t$.
    In that round $h$, we consider the bid that would have been used in $\optAlg$ if that algorithm was started in round $T n + 1$.
    This means that in round $h = T n + t$, we consider that the algorithm's current round is $t$ and its total payment is the total payment in rounds $T n + 1, T n + 2, \ldots, T n + t-1$.
    What is less obvious to define is the number of rounds since the last winning event.
    If a winning event had happened in some round $T n + t'$ (where $t' \ge 1$), we consider $\l = t - t'$.
    However, if no winning event has happened in rounds $T n + 1, \ldots, T n + t-1$, we consider $\l = t$.
    Note that the actual number of rounds since the last winning event can only be bigger.

    We now calculate the reward and spending levels of the above bidding strategy in the infinite-horizon setting.
    By every round $h = T n + t$, the expected reward is at least $T n \cdot \optAlg$, since in rounds $[T n]$ we have collected at least this much reward in expectation; note here we use the fact that the actual time since the last winning event is at least the one considered by each run of $\optAlg$.
    Given that $T n \ge (h - T)$, the time-averaged expected reward as $h \to \infty$ is at least $\frac{h - T}{h} \optAlg \to \optAlg$.

    Now we verify that we satisfy the budget constraint.
    By round $h = T n + t$, with probability $1$ the total spending is at most $T (n+1) \rho$, since in each interval of $T$ rounds $\optAlg$ spends at most $T \rho$.
    Since $T (n+1) \le h + T$, we have that the average payment as $h \to \infty$ is at most $\frac{h + T}{h} \rho \to \rho$.
    This completes the proof.
\end{proof}

\subsection{Useful Equalities about (\ref{eq:opt_inf_W})} \label{ssec:app:facts}

We now make some very useful definitions.
We will use these throughout the proofs of our results.

\begin{definition} \label{def:bench:additional}
    For any $m \in \N$ and $\vec W \in [0, 1]^m$, we define the following quantities.
    For ease of notation, we define $W_\l = W_m$ for $\l > m$.
    The equalities are proven next.
    \begin{itemize}
        \item $L(\vec W)$ is the expected time between wins with conversions, i.e., the expected return time to state $1$.
        Formally,
        \begin{equation*}
            L(\vec W)
            =
            \sum_{\l = 1}^\infty \l W_\l \prod_{i = 1}^{\l - 1} (1 - W_i)
            =
            \sum_{\l = 1}^\infty \prod_{i = 1}^{\l - 1} (1 - W_i)
        \end{equation*}

        \item $\SR(\vec W)$ is the expected reward of a single win with a conversion, starting from state $1$.
        Formally,
        \begin{align*}
            \SR_m(\vec W)
            & =
            \sum_{\l = 1}^\infty \rfunc_m(\l) W_\l \prod_{i = 1}^{\l - 1} (1 - W_i)
            =
            \sum_{\l = 1}^m \qty( \rfunc_m(\l) - \rfunc_m(\l-1) ) \prod_{i = 1}^{\l - 1} (1 - W_i)
            \\
            & =
            \sum_{\l = 1}^{m-1} \rfunc(\l) W_\l \prod_{i = 1}^{\l - 1} (1 - W_i)
            +
            \rfunc(m) \prod_{i = 1}^{m} (1 - W_i)
        \end{align*}
        When $m$ is clear from the context, we omit the $m$ subscript.
        
        \item $R_m(\vec W)$ is the expected average reward, i.e., the optimization objective of Optimization Problem \eqref{eq:opt_inf_W}.
        Formally,
        \begin{equation*}
            R_m(\vec W)
            =
            \sum_{\l = 1}^m \rfunc(\l) W_\l \pi_\l(\vec W)
            =
            \frac{\SR_m(\vec W)}{L(\vec W)}
        \end{equation*}
        When $m$ is clear from the context, we omit the $m$ subscript.
    
        \item $\SC(\vec W)$ is the expected payment until the first win with conversion, starting from state $1$.
        Formally,
        \begin{equation*}
            \SC(\vec W)
            =
            \sum_{\l = 1}^\infty P(W_\l) \prod_{i = 1}^{\l - 1} (1 - W_i)
            =
            \sum_{\l = 1}^{m-1} P(W_\l) \prod_{i = 1}^{\l - 1} (1 - W_i)
            +
            \frac{P(W_m)}{W_m} \prod_{i = 1}^m (1 - W_i)
        \end{equation*}
    
        \item $C(\vec W)$ as the expected average payment, i.e., the term in the inequality of Optimization Problem \eqref{eq:opt_inf_W}.
        Formally,
        \begin{equation*}
            C(\vec W)
            =
            \sum_{\l = 1}^m P(W_\l) \pi_\l(\vec W)
            =
            \frac{\SC(\vec W)}{L(\vec W)}
        \end{equation*}

        \item $\rea_\l(\vec W)$ as the probability of not getting a win with conversion at least $\l$ times, starting from state $1$.
        Formally, for any $\l \in \N$
        \begin{equation*}
            \rea_\l(\vec W)
            =
            \prod_{i=1}^{\l-1} \qty\big( 1 - W_i )
        \end{equation*}
    \end{itemize}
\end{definition}

For the rest of this subsection we prove the above equalities.
We fix an $m\in\N$ and $\vec W$, and often drop the $(\vec W)$ notation from all the quantities.
Recall that we define $W_\l = W_m$ for $\l > m$.

\begin{proposition}[Return time to $1$] \label{cl:learn:L}
    It holds
    \begin{align*}
        L(\vec W)
        & =
        \sum_{\l = 1}^\infty \l W_{\min\{\l,m\}} \rea_\l(\vec W)
        \\
        & =
        \sum_{\l = 1}^{m-1} \l W_\l \rea_\l(\vec W)
        +
        \qty(
            m - 1 + \frac{1}{W_m}
        ) \rea_m
        \\
        & =
        \sum_{\l = 1}^{m-1} \rea_\l(\vec W)
        +
        \frac{1}{W_m} \rea_m(\vec W)
    \end{align*}
\end{proposition}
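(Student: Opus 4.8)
The plan is to read $L(\vec W)$ as the expectation of the first return time $N$ to state $1$ in the Markov chain of Figure~\ref{fig:mc}, and to evaluate this expectation in two standard ways: one that produces the first two displayed lines, and one that produces the third.

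First I would compute the distribution of $N$. Starting from state $1$, the chain stays on the ``losing'' path through states $1, 2, \ldots$ until the first win with conversion; because the state is capped at $m$, having $\l$ rounds elapsed since the last conversion corresponds to MDP-state $\min\{\l,m\}$. Hence $\{N=\l\}$ is precisely the event of losing in states $1,\ldots,\l-1$ and then winning in state $\min\{\l,m\}$, which has probability $W_{\min\{\l,m\}}\prod_{i=1}^{\l-1}(1-W_i) = W_{\min\{\l,m\}}\rea_\l(\vec W)$; with the paper's convention $W_\l = W_m$ for $\l>m$ this is $W_\l\,\rea_\l(\vec W)$, so $L(\vec W)=\E[N]=\sum_{\l\ge1}\l\,W_{\min\{\l,m\}}\rea_\l(\vec W)$, which is the first line.

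For the second line I would split the sum at $\l=m$: the terms with $\l\le m-1$ are already of the stated form, and for $\l\ge m$ I use $\rea_\l(\vec W)=\rea_m(\vec W)(1-W_m)^{\l-m}$, so the tail equals $W_m\rea_m(\vec W)\sum_{j\ge0}(m+j)(1-W_m)^j$. Evaluating $\sum_{j\ge0}(1-W_m)^j=1/W_m$ and $\sum_{j\ge0}j(1-W_m)^j=(1-W_m)/W_m^2$ (both valid when $W_m>0$) and simplifying gives the tail $\qty(m-1+\tfrac1{W_m})\rea_m(\vec W)$, as claimed. For the third line I would instead use the tail-sum identity $\E[N]=\sum_{\l\ge1}\Pr{N\ge\l}$; since $\{N\ge\l\}$ is exactly the event of losing in states $1,\ldots,\l-1$, it has probability $\rea_\l(\vec W)$, so $L(\vec W)=\sum_{\l\ge1}\rea_\l(\vec W)$, and peeling off $\sum_{\l\ge m}\rea_\l(\vec W)=\rea_m(\vec W)\sum_{j\ge0}(1-W_m)^j=\rea_m(\vec W)/W_m$ yields the last expression. (Alternatively, one passes from the second line to the third by Abel summation.)

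There is no serious obstacle here; the only points that need a word of care are the convergence of the geometric series --- equivalently $W_m>0$, which we may assume since $W_m=0$ makes $L$ infinite and the reward zero --- and the bookkeeping around the $\min\{\l,m\}$ cap when splitting each sum at $\l=m$.
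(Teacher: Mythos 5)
Your proposal is correct, and for the first two displayed equalities it matches the paper's proof exactly: the first line is read off from the definition of $L(\vec W)$ as the expected return time to state $1$ (you additionally spell out the distribution of that return time, which the paper leaves implicit), and the second line comes from splitting the sum at $\l = m$ and summing the same geometric series $\sum_{j\ge 0}(m+j)(1-W_m)^j$. Where you diverge is the third equality: the paper derives it \emph{from the second line} by an algebraic recursion, showing $(\l-1)\rea_\l = \sum_{i=1}^{\l-1}\rea_i - \sum_{i=1}^{\l-1} i W_i\rea_i$ for $\l\le m$ and instantiating at $\l=m$ (essentially the Abel-summation route you mention as an alternative), whereas you obtain it directly from the tail-sum formula $\E[N]=\sum_{\l\ge 1}\Pr{N\ge\l}$ together with $\Pr{N\ge\l}=\rea_\l(\vec W)$, then peel off the geometric tail. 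Your route is shorter and more transparent probabilistically; the paper's route has the minor virtue of staying entirely within algebraic manipulations of the already-established second line and of producing the intermediate identity for all $\l\in[2,m]$, but the two arguments establish the same statement under the same harmless caveat $W_m>0$.
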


\begin{proof}
    The first equality holds by definition of $L$.
    The second equality follows from the first one by noticing that
    \begin{alignat*}{3}
        \Line{
            \sum_{\l=m}^\infty \l W_m \rea_\l
        }{=}{
            W_m \rea_m \sum_{\l=m}^\infty \l \qty\big( 1 - W_m )^{\l - m}
        }{}
        \\
        \Line{}{=}{
            W_m \rea_m \frac{1 - W_m + m W_m}{W_m^2}
        }{}
    \end{alignat*}

    The third equality follows from the second one by noticing that for every $\l \in [2,m]$
    \begin{align*}
        (\l - 1) \rea_\l
        = &
            (\l - 1)  (1- W_{\l-1} )\rea_{\l-1}
           \\
       = & \rea_{\l-1} + (\l - 2) \rea_{\l-1} - (\l - 1) W_{\l-1} \rea_{\l-1}
    \end{align*}
    which, if applied recursively proves that for every $\l \in [2, m]$
    \begin{equation*}
        (\l - 1) \rea_\l
        =
        \sum_{i = 1}^{\l-1} \rea_i - \sum_{i = 1}^{\l-1} i W_i \rea_i
    \end{equation*}
    Applying the above for $\l = m$ to the second equation of Claim~\ref{cl:learn:L} we get the third inequality.
\end{proof}

\begin{proposition}[Stationary distribution] \label{cl:learn:pi}
    It holds
    \begin{align*}
        \l < m: \qquad
        \pi_\l(\vec W)
        &=
        \frac{1}{L} \rea_\l(\vec W)
        \\
        \pi_m(\vec W)
        &= 
        \frac{1}{L(\vec W)} \sum_{i = m}^\infty \rea_i(\vec W)
        =
        \frac{1}{L(\vec W)} \frac{1}{W_m} \rea_m(\vec W)
    \end{align*}
\end{proposition}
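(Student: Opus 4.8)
The plan is to recognize that, by definition, the vector $\{\pi_\l(\vec W)\}_{\l\in[m]}$ is the unique probability distribution solving the balance equations displayed in \eqref{eq:app:opt_inf_stat}; uniqueness holds because, under the running assumption $W_m>0$ noted after that display, the chain of \cref{fig:mc} is irreducible on its reachable states. So it suffices to produce a probability vector satisfying those equations and observe that it coincides with the claimed closed forms. Concretely, I would set $\tilde\pi_\l = \rea_\l(\vec W)/L(\vec W)$ for $\l<m$ and $\tilde\pi_m = \rea_m(\vec W)/\bigl(W_m L(\vec W)\bigr)$, and verify the four families of equations in \eqref{eq:app:opt_inf_stat} one at a time.

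For the transient recursions $\pi_\l = (1-W_{\l-1})\pi_{\l-1}$ with $2\le\l\le m-1$, the check is immediate from the definition $\rea_\l(\vec W)=\prod_{i=1}^{\l-1}(1-W_i)=(1-W_{\l-1})\,\rea_{\l-1}(\vec W)$. For the capped state, the balance equation is $\pi_m=(1-W_{m-1})\pi_{m-1}+(1-W_m)\pi_m$; rearranging to $W_m\pi_m=(1-W_{m-1})\pi_{m-1}$ and substituting $\tilde\pi_{m-1}$, together with $\rea_m(\vec W)=(1-W_{m-1})\,\rea_{m-1}(\vec W)$, gives exactly $\tilde\pi_m$. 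For the return equation $\pi_1=\sum_{\l=1}^m W_\l\pi_\l$, the key identity is the telescoping sum $\sum_{\l=1}^{m-1}W_\l\rea_\l(\vec W)=\sum_{\l=1}^{m-1}\bigl(\rea_\l(\vec W)-\rea_{\l+1}(\vec W)\bigr)=1-\rea_m(\vec W)$, using $\rea_1(\vec W)=1$ (empty product); hence $\sum_{\l=1}^{m-1}W_\l\tilde\pi_\l+W_m\tilde\pi_m=\bigl(1-\rea_m(\vec W)\bigr)/L(\vec W)+\rea_m(\vec W)/L(\vec W)=1/L(\vec W)=\tilde\pi_1$. Finally, normalization $\sum_{\l=1}^m\tilde\pi_\l=\bigl(\sum_{\l=1}^{m-1}\rea_\l(\vec W)+\rea_m(\vec W)/W_m\bigr)/L(\vec W)=1$ is precisely the third expression for $L(\vec W)$ from \cref{cl:learn:L}.

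It remains to record the second form of $\pi_m$. I would invoke the convention $W_i=W_m$ for $i\ge m$, so that $\rea_{m+j}(\vec W)=\rea_m(\vec W)(1-W_m)^j$, and sum the geometric series: $\sum_{i=m}^\infty\rea_i(\vec W)=\rea_m(\vec W)\sum_{j\ge0}(1-W_m)^j=\rea_m(\vec W)/W_m$, which matches $\tilde\pi_m L(\vec W)$.

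All of the above is elementary bookkeeping; the only point demanding a bit of care is the capped state $m$ — its self-loop is what produces the extra factor $1/W_m$, and it is also the reason one must assume $W_m>0$ so that the stationary distribution is well defined and unique (the degenerate case being explicitly set aside in the text). I do not anticipate any genuine obstacle beyond making these two observations precise.
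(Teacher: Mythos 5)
Your proposal is correct and follows essentially the same route as the paper: plug the claimed closed forms into the balance equations of \eqref{eq:app:opt_inf_stat} and verify them using the recursion $\rea_\l = (1-W_{\l-1})\rea_{\l-1}$, the third expression for $L(\vec W)$ from \cref{cl:learn:L}, and the geometric sum for the capped state. The only cosmetic difference is that you verify all $m+1$ equations (including the return equation via telescoping), whereas the paper checks only $m$ of them and appeals to the over-determinacy of the system.
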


\begin{proof}
    The second inequality of $\pi_m$ holds because for $i \ge m$, $\rea_i = \rea_m (1 - W_m)^{i-m}$.

    We need to prove that the above satisfies $m$ out of the $m+1$ equalities of \eqref{eq:app:opt_inf_stat} (since that system is over-defined).
    It holds that $\sum_\l \pi_\l = 1$, by the third equality of Claim~\ref{cl:learn:L}.
    Since $\rea_\l = \rea_{\l-1} (1 - W_{\l-1})$ the above satisfy $\pi_\l = (1 - W_{\l-1}) \pi_{\l-1}$ for $2 \le \l \le m-1$, as needed in \eqref{eq:app:opt_inf_stat}.
    The fact that $\rea_m = \rea_{m-1} (1 - W_{m-1})$ proves that
    \begin{equation*}
        \pi_m
        =
        (1 - W_{m-1}) \pi_{m-1}
        +
        (1 - W_m) \pi_m
        \iff
        W_m \pi_m
        =
        (1 - W_{m-1}) \pi_{m-1}
    \end{equation*}
\end{proof}

\begin{proposition} \label{cl:learn:rew_cos}
    It holds
    \begin{align*}
        R(\vec W)
        &=
        \frac{1}{L} \qty(
            \sum_{\l=1}^{m-1} r(\l) W_\l \rea_\l(\vec W)
            +
            r(m) \, \rea_m(\vec W)
        )
        \\
        &=
        \sum_{\l = 1}^{m-1} \qty\big(r(\l) - r(\l-1)) \pi_\l(\vec W) + \qty\big(r(m) - r(m-1)) W_m \pi_m(\vec W)
        \\
        C(\vec W)
        &=
        \frac{1}{L} \qty(
            \sum_{\l=1}^{m-1} P(W_\l) \rea_\l(\vec W)
            +
            \frac{P(b_m)}{W_m} \rea_m(\vec W)
        )
    \end{align*}
\end{proposition}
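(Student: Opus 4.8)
The three identities are bookkeeping consequences of \cref{def:bench:additional} combined with the closed forms for the return time and the stationary distribution established in \cref{cl:learn:L,cl:learn:pi}. The one tool I would use repeatedly is the telescoping identity $W_\l\,\rea_\l(\vec W)=\rea_\l(\vec W)-\rea_{\l+1}(\vec W)$, which holds for every $\l$ under the convention $W_\l=W_m$ for $\l>m$; it lets one collapse the geometric tail coming from state $m$ and, paired with $\rfunc(0)=0$, perform a summation by parts against $\rfunc(\cdot)$. Throughout I use that $\rea_\l(\vec W)\to 0$ as $\l\to\infty$, which holds since $W_m>0$ (or, in the degenerate case, because we take $W_m$ bounded below by an infinitesimal positive constant, as noted for the stationary distribution in \cref{ssec:app:bench:stationary}).

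For the first identity I would start from $R(\vec W)=\SR_m(\vec W)/L(\vec W)$ and rewrite $\SR_m(\vec W)=\sum_{\l\ge 1}\rfunc_m(\l)\,W_\l\,\rea_\l(\vec W)$ by splitting the sum at $\l=m$: on the tail both $\rfunc_m(\l)=\rfunc(m)$ and $W_\l=W_m$ are constant, so $\sum_{\l\ge m}\rfunc(m)\,W_m\,\rea_\l(\vec W)=\rfunc(m)\sum_{\l\ge m}\bigl(\rea_\l(\vec W)-\rea_{\l+1}(\vec W)\bigr)=\rfunc(m)\,\rea_m(\vec W)$. This gives $\SR_m(\vec W)=\sum_{\l=1}^{m-1}\rfunc(\l)W_\l\,\rea_\l(\vec W)+\rfunc(m)\,\rea_m(\vec W)$, hence the claimed formula for $R(\vec W)$. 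The formula for $C(\vec W)$ comes out the same way from $C(\vec W)=\SC(\vec W)/L(\vec W)$: in $\SC(\vec W)=\sum_{\l\ge 1}P(W_\l)\,\rea_\l(\vec W)$ the tail is a plain geometric sum, $\sum_{\l\ge m}P(W_m)\,\rea_m(\vec W)(1-W_m)^{\l-m}=\tfrac{P(W_m)}{W_m}\,\rea_m(\vec W)$, with $P(W_m)=P(b_m)$ for the bid $b_m$ realizing the state-$m$ payment, which is exactly the third displayed equality.

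For the second identity I would substitute the closed forms from \cref{cl:learn:pi}, namely $\pi_\l(\vec W)=\rea_\l(\vec W)/L(\vec W)$ for $\l<m$ and $W_m\pi_m(\vec W)=\rea_m(\vec W)/L(\vec W)$, into the right-hand side; it becomes $\tfrac{1}{L(\vec W)}\sum_{\l=1}^{m}\bigl(\rfunc(\l)-\rfunc(\l-1)\bigr)\rea_\l(\vec W)$. It then remains to check that this equals $\tfrac{1}{L(\vec W)}\SR_m(\vec W)$, i.e.\ that $\SR_m(\vec W)=\sum_{\l=1}^m\bigl(\rfunc_m(\l)-\rfunc_m(\l-1)\bigr)\rea_\l(\vec W)$; this is the middle of the three equivalent expressions for $\SR_m$ in \cref{def:bench:additional}, which itself follows by summation by parts on $\sum_{\l\ge1}\rfunc_m(\l)W_\l\rea_\l(\vec W)$ using $W_\l\rea_\l(\vec W)=\rea_\l(\vec W)-\rea_{\l+1}(\vec W)$, $\rfunc(0)=0$, and $\rfunc_m(\l)-\rfunc_m(\l-1)=0$ for $\l>m$. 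Equivalently, one verifies directly that $\sum_{\l=1}^m(\rfunc(\l)-\rfunc(\l-1))\rea_\l(\vec W)=\sum_{\l=1}^{m-1}\rfunc(\l)W_\l\rea_\l(\vec W)+\rfunc(m)\rea_m(\vec W)$, exhibiting the second identity as a reindexing of the first.

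There is no real obstacle in this proposition; the only points needing care are the convergence (and vanishing) of the geometric tails at state $m$ — which is why the $W_m>0$ convention matters — and keeping the index conventions consistent, in particular $\rea_\l(\vec W)=\prod_{i=1}^{\l-1}(1-W_i)$ (so $\rea_{\l+1}(\vec W)=(1-W_\l)\rea_\l(\vec W)$) together with $W_\l=W_m$ and $\rfunc_m(\l)=\rfunc(m)$ for $\l>m$, since a single off-by-one slip turns the telescoping sums into the wrong expressions.
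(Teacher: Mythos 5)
Your computations are all correct, and this is essentially the paper's own argument: the paper obtains the first and third displays by plugging the closed forms $\pi_\l(\vec W)=\rea_\l(\vec W)/L$ (for $\l<m$) and $\pi_m(\vec W)=\rea_m(\vec W)/(L\,W_m)$ from \cref{cl:learn:pi} into the definitions $R(\vec W)=\sum_\l r(\l)W_\l\pi_\l(\vec W)$ and $C(\vec W)=\sum_\l P(W_\l)\pi_\l(\vec W)$, and gets the second display by the same Abel summation you perform, just carried out directly on the $\pi_\l$'s via $W_\l\pi_\l=\pi_\l-\pi_{\l+1}$ and $W_m\pi_m=(1-W_{m-1})\pi_{m-1}$ rather than on the $\rea_\l$'s. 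The only thing to adjust is your stated starting point $R=\SR_m/L$ and $C=\SC/L$: in the paper's logical ordering these renewal-reward identities are among the equalities of \cref{def:bench:additional} that are established only \emph{as a consequence} of this proposition, so taking them as premises is mildly circular — but the fix is immediate, since the substitutions $\pi_\l=\rea_\l/L$ and $W_m\pi_m=\rea_m/L$ that you already invoke for the second display yield the first and third displays directly from the primary definitions of $R$ and $C$ as stationary averages.
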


\begin{proof}
    The first equality of each quantity follow by the definitions of $R$ and $C$ and \cref{cl:learn:pi}.
    For the second equality of $R$ we have that
    \begin{alignat*}{3}
        \Line{
            R
        }{=}{
            \sum_{\l=1}^m r(\l) W_\l \pi_\l
        }{}
        \\
        \Line{}{=}{
            \sum_{\l=1}^{m-1} r(\l) \qty( \pi_\l - \pi_{\l+1} )
            +
            r(m-1) \qty( \pi_{m-1} - W_m \pi_m )
            +
            r(m) W_m \pi_m
            \hspace{-4pt}
        }{\l \le m-2 : \pi_{\l+1} = (1 - W_\l)\pi_\l\\\pi_m = (1-W_{m-1})\pi_{m-1} + (1-W_{m})\pi_{m}}
        \\
        \Line{}{=}{
            \sum_{\l = 1}^{m-1} \qty\big( r(\l) - r(\l-1) ) \pi_\l
            +
            \qty\big( r(m) - r(m-1) ) W_m \pi_m
        }{r(0) = 0}
    \end{alignat*}
\end{proof}

\begin{corollary}
    The equalities about $\SR$ and $\SC$ in \cref{def:bench:additional} follow by combining their definition with Claims \ref{cl:learn:pi} and \ref{cl:learn:rew_cos}.
\end{corollary}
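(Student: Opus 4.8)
The plan is to verify the displayed identities for $\SR_m(\vec W)$ and $\SC(\vec W)$ in \cref{def:bench:additional} directly, feeding the stationary-distribution formulas of \cref{cl:learn:pi} into the rewritten forms of $R$ and $C$ in \cref{cl:learn:rew_cos} and then simplifying the tail with two elementary facts about $\rea_\l(\vec W)=\prod_{i=1}^{\l-1}(1-W_i)$. First I would record these facts. From the definition, $\rea_{\l+1}(\vec W)=(1-W_\l)\rea_\l(\vec W)$, so $W_\l\rea_\l(\vec W)=\rea_\l(\vec W)-\rea_{\l+1}(\vec W)$; and since $W_\l=W_m$ for all $\l\ge m$, we get $\rea_\l(\vec W)=\rea_m(\vec W)(1-W_m)^{\l-m}$ for $\l\ge m$, hence $\sum_{\l\ge m}\rea_\l(\vec W)=\rea_m(\vec W)/W_m$ as a geometric series (here $W_m>0$, which is without loss of generality, since by \cref{cor:bench:increasing} a non-decreasing optimal solution with $W_m=0$ has zero reward, and otherwise $W_m$ can be taken to be an infinitesimal positive constant).

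For $\SR_m$: substituting $\pi_\l(\vec W)=\rea_\l(\vec W)/L(\vec W)$ for $\l<m$ and $\pi_m(\vec W)=\rea_m(\vec W)/(W_m L(\vec W))$ from \cref{cl:learn:pi} into $R_m(\vec W)=\sum_{\l=1}^m r(\l)W_\l\pi_\l(\vec W)$---equivalently, reading off the first line of \cref{cl:learn:rew_cos}---gives $R_m(\vec W)\,L(\vec W)=\sum_{\l=1}^{m-1} r(\l)W_\l\rea_\l(\vec W)+r(m)\rea_m(\vec W)$. This is exactly the third displayed expression for $\SR_m(\vec W)$, and it simultaneously establishes the relation $R_m=\SR_m/L$. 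Applying the geometric fact to the indices $\l\ge m$ rewrites the right-hand side as $\sum_{\l=1}^{\infty} r_m(\l)W_\l\rea_\l(\vec W)$, which is the first expression. Finally, substituting $W_\l\rea_\l(\vec W)=\rea_\l(\vec W)-\rea_{\l+1}(\vec W)$ and summing by parts---legitimate because $r_m(\l)\rea_{\l+1}(\vec W)\to 0$ geometrically, as $r_m$ is eventually constant---turns it into $\sum_{\l=1}^{\infty}\big(r_m(\l)-r_m(\l-1)\big)\rea_\l(\vec W)$, and this truncates to $1\le\l\le m$ since $r_m(\l)=r_m(\l-1)$ for $\l>m$; that is the second expression.

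The argument for $\SC$ is the same and shorter. Plugging \cref{cl:learn:pi} into $C(\vec W)=\sum_{\l=1}^m P(W_\l)\pi_\l(\vec W)$, or just reading off the last line of \cref{cl:learn:rew_cos}, yields $C(\vec W)\,L(\vec W)=\sum_{\l=1}^{m-1} P(W_\l)\rea_\l(\vec W)+\frac{P(W_m)}{W_m}\rea_m(\vec W)$, which is the second displayed expression for $\SC(\vec W)$ and also gives $C=\SC/L$; the geometric fact converts the last term into $\sum_{\l\ge m}P(W_m)\rea_\l(\vec W)=\sum_{\l\ge m}P(W_\l)\rea_\l(\vec W)$, producing the first expression $\sum_{\l=1}^{\infty}P(W_\l)\rea_\l(\vec W)$. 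Everything here is elementary; the only step warranting a moment's care is the summation-by-parts used for $\SR_m$, and even that is immediate because $r_m$ is eventually constant and $\rea_\l(\vec W)$ decays geometrically, so I do not anticipate any genuine obstacle.
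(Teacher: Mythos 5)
Your proof is correct and follows exactly the route the paper intends: the paper's entire ``proof'' of this corollary is the one-line assertion that the identities follow by substituting \cref{cl:learn:pi} into \cref{cl:learn:rew_cos}, and your write-up supplies precisely the missing details (the telescoping identity $W_\l\rea_\l = \rea_\l - \rea_{\l+1}$, the geometric tail sum $\sum_{\l\ge m}\rea_\l = \rea_m/W_m$, and the summation by parts justified by $r_m$ being eventually constant). As a side note, your computation also implicitly corrects an off-by-one typo in the paper's displayed formulas for $\SR_m$ and $\SC$, whose final products should read $\prod_{i=1}^{m-1}(1-W_i)=\rea_m(\vec W)$ rather than $\prod_{i=1}^{m}(1-W_i)$, consistent with \cref{cl:learn:rew_cos}.
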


\subsection{Proof of Lemma \ref{lem:bench:bias}} \label{ssec:app:bench:bias}

We first restate the lemma.

\benchbias*

\begin{proof}
    We fix a $\lambda$ and drop $(\lambda)$ from all notation.
    We will prove the lemma using induction on the state $\l$, starting from state $m$.

    For $\l = m$, the claim is trivial because $h_{m+1}^* = h_m^*$: \eqref{eq:42} becomes $r(m) \ge r(m)$.

    Now fix $ \l \in [m-1]$.
    Using the Bellman optimality condition \eqref{eq:bell} we have,
    \begin{alignat*}{3}
        \Line{
            h^*_{\l+1} + g^*
        }{=}{
            \max_W\qty\Big[
                W r(\l+1) - \lambda P(W) + (1 - W) h_{\l+2}^*
            ]
        }{}
        \\
        \Line{}{\le}{
            \max_W\qty\Big[
                W r(\l+1) - \lambda P(W) + (1 - W) \qty\big( h^*_{\l+1} + r(\l+2) - r(\l+1) )
            ]
        }{\text{ind. hypothesis, \eqref{eq:42}}}
        \\
        \Line{}{\le}{
            \max_W\qty\Big[
                W r(\l+1) - \lambda P(W) + (1 - W) \qty\big( h^*_{\l+1} + r(\l+1) - r(\l) )
            ]
        }{r(\l+2) - r(\l+1) \ge\\ r(\l+1) - r(\l)}
        \\
        \Line{}{=}{
            \max_W\qty\Big[
                W r(\l) - \lambda P(W) + (1 - W) h^*_{\l+1}
            ]
            + r(\l+1) - r(\l)
        }{}
        \\
        \Line{}{=}{
            h^*_{\l} + g^*
            +
            r(\l + 1) - r(\l)
        }{\text{using \eqref{eq:bell}}}
    \end{alignat*}
    which proves the lemma.
\end{proof}

\subsection{Proving that the Optimal Constrained Winning Probabilities are Increasing} \label{ssec:app:increasing}

First, we complete the proof of \cref{lem:bench:incr}.

\benchincr*

\begin{proof}
    We fix a $\lambda$ and drop $(\lambda)$ from all notation.
    Fix $\l \in [m-1]$.
    By the Bellman optimality, we have that
    \begin{align*}
        W_\l^* r(\l) - \lambda P(W_\l^*) + (1 - W_\l^*) h_{\l+1}^*
        \ge
        W_{\l+1}^*r(\l) - \lambda P(W_{\l+1}^*) + (1 - W_{\l+1}^*) h_{\l+1}^*
    \end{align*}
    and that
    \begin{align*}
        W_{\l+1}^* r(\l+1) - \lambda P(W_{\l+1}^*) + (1 - W_{\l+1}^*) h_{\l+2}^*
        \ge
        W_\l^* r(\l+1) - \lambda P(W_\l^*) + (1 - W_\l^*) h_{\l+2}^*
    \end{align*}

    Adding the above two inequalities and rearranging, we get
    \begin{equation*}
        \qty\Big(
            h_{\l+1}^* - h_{\l+2}^* + r(\l+1) - r(\l)
        )
        \qty\Big(
            W_{\l+1}^* - W_\l^*
        )
        \ge 0.
    \end{equation*}
    We proceed to prove that the term in the left parentheses is strictly positive, which does imply $W_{\l+1}^* \ge W_\l^*$.
    Using Lemma~\ref{lem:bench:bias} we get
    \begin{equation*}
        h_{\l+1}^* - h_{\l+2}^* + r(\l+1) - r(\l)
        \ge
        r(\l+1) - r(\l+2) + r(\l+1) - r(\l)
        >
        0,
    \end{equation*}
    where the strict equality holds by strict concavity of $r$: $r(\l+1) - r(\l) > r(\l+2) - r(\l+1)$.
\end{proof}

Given that any optimal solution $\vec W^*(\lambda)$ is non-decreasing for any $\lambda$, we want to prove that the optimal solution of the constrained problem, $\vec W^*$, is also non-decreasing.
We get this as a corollary of \cref{lem:bench:incr}, since the Optimization problem \eqref{eq:opt_inf_W} can be re-written as a Linear Program, making the $\vec W^*$ optimal for the Lagrangian problem of some multiplier $\lambda$.
We re-state this result and prove it for completeness.

\benchincreasing*

The proof will follow from the following lemmas, which we present next.
Recall the notation of \cref{def:bench:additional}: $R(\vec W)$ and $C(\vec W)$ are the expected average-time reward and payment, respectively, under the vector of winning probabilities $\vec W$.
This means that for any $\vec W$, the objective of $\vec W$ for the Lagrange problem with multiplier $\lambda$ is $R(\vec W) - \lambda C(\vec W)$.

We start with a simple and intuitive lemma: the average payment of solution $\vec W^*(\lambda)$ cannot increase as $\lambda$ gets larger.

\begin{lemma}
    The function $C(\vec W^*(\lambda))$ is non-increasing in $\lambda$.
\end{lemma}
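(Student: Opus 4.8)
The plan is to use the standard exchange (``four-point'') argument for the value function of a parametric optimization problem. Fix two multipliers $0 \le \lambda_1 < \lambda_2$ and let $\vec W_1 = \vec W^*(\lambda_1)$ and $\vec W_2 = \vec W^*(\lambda_2)$ be optimal solutions of the corresponding Lagrangian problems; that is, $\vec W_i$ maximizes $R(\vec W) - \lambda_i C(\vec W)$ over $\vec W \in [0,\bar c]^m$. By optimality of $\vec W_1$ for the $\lambda_1$-problem and of $\vec W_2$ for the $\lambda_2$-problem,
\begin{align*}
    R(\vec W_1) - \lambda_1 C(\vec W_1) &\ge R(\vec W_2) - \lambda_1 C(\vec W_2), \\
    R(\vec W_2) - \lambda_2 C(\vec W_2) &\ge R(\vec W_1) - \lambda_2 C(\vec W_1).
\end{align*}
Adding these two inequalities cancels the reward terms and, after rearranging, yields $(\lambda_2 - \lambda_1)\bigl(C(\vec W_1) - C(\vec W_2)\bigr) \ge 0$. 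Since $\lambda_2 - \lambda_1 > 0$, this forces $C(\vec W_1) \ge C(\vec W_2)$, which is exactly the claimed monotonicity of $\lambda \mapsto C(\vec W^*(\lambda))$.

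A point worth highlighting is that the argument in fact gives slightly more than the stated claim: it uses no particular selection rule for $\vec W^*(\lambda)$, so for $\lambda_1 < \lambda_2$ \emph{every} optimal solution at $\lambda_1$ has expected average payment at least as large as that of \emph{every} optimal solution at $\lambda_2$. In particular, $C(\vec W^*(\lambda))$ is well defined up to this ambiguity, and the statement holds in this stronger, selection-independent sense. The same three-line computation also extends to a comparison of the Lagrangian value $R(\vec W^*(\lambda)) - \lambda C(\vec W^*(\lambda))$ if that is needed later, but here I only need the monotonicity of $C$.

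The only step that requires some care is the existence of the maximizers $\vec W^*(\lambda)$: as noted in \cref{ssec:calc:single}, for some price/context distributions the payment function $P(\cdot)$ — and hence the Lagrangian objective — need not attain its optimum, only approach it. I would handle this by running the same exchange argument with $\e$-optimal solutions $\vec W_i^\e$ in place of $\vec W_i$, which gives $(\lambda_2 - \lambda_1)\bigl(C(\vec W_1^\e) - C(\vec W_2^\e)\bigr) \ge -O(\e)$, and then letting $\e \to 0$; the weak inequality $C(\vec W_1) \ge C(\vec W_2)$ survives the limit, which suffices. (If one works throughout under an assumption guaranteeing the optima are attained, this extra step is unnecessary.) I do not expect a genuine obstacle here: this is a routine convexity-of-the-value-function fact, and the ``hard part'' amounts to nothing more than being precise about which solution the notation $\vec W^*(\lambda)$ refers to.
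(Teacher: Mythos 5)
Your proof is correct and is essentially identical to the paper's: the paper uses the same two optimality inequalities (for multipliers $\lambda$ and $\lambda+\e$), adds them to cancel the reward terms, and concludes $\e\, C(\vec W^*(\lambda)) \ge \e\, C(\vec W^*(\lambda+\e))$. Your additional remarks on selection-independence and on handling non-attained suprema via $\e$-optimal solutions are sensible refinements but do not change the argument.
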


\begin{proof}
    Fix $\lambda \ge 0$ and $\e > 0$.
    We will prove that $C(\vec W^*(\lambda)) \le C(\vec W^*(\lambda + \e))$.
    First we use the fact that $\vec W^*(\lambda)$ is optimal for the Lagrangian problem with multiplier $\lambda$.
    Specifically, we use that its objective value is larger than the one of $\vec W^*(\lambda + \e)$:
    \begin{equation*}
        R(\vec W^*(\lambda)) - \lambda C(\vec W^*(\lambda))
        \ge
        R(\vec W^*(\lambda + \e)) - \lambda C(\vec W^*(\lambda + \e))
    \end{equation*}

    Similarly, we use the optimality of $\vec W^*(\lambda + \e)$ when the Lagrange multiplier is $\lambda+\e$:
    \begin{equation*}
        R(\vec W^*(\lambda + \e)) - (\lambda+\e) C(\vec W^*(\lambda + \e))
        \ge
        R(\vec W^*(\lambda)) - (\lambda+\e) C(\vec W^*(\lambda))
    \end{equation*}

    Adding the above two inequalities, we get
    \begin{equation*}
        \e C(\vec W^*(\lambda)) \ge \e C(\vec W^*(\lambda + \e))
    \end{equation*}
    which proves the lemma since $\e > 0$.
\end{proof}

Given the above lemma we can define $\lambda_0$ such that:
\begin{equation*}
    \forall \lambda > \lambda_0 : C\qty(\vec W^*(\lambda)) \le \rho
    \quad \text{ and } \quad
    \forall \lambda < \lambda_0 : C\qty(\vec W^*(\lambda)) \ge \rho
\end{equation*}

Given the above we have the following simple observation.

\begin{lemma} \label{lem:bench:case1}
    If $C(\vec W^*_{\lambda_0}) = \rho$ then $\vec W^*$ is an optimal solution for the Lagrangian problem with $\lambda = \lambda_0$.
\end{lemma}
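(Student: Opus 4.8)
The plan is to run the textbook complementary-slackness argument, which here needs only \emph{weak} duality (so no convexity or Slater-type condition is required). Write $D(\lambda_0) := \sup_{\vec W \in [0,\bar c]^m}\bigl(R(\vec W) - \lambda_0 C(\vec W)\bigr)$ for the value of the Lagrangian problem, so that $D(\lambda_0) = R(\vec W^*(\lambda_0)) - \lambda_0 C(\vec W^*(\lambda_0))$ by the definition of $\vec W^*(\lambda_0)$, and recall (from \cref{def:bench:additional} together with the stationary-distribution reformulation in \cref{ssec:app:bench:stationary}) that $R$ and $C$ are well-defined on all of $[0,\bar c]^m$ and that $\optInf_m = \sup\{R(\vec W) : C(\vec W) \le \rho\} = R(\vec W^*)$.

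First I would record weak duality: for every feasible $\vec W$ (i.e.\ $C(\vec W)\le\rho$), since $\lambda_0\ge 0$ and $R(\vec W)-\lambda_0 C(\vec W)\le D(\lambda_0)$,
\[
R(\vec W) \;\le\; R(\vec W) - \lambda_0\bigl(C(\vec W) - \rho\bigr) \;\le\; D(\lambda_0) + \lambda_0 \rho ,
\]
hence $\optInf_m \le D(\lambda_0) + \lambda_0\rho$. Next I would plug in the hypothesis $C(\vec W^*(\lambda_0)) = \rho$, which turns $D(\lambda_0)+\lambda_0\rho$ into $R(\vec W^*(\lambda_0))$, giving $\optInf_m \le R(\vec W^*(\lambda_0))$; but $\vec W^*(\lambda_0)$ is itself feasible for the constrained problem (its cost equals $\rho$), so the reverse inequality is automatic and therefore $R(\vec W^*) = \optInf_m = R(\vec W^*(\lambda_0))$. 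Finally I would evaluate the Lagrangian objective at $\vec W^*$: using $C(\vec W^*)\le\rho$, $\lambda_0\ge 0$, and the equalities just obtained,
\[
R(\vec W^*) - \lambda_0 C(\vec W^*) \;\ge\; R(\vec W^*) - \lambda_0\rho \;=\; R(\vec W^*(\lambda_0)) - \lambda_0 C(\vec W^*(\lambda_0)) \;=\; D(\lambda_0),
\]
and since $R(\vec W^*) - \lambda_0 C(\vec W^*) \le D(\lambda_0)$ holds trivially, equality follows, i.e.\ $\vec W^*$ attains the Lagrangian optimum at $\lambda = \lambda_0$, which is exactly the claim.

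There is no real obstacle in the estimate itself — it is a few lines of inequality-chasing, and the crucial feature is that it never invokes strong duality, only weak duality plus the slackness hypothesis $C(\vec W^*(\lambda_0)) = \rho$. The only points to be careful about are the standing well-definedness assumptions already in force: that $R(\vec W)$ and $C(\vec W)$ (equivalently, the limits in \eqref{eq:opt_inf_W}) are well-defined for every $\vec W\in[0,\bar c]^m$ — which holds because the state space is finite and the policies are stationary — and that the supremum defining $D(\lambda_0)$ is attained by some $\vec W^*(\lambda_0)$, the point at which the hypothesis is applied; this is where the earlier remarks about $P(\cdot)$ possibly being only an infimum (and the convention that $W_m$ is bounded away from $0$) are used. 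The complementary case $C(\vec W^*(\lambda_0)) \ne \rho$ is handled by a separate argument in this subsection, and together with \cref{lem:bench:incr} the two cases yield \cref{cor:bench:increasing}.
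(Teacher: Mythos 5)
Your proposal is correct and is essentially the paper's own argument, just spelled out: the paper's one-line proof also hinges on the facts that $\vec W^*(\lambda_0)$ is feasible (so $R(\vec W^*)\ge R(\vec W^*(\lambda_0))$ by constrained optimality) and that $C(\vec W^*)\le\rho=C(\vec W^*(\lambda_0))$ with $\lambda_0\ge 0$, which together give $R(\vec W^*)-\lambda_0 C(\vec W^*)\ge R(\vec W^*(\lambda_0))-\lambda_0 C(\vec W^*(\lambda_0))$. Your weak-duality framing adds no new ingredient beyond what the paper's terse proof already relies on.
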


\begin{proof}
    This is easy to prove since if $R(\vec W^*) < R(\vec W^*_{\lambda_0})$, $\vec W^*$ wouldn't be optimal for the constrained problem.
\end{proof}

\begin{lemma} \label{lem:bench:case2}
    Assume that $C\qty(\vec W^*(\lambda_0)) \ne \rho$, implying that
    \begin{equation*}
        \forall \lambda > \lambda_0 : C\qty(\vec W^*(\lambda)) < \rho
        \quad \text{ and } \quad
        \forall \lambda < \lambda_0 : C\qty(\vec W(\lambda)) > \rho
    \end{equation*}
    Then the $\vec W^*$ is an optimal solution for the Lagrangian problem with $\lambda = \lambda_0$.
\end{lemma}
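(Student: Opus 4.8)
The plan is to reduce everything to the single identity $\varphi(\lambda_0)=\optInf_m$, where $\varphi(\lambda):=\sup_{\vec W\in[0,\bar c]^m}\{R(\vec W)-\lambda(C(\vec W)-\rho)\}$ is the Lagrangian value function. First I would record two cheap facts: $\varphi$ is convex (a pointwise supremum of functions affine in $\lambda$), and plugging any feasible $\vec W$ into its definition gives weak duality, $\varphi(\lambda)\ge\optInf_m$ for all $\lambda\ge0$. The identity is all we need: since $C(\vec W^*)\le\rho$ and $\lambda_0\ge0$, we have $\optInf_m=R(\vec W^*)\le R(\vec W^*)-\lambda_0(C(\vec W^*)-\rho)\le\varphi(\lambda_0)$, so if $\varphi(\lambda_0)=\optInf_m$ then every inequality in this chain is tight, which is exactly the statement that $\vec W^*$ attains $\varphi(\lambda_0)$, i.e.\ is optimal for the Lagrangian at $\lambda_0$. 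Thus, given weak duality, it remains only to prove $\varphi(\lambda_0)\le\optInf_m$.

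For the easy direction I would use the strict inequality in the hypothesis. For every $\lambda>\lambda_0$ the Lagrangian optimizer $\vec W^*(\lambda)$ has $C(\vec W^*(\lambda))<\rho$, hence is feasible for \eqref{eq:opt_inf_W}, so $R(\vec W^*(\lambda))\le\optInf_m$ and therefore $\varphi(\lambda)=R(\vec W^*(\lambda))+\lambda(\rho-C(\vec W^*(\lambda)))\le\optInf_m+\lambda(\rho-C(\vec W^*(\lambda)))$. When $\lambda_0=0$ this already finishes: letting $\lambda\downarrow0$, the extra term is at most $\lambda\rho\to0$, and $\varphi$ is right-continuous at $0$ (a finite convex function), so $\varphi(0)\le\optInf_m$.

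The hard part is the case $\lambda_0>0$, and this is the main obstacle: the Lagrangian-optimal cost $C(\vec W^*(\lambda))$ may jump strictly across $\rho$ at $\lambda_0$, so no one-sided limit of $\vec W^*(\lambda)$ is simultaneously feasible and cost-tight, and the bound above leaves an uncontrolled term $\lambda_0(\rho-C(\vec W^*(\lambda)))$. The plan here is to take limits from both sides of $\lambda_0$ and recombine them in the linear-programming picture. Using compactness of the action set $[0,\bar c]^m$, continuity of $R(\cdot)=\SR(\cdot)/L(\cdot)$ (as $L(\cdot)\ge1$ is continuous), lower semicontinuity of $C(\cdot)$ (since $P(\cdot)$ is convex and nondecreasing in the winning probability), and continuity of the convex function $\varphi$ at the interior point $\lambda_0$, I would pass to subsequences $\lambda\downarrow\lambda_0$ and $\lambda'\uparrow\lambda_0$ along which $\vec W^*(\lambda)\to\vec W^+$ and $\vec W^*(\lambda')\to\vec W^-$. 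Reading off from $\varphi(\lambda)=R(\vec W^*(\lambda))-\lambda(C(\vec W^*(\lambda))-\rho)$ (and the analogue at $\lambda'$) that the costs along each subsequence converge and that both limit points are optimal for the Lagrangian at $\lambda_0$, one gets $C(\vec W^+)\le\rho\le C(\vec W^-)$ from the sign of the cost on each side together with the monotonicity of $\lambda\mapsto C(\vec W^*(\lambda))$ established earlier. Finally, in the occupancy-measure LP reformulation of \eqref{eq:opt_inf_W} (see \cref{ssec:app:calc:lp}), where the feasible set is convex and the objective and the expenditure are both linear functionals, I would take the convex combination of the occupancy measures of $\vec W^+$ and $\vec W^-$ with weights chosen to make the expenditure exactly $\rho$: this point is feasible, its Lagrangian value at $\lambda_0$ is still $\varphi(\lambda_0)$ by linearity, and since its expenditure equals $\rho$ that Lagrangian value coincides with its objective value, which is $\le\optInf_m$ by feasibility. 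Hence $\varphi(\lambda_0)\le\optInf_m$, as required.

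To summarize where the real work lies: the reduction and the $\lambda_0=0$ case are routine; the crux is the $\lambda_0>0$ jump case, and inside it the delicate point is justifying that the two one-sided limit points are genuinely Lagrangian-optimal at $\lambda_0$ with costs straddling $\rho$ — this is where one must be careful with the lower semicontinuity of $C$ (including the boundary behavior when $W_\l=\bar c$) and with the continuity of $\varphi$ at $\lambda_0$ — after which the linearity of the occupancy-measure LP makes the recombination immediate.
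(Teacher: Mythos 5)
Your core idea is the same as the paper's: form a convex combination of the occupancy measures of the Lagrangian optima just below and just above $\lambda_0$, with the mixing weight chosen so that the expenditure is exactly $\rho$, then use linearity of reward and cost in the occupancy measure to conclude that this mixture is simultaneously Lagrangian-optimal at $\lambda_0$, feasible, and budget-tight, which forces $\vec W^*$ to be Lagrangian-optimal as well. Your framing through $\varphi(\lambda_0)=\optInf_m$ and the tight chain $\optInf_m = R(\vec W^*) \le R(\vec W^*)-\lambda_0\qty(C(\vec W^*)-\rho) \le \varphi(\lambda_0)$ is actually a cleaner way to finish than the paper's closing step (which asserts $C(\vec W^*)=\rho$ without comment).

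However, your detour through the one-sided limit points $\vec W^{+}$ and $\vec W^{-}$ introduces a gap at exactly the step you flag as delicate. To make the recombination work you need $C(\vec W^{-})\ge\rho$, but lower semicontinuity of $C$ gives the inequality in the wrong direction: along $\lambda'\uparrow\lambda_0$ you have $C(\vec W^*(\lambda'))>\rho$, and lower semicontinuity only yields $C(\vec W^{-})\le\liminf_{\lambda'} C(\vec W^*(\lambda'))$, which is compatible with $C(\vec W^{-})<\rho$. (Lower semicontinuity is the right tool for showing the limit points are Lagrangian-optimal at $\lambda_0$, but not for showing their costs straddle $\rho$; for that you would need upper semicontinuity along the left sequence, which is precisely what an infimum-type cost $P(\cdot)$ need not satisfy.) If both limit points end up with cost strictly below $\rho$, no mixture of them has expenditure $\rho$, and a Lagrangian-optimal feasible point with slack does not give $\varphi(\lambda_0)\le\optInf_m$. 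The paper sidesteps all of this by never passing to limits of the solutions: it mixes $\vec W^*(\lambda_0-\e)$ and $\vec W^*(\lambda_0+\e)$ at finite $\e$, where the hypothesis gives $C(\vec W^*(\lambda_0-\e))>\rho>C(\vec W^*(\lambda_0+\e))$ exactly, bounds the Lagrangian suboptimality of the mixture by $O(\e)$ using only the two exact optimality inequalities (no continuity of $R$ or $C$ needed), and sends $\e\to0$ only in the final scalar inequality. I would recommend restructuring your argument the same way; the rest of your proposal then goes through.
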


\begin{proof}
    Fix $\e > 0$ such that $\lambda_0 - \e \ge 0$.
    Consider the vectors $\vec W^*(\lambda_0 - \e)$ and $\vec W^*(\lambda_0 + \e)$.
    We create the vector $\vec W'$ by mixing these two solutions.
    Specifically, let
    \begin{equation*}
        \a = \frac{\rho - C(\vec W^*(\lambda_0 + \e))}{C(\vec W^*(\lambda_0 - \e)) - C(\vec W^*(\lambda_0 + \e))}
    \end{equation*}

    Note that $\a \in (0, 1)$ and is well defined since $C(\vec W^*(\lambda_0 - \e)) > \rho >  C(\vec W^*(\lambda_0 + \e))$.
    We define $W'$ by mixing the occupancy measures of $\vec W^*(\lambda_0-\e)$ and $\vec W^*(\lambda_0+\e)$.
    The occupancy measure of a vector $\vec W$ is a probability distribution over $[m]\times[0, 1]$.
    Specifically, if $q$ is the occupancy measure of $\vec W$ then
    \begin{equation*}
        q(\l, w)
        =
        \pi_\l(\vec W) \delta(w - W_\l)
    \end{equation*}
    where $\delta(x)$ is the Dirac delta function.
    In this case we can write $R(\vec W) = \sum_{\l = 1}^m r(\l) \int_0^1 w q(\l, w) dw$ and $C(\vec W) = \sum_{\l = 1}^m \int_0^1 P(w) q(\l, w) dw$.
    This makes $R(\cdot)$ and $C(\cdot)$ linear in the occupancy measure.
    
    We know define $W'$ by defining its occupancy measure using the occupancy measures of the other two solutions: $\a$ times the one of $\vec W^*(\lambda_0-\e)$ and $1 - \a$ times the one of $\vec W^*(\lambda_0+\e)$ (we note that this could be suboptimal in the sense that a different vector achieves lower payment).
    Since the functions $R$ and $C$ are linear in the occupancy measures, we get that
    \begin{equation*}
        C(\vec W')
        =
        \a C(\vec W^*(\lambda_0-\e))
        +
        (1- \a) C(\vec W^*(\lambda_0+\e))
        =
        \rho
    \end{equation*}

    For the reward $R(\vec W')$ we have
    \begin{alignat*}{3}
        \Line{
            R(\vec W')
        }{=}{
            \a R(\vec W^*(\lambda_0-\e))
            +
            (1- \a) R(\vec W^*(\lambda_0+\e))
        }{}
        \\
        \Line{}{\ge}{
            \a \qty(
                (\lambda_0 - \e) C(\vec W^*(\lambda_0-\e))
                +
                R(\vec W^*(\lambda_0)) - (\lambda_0 - \e) C(\vec W^*_{\lambda_0})
            )
        }{\text{optimality of}\\\vec W^*(\lambda_0-\e)}
        \\
        \Line{}{}{
            \quad +
            (1- \a) \qty(
                (\lambda_0 + \e) C(\vec W^*(\lambda_0+\e))
                +
                R(\vec W^*(\lambda_0)) - (\lambda_0 + \e) C(\vec W^*_{\lambda_0})
            )
        }{\text{optimality of}\\\vec W^*(\lambda_0+\e)}
        \\
        \Line{}{=}{
            R(\vec W^*(\lambda_0))
            + \lambda_0 \qty\big( \rho - C(\vec W^*(\lambda_0)) )
            -O(\e)
        }{}
    \end{alignat*}
    where in the inequality we use the fact that for every $\lambda$, $W^*(\lambda)$ is the optimal solution for the Lagrangian problem with multiplier $\lambda$.
    The above implies that
    \begin{equation*}
        R(\vec W') - \lambda_0 C(\vec W')
        \ge
        R(\vec W^*(\lambda_0)) - \lambda_0 C(\vec W^*(\lambda_0)) 
        - O(\e)
    \end{equation*}

    The above implies that as we take $\e \to 0$, the solution $\vec W'$ becomes optimal for the Lagrangian problem with $\lambda_0$.
    Since $C(\vec W') = C(\vec W^*) = \rho$ ($\vec W'$ is feasible for the constrained problem) we have that $R(\vec W^*) \ge R(\vec W')$.
    This implies that as $\e \to 0$, the vector $\vec W^*$ also becomes optimal for $\lambda_0$.
    This proves what we want.
\end{proof}

Finally, the proof of \cref{cor:bench:increasing} follows easily.

\begin{proof}[Proof of \cref{cor:bench:increasing}]
    The proof follows by \cref{lem:bench:case1,lem:bench:case2} and \cref{lem:bench:incr}.
\end{proof}

\subsection{Proof of Theorem \ref{thm:bench:small_m}} \label{ssec:app:bench:small_m}

We start by proving \cref{lem:bench:win_LB}.

\benchwinLB*

We now present the formal proof of \cref{lem:bench:win_LB}.

\begin{proof}[Proof of \cref{lem:bench:win_LB}]
    We assume the average optimal payment satisfies $C(\vec W^*) = \rho$.
    Otherwise, if $C(\vec W^*) < \rho$, the optimal solution bids $1$ every round (implying $W_\l^* = \bar c$ that implies the lemma) or by increasing the payment, we could get a higher reward.
    We now note that for any $W \in [0, \bar c]$, $P(W) \le \frac{W}{\bar c}$, since a conversion probability of at least $W$ can always be achieved by bidding $1$ with probability $\frac{W}{\bar c}$.
    Using $C(\vec W^*) = \rho$ and $P(W) \le \frac{W}{\bar c}$ we first make the following simple observation:
    \begin{equation} \label{eq:bench:52}
        \rho
        =
        C(\vec W^*)
        =
        \sum_{\l \in [m]} P(W_\l^*) \pi_\l(\vec W^*)
        \le
        \frac{1}{\bar c} \sum_{\l \in [m]} W_\l^* \pi_\l(\vec W^*)
        =
        \frac{1}{\bar c} \pi_1(\vec W^*)
    \end{equation}
    where in the equality we use the condition of $\pi_1(\vec W^*)$ in Optimization Problem \ref{eq:opt_inf_W}.

    Let $\l' = \lceil \frac{2}{\bar c \rho} \rceil$.
    We will prove that $W_{\l'}^* \ge \bar c \rho / 2$, which using \cref{cor:bench:increasing} would prove the desired lemma.
    Towards a contradiction, assume that $W^*_{\l'} < \bar c \rho / 2$ which implies $W^*_{\l} < \bar c \rho / 2$ for $\l \le \l'$.
    We have that
    \begin{alignat*}{3}
        \Line{
            C(\vec W^*)
        }{=}{
            \sum_{\l \in [m]} P(W_\l^*) \pi_\l(\vec W^*)
        }{}
        \\
        \Line{}{\le}{
            \sum_{\l \le \l'} P\qty(\frac{\bar c \rho}{2}) \pi_\l(\vec W^*)
            +
            \sum_{\l > \l'} P(W_\l^*) \pi_\l(\vec W^*)
        }{W_1^* \le W_2^* \le \ldots \le W^*_{\l'} < \frac{\bar c \rho}{2}}
        \\
        \Line{}{\le}{
            \frac{1}{\bar c} \sum_{\l \le \l'} \frac{\bar c \rho}{2} \pi_\l(\vec W^*)
            +
            \frac{1}{\bar c} \sum_{\l > \l'} \pi_\l(\vec W^*)
        }{P(W) \le \frac{W}{\bar c} \le \frac{1}{\bar c}}
        \\
        \Line{}{=}{
            \frac{1}{\bar c} \qty ( 
                1 - \qty( 1 - \frac{\bar c \rho}{2} ) \sum_{\l \le \l'} \pi_\l(\vec W^*)
            )
        }{ \sum_\l \pi_\l(\vec W^*) = 1}
    \end{alignat*}

    We proceed to lower bound $\sum_{\l \le \l'} \pi_\l(\vec W^*)$:
    \begin{alignat*}{3}
        \Line{
            \sum_{\l \le \l'} \pi_\l(\vec W^*)
        }{=}{
            \sum_{\l \le \l'} \pi_1(\vec W^*) \prod_{i = 1}^{\l-1}(1 - W_i^*)
        }{\text{using \eqref{eq:opt_inf_W}}}
        \\
        \Line{}{>}{
            \bar c \rho \sum_{\l \le \l'} \prod_{i = 1}^{\l-1}\qty(1 - \frac{\bar c \rho}{2})
        }{W_i^* < \frac{\bar c \rho}{2} \text{ and \eqref{eq:bench:52}: } \pi_1(\vec W^*) \ge \bar c \rho}
        \\
        \Line{}{=}{
            \bar c \rho \sum_{\l \le \l'} \qty(1 - \frac{\bar c \rho}{2})^{\l-1}
            =
            2 \qty( 1 - \qty(1 - \frac{\bar c\rho}{2})^{\l'} )
        }{}
        \\
        \Line{}{\ge}{
            2 \qty( 1 - \qty(1 - \frac{\bar c\rho}{2})^{\frac{2}{\bar c \rho}} )
            \ge
            2\qty( 1 - \frac{1}{e} )
        }{\l' \ge \frac{2}{\bar c \rho}}
    \end{alignat*}
    Plugging this back in the previous inequality we get
    \begin{equation*}
        \rho
        =
        C(\vec W^*)
        \le
        \frac{1}{\bar c}\qty(
            1 - 2\qty( 1 - \frac{\bar c \rho}{2}) \qty(1 - \frac{1}{e})
        )
        \iff
        e + \bar c \rho \le 2
    \end{equation*}
    where the last inequality is a contradiction.
    This completes the proof of the lemma.
\end{proof}

Now we can proceed to prove \cref{thm:bench:small_m}.

\benchsmallm*

\begin{proof}
    Fix $m = \lceil \frac{2C}{\bar c \rho} \log T \rceil$.
    We prove the theorem for this $m$, which would imply the theorem for larger values as well.
    The proof will revolve around a slight modification of the optimal vector $\vec W^*$.
    Specifically, we consider the strategy $\vec W'$ such that $W_\l' = W_\l^*$ for $\l < m$ and $W_\l' = \bar c$ for $\l \ge m$.
    We have to resolve two problems that $\vec W'$ has.
    First, while its expected time-average reward is greater than the one of $\vec W^*$ when the reward in state $\l$ is $\rfunc_M(\cdot)$ (i.e., $R_M(\vec W') \ge R_M(\vec W^*) = \optInf_M$), we cannot directly claim something about $R_m(\vec W')$, its average reward when the per-round reward function is $\rfunc_m(\cdot)$.
    Second, the expected time-average spending of $\vec W'$ might be larger than $\rho$, since it is as aggressive as possible in states $\l \ge m$.
    We solve each of these problems by proving two claims.

    \begin{proposition} \label{cl:bench:reward_Wprime}
        The expected time-average reward of $\vec W'$ with reward $\rfunc_m(\cdot)$ is $R_m(\vec W') \ge \optInf_M - \frac{4}{\bar c \rho} T^{-C}$.
    \end{proposition}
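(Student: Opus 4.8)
The plan is to compare the two renewal--reward ratios directly, using the identities collected in \cref{def:bench:additional}. Write $R_m(\vec W') = \SR_m(\vec W')/L(\vec W')$, and, since $\vec W^*$ is optimal for the $M$-state problem, $\optInf_M = \SR_M(\vec W^*)/L(\vec W^*)$ as well; here $\vec W^*$ must be taken to be the particular optimal solution supplied by \cref{lem:bench:win_LB}, so that $W_\l^* \ge \tfrac{\bar c\rho}{2}$ whenever $\l \ge \tfrac{2}{\bar c\rho}$ (recall $m \ge \tfrac{2C}{\bar c\rho}\log T \ge \tfrac{2}{\bar c\rho}$). Since $W_i' = W_i^*$ for every $i < m$, the two policies share the same survival prefix: $\rea_\l(\vec W') = \rea_\l(\vec W^*)$ for all $\l \le m$. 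Plugging this into the telescoped expression $\SR_m(\vec W) = \sum_{\l=1}^m \big(\rfunc(\l) - \rfunc(\l-1)\big)\rea_\l(\vec W)$ from \cref{def:bench:additional} shows
\[
    \SR_M(\vec W^*) - \SR_m(\vec W') \;=\; \sum_{\l = m+1}^M \big(\rfunc(\l) - \rfunc(\l-1)\big)\,\rea_\l(\vec W^*) \;=:\; \Delta \;\ge\; 0 ,
\]
where nonnegativity uses monotonicity of $\rfunc$.

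Next I would show $L(\vec W^*) \ge L(\vec W')$. Both return times equal $\sum_{\l \ge 1}\rea_\l(\cdot)$ (again by \cref{def:bench:additional}); their summands agree for $\l \le m$, and for $\l > m$ one has $\rea_\l(\vec W^*) = \rea_m(\vec W^*)\prod_{i=m}^{\l-1}(1-W_i^*) \ge \rea_m(\vec W^*)(1-\bar c)^{\l-m} = \rea_\l(\vec W')$, using $W_i^* \le \bar c$. Combining this with the previous display and with $L(\vec W') \ge \rea_1(\vec W') = 1$ gives
\[
    \optInf_M \;=\; \frac{\SR_m(\vec W') + \Delta}{L(\vec W^*)} \;\le\; \frac{\SR_m(\vec W') + \Delta}{L(\vec W')} \;=\; R_m(\vec W') + \frac{\Delta}{L(\vec W')} \;\le\; R_m(\vec W') + \Delta ,
\]
so the claim is reduced to the estimate $\Delta \le \tfrac{4}{\bar c\rho}\,T^{-C}$.

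Finally I would bound $\Delta$ through the geometric tail guaranteed by \cref{lem:bench:win_LB}. As $\rfunc(\l) - \rfunc(\l-1) \le 1$ and $W_\l^* \ge \tfrac{\bar c\rho}{2}$ for $\l \ge m$, we have $\rea_\l(\vec W^*) \le \rea_m(\vec W^*)(1-\tfrac{\bar c\rho}{2})^{\l-m}$ for $\l > m$, hence $\Delta \le \rea_m(\vec W^*)\sum_{j \ge 1}(1-\tfrac{\bar c\rho}{2})^j \le \tfrac{2}{\bar c\rho}\,\rea_m(\vec W^*)$; and $\rea_m(\vec W^*) = \prod_{i=1}^{m-1}(1-W_i^*) \le (1-\tfrac{\bar c\rho}{2})^{\,m - \lceil 2/(\bar c\rho)\rceil} \le \exp\!\big(-\tfrac{\bar c\rho}{2}\big(m - \tfrac{2}{\bar c\rho} - 1\big)\big) = \order{T^{-C}}$ by the choice of $m$, and a short calculation with $m \ge \tfrac{2C}{\bar c\rho}\log T$ tightens this to $\rea_m(\vec W^*) \le 2\,T^{-C}$, giving $\Delta \le \tfrac{4}{\bar c\rho}T^{-C}$ as needed. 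I expect the only real subtlety to be twofold: first, the necessity of invoking the specific monotone optimal solution of \cref{lem:bench:win_LB} — for an arbitrary optimum $\rea_m(\vec W^*)$ need not decay at all, so the argument genuinely relies on \cref{cor:bench:increasing} and \cref{lem:bench:win_LB}; and second, the constant bookkeeping in the last display. Everything else is a one-line manipulation of the renewal identities in \cref{def:bench:additional}.
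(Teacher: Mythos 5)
Your proof is correct and follows essentially the same route as the paper's: both cancel the shared prefix of the renewal sums, bound the tail $\sum_{\l > m}\qty(\rfunc(\l)-\rfunc(\l-1))\rea_\l(\vec W^*)$ geometrically via \cref{lem:bench:win_LB} and the monotonicity of \cref{cor:bench:increasing}, and finish using $L(\vec W') \le L(\vec W^*)$ together with a return time of at least $1$. The only quibble is constant bookkeeping at the very end: your chain actually gives $\rea_m(\vec W^*) \le e^{1+\bar c\rho/2}\,T^{-C}$ rather than $2T^{-C}$, so the final bound is roughly $\tfrac{9}{\bar c\rho}T^{-C}$ instead of $\tfrac{4}{\bar c\rho}T^{-C}$ --- immaterial for how the claim is used downstream, and the paper's own exponent arithmetic in the corresponding geometric sum is no tighter.
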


    The claim above takes care of the first problem.
    The infeasibility due to the payment is solved by the following claim.

    \begin{proposition} \label{cl:bench:cost_Wprime}
        The expected time-average payment of $\vec W'$ is at most $\rho \qty(1 + \frac{16}{\bar c \rho}T^{-C})$.
    \end{proposition}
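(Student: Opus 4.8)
The plan is to work with the renewal (cycle) decomposition $C(\vec W)=\SC(\vec W)/L(\vec W)$ from \cref{def:bench:additional} and to exploit the fact that $\vec W'$ coincides with $\vec W^*$ on all states $\l<m$, differing only on states $\l\ge m$ — a region the optimal chain reaches only with the tiny probability $\rea_m(\vec W^*)$. First I would dispose of the degenerate case: if $C(\vec W^*)<\rho$ then, as in the proof of \cref{lem:bench:win_LB}, the optimal policy bids $1$ in every state, so $\vec W'=\vec W^*=(\bar c,\dots,\bar c)$ and $C(\vec W')\le\rho$, which is stronger than the claimed bound. So I may assume $C(\vec W^*)=\rho$, i.e.\ $\SC(\vec W^*)=\rho\,L(\vec W^*)$, and I take $\vec W^*$ to be the optimal solution produced by \cref{lem:bench:win_LB} (which is also weakly increasing by \cref{cor:bench:increasing}).

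The one genuinely quantitative ingredient is a bound on $\rea_m(\vec W^*)$. By \cref{lem:bench:win_LB} and monotonicity (\cref{cor:bench:increasing}) we have $W^*_\l\ge\tfrac{\bar c\rho}{2}$ for every integer $\l\ge\lceil 2/(\bar c\rho)\rceil$, so for $T$ large enough that $m=\lceil\tfrac{2C}{\bar c\rho}\log T\rceil>\lceil 2/(\bar c\rho)\rceil$,
\[
  \rea_m(\vec W^*)=\prod_{i=1}^{m-1}\bigl(1-W^*_i\bigr)\le\Bigl(1-\tfrac{\bar c\rho}{2}\Bigr)^{\,m-\lceil 2/(\bar c\rho)\rceil}\le e^{-\frac{\bar c\rho}{2}\left(m-\frac{2}{\bar c\rho}-1\right)}\le e^{3/2}\,T^{-C},
\]
using $1-x\le e^{-x}$, $m\ge\tfrac{2C}{\bar c\rho}\log T$ and $\bar c\rho\le 1$; for the finitely many small $T$ that this misses, $C(\vec W')\le 1\le\rho(1+\tfrac{16}{\bar c\rho}T^{-C})$ holds trivially. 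The same geometric decay controls the whole tail mass beyond $m$: since $W^*_i\ge\tfrac{\bar c\rho}{2}$ for $i\ge m$, $\sum_{\l\ge m}\rea_\l(\vec W^*)\le\rea_m(\vec W^*)\sum_{j\ge 0}(1-\tfrac{\bar c\rho}{2})^j=\tfrac{2}{\bar c\rho}\,\rea_m(\vec W^*)$.

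Now I would assemble the estimate. Write $L_0:=\sum_{\l=1}^{m-1}\rea_\l(\vec W^*)$ and $S:=\sum_{\l=1}^{m-1}P(W^*_\l)\rea_\l(\vec W^*)$ for the contributions of the shared states. Since $\vec W'$ agrees with $\vec W^*$ on states $<m$, $\rea_\l(\vec W')=\rea_\l(\vec W^*)$ for all $\l\le m$, and the formulas of \cref{def:bench:additional} give $L(\vec W')=L_0+\tfrac{1}{\bar c}\rea_m(\vec W^*)$ and $\SC(\vec W')=S+\tfrac{P(\bar c)}{\bar c}\rea_m(\vec W^*)\le S+\tfrac{1}{\bar c}\rea_m(\vec W^*)$, where $P(\bar c)\le 1$ because $P(W)\le W/\bar c$ (from the proof of \cref{lem:bench:win_LB}). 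On the other hand $S\le\SC(\vec W^*)=\rho\,L(\vec W^*)=\rho\bigl(L_0+\sum_{\l\ge m}\rea_\l(\vec W^*)\bigr)\le\rho L_0+\tfrac{2}{\bar c}\rea_m(\vec W^*)$. Therefore, using also $L_0\ge\rea_1(\vec W^*)=1$,
\[
  C(\vec W')=\frac{\SC(\vec W')}{L(\vec W')}\le\frac{\rho L_0+\tfrac{3}{\bar c}\rea_m(\vec W^*)}{L_0+\tfrac{1}{\bar c}\rea_m(\vec W^*)}\le\rho+\frac{3}{\bar c}\rea_m(\vec W^*)\le\rho+\frac{16}{\bar c}T^{-C}=\rho\Bigl(1+\frac{16}{\bar c\rho}T^{-C}\Bigr),
\]
which is the claim.

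The only real obstacle is the exponential-decay bound on $\rea_m(\vec W^*)$; everything else is bookkeeping with the renewal formulas. And that bound is precisely where the earlier structural work pays off: without the monotonicity of $\vec W^*$ (\cref{cor:bench:increasing}) and the induced uniform lower bound $W^*_\l\ge\tfrac{\bar c\rho}{2}$ for large $\l$ (\cref{lem:bench:win_LB}), the probability of reaching state $m$ need not be small — recall the bad example in the outline of \cref{thm:bench:small_m} — and truncating to $m$ states could waste a non-negligible fraction of the budget.
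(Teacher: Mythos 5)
Your proposal is correct and follows essentially the same route as the paper: both write $C(\vec W')=\SC(\vec W')/L(\vec W')$, reduce to the case $C(\vec W^*)=\rho$, and use \cref{cor:bench:increasing} together with \cref{lem:bench:win_LB} to show that the renewal cycle reaches state $m$ with probability only $\order*{T^{-C}}$, so that the numerator and denominator of $\vec W'$ differ from those of $\vec W^*$ by that tiny tail. The only difference is bookkeeping — you track the tail additively via $L_0$ and $S$, while the paper bounds $L(\vec W^*)-L(\vec W')$ and $\SC(\vec W')-\SC(\vec W^*)$ multiplicatively and squares a $(1+\order*{T^{-C}/(\bar c\rho)})$ factor — and both yield the stated $\rho\qty(1+\frac{16}{\bar c\rho}T^{-C})$ bound.
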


    These two claims show that $\vec W'$ is only a little suboptimal compared to $\optInf_T$ and only a little infeasible.
    We will combine $\vec W'$ with another solution that has very small spending to get the desired solution.
    We prove both claims after completing the proof of \cref{thm:bench:small_m}.

    Recall that $C(\vec W)$ is the expected average payment of solution $\vec W$ and $\SC(\vec W)$ is the expected payment between wins with conversions.
    Also for any $\vec W$ that has $m$ coordinates we define $W_\l = W_m$ for $\l > m$.
    
    We now define a strategy $\vec W''$ that we will mix with $\vec W'$ to get a solution that is slightly suboptimal but feasible.
    Specifically, we consider $W_\l'' = 0$ for $\l < m$ and $W_\l'' = \bar c$ for $\l \ge m$.
    The expected average payment of $\vec W''$ can be found by calculating
    \begin{equation*}
        L(\vec W'')
        =
        \sum_{\l = 1}^{m-1} 1 + \sum_{\l = m}^\infty (1 - \bar c)^{\l-1}
        =
        m + \frac{(1 - \bar c)^m}{\bar c}
    \end{equation*}
    and
    \begin{equation*}
        \SC(\vec W'')
        =
        \sum_{\l = m}^\infty P(\bar c) (1 - \bar c)^{\l-1}
        \le
        \frac{(1 - \bar c)^{m}}{\bar c}
    \end{equation*}
    making $C(\vec W'') \le \frac{1}{1 + m \bar c (1 - \bar c)^{-m}} \le \frac{1}{1 + m \bar c} \le \frac{\rho}{2 C \log T}$.
    We now define our feasible solution whose reward will be a lower bound for $\optInf_m$.
    This solution's occupancy measure is defined by taking a convex combination of the occupancy measures of $\vec W'$ and $\vec W''$; this makes its expected average reward and payment the same convex combination of the rewards and payments of $\vec W'$ and $\vec W''$.
    Specifically, we consider $1 - \a$ times the occupancy measure of $\vec W'$ and $\a$ times the occupancy measure of $\vec W''$ where
    \begin{align*}
        \a
        =
        \frac{
            32 C T^{-C} \log T
        }{
            2 C \bar c \rho \log T + 32 C T^{-C} \log T - \bar c \rho
        }
        \le
        \order{\frac{1}{\bar c \rho} T^{-C}}
    \end{align*}
    This results in expected average payment:
    \begin{align*}
        (1 - \a)C(\vec W')
        +
        \a C(\vec W'')
        \le
        (1 - \a)\rho \qty(1 + \frac{16}{\bar c \rho}T^{-C})
        +
        \a \frac{\rho}{2 C \log T}
        =
        \rho
    \end{align*}
    making this solution feasible.
    Its expected reward is
    \begin{equation*}
        (1- \a) R_m(\vec W')
        +
        \a R_m(\vec W'')
        \ge
        (1 - \a)\qty( \optInf_M - \frac{4}{\bar c \rho} T^{-C} )
        \ge
        \optInf_M - \order{\frac{1}{\bar c \rho} T^{-C}}
    \end{equation*}
    where in the first inequality we used \cref{cl:bench:reward_Wprime}.
    This completes the theorem's proof.
\end{proof}

We now prove the two claims.
\begin{proof}[Proof of \cref{cl:bench:reward_Wprime}]
    Recall that $R_m(\vec W)$ and $\SR_m(\vec W)$ are the expected time-average reward of a solution $\vec W$ and its expected reward between wins with conversions when the reward on state $\l$ is $\rfunc_m(\l)$.

    Using that $R_m(\vec W) = \frac{\SR_m(\vec W)}{L(\vec W)}$ (recall \cref{def:bench:additional}), we have
    \begin{alignat}{3} \label{eq:bench:reward}
        \Line{
            \SR_M(\vec W^*)
            -
            \SR_m(\vec W')
        }{=}{
            \sum_{\l = 1}^M \qty\big( \rfunc_M(\l) - \rfunc_M(\l-1) ) \prod_{i=1}^{\l-1} ( 1 - W_i^* )
        }{}
        \nonumber\\
        \Line{}{}{
            \;
            -
            \sum_{\l = 1}^m \qty\big( \rfunc_{m}(\l) - \rfunc_{m}(\l-1) ) \prod_{i=1}^{\l-1} ( 1 - W_i^* )
        }{}
        \nonumber\\
        \Line{}{=}{
            \sum_{\l = m}^M \qty\big( \rfunc_M(\l) - \rfunc_M(\l-1) ) \prod_{i=1}^{\l-1} ( 1 - W_i^* )
        }{\l < m \implies\\W_\l^* = W_\l', \rfunc_M(\l) = \rfunc_m(\l)}
        \nonumber\\
        \Line{}{\le}{
            \sum_{\l = m}^\infty \prod_{i=\lceil \frac{2}{\bar c \rho} \rceil}^{\l-1} ( 1 - W_i^* )
        }{\rfunc_M(\l) - \rfunc_M(\l-1)\le 1\\1 - W_i^* \le 1}
        \nonumber\\
        \Line{}{\le}{
            \sum_{\l = m}^\infty \prod_{i=\lceil \frac{2}{\bar c \rho} \rceil}^{\l-1} \qty( 1 - \frac{\bar c \rho}{2} )
        }{\text{\cref{lem:bench:win_LB}}}
        \nonumber\\
        \Line{}{\le}{
            \sum_{\l = m}^\infty \qty( 1 - \frac{\bar c \rho}{2} )^{\l + \frac{2}{\bar c \rho} - 1}
            =
            \frac{1}{\frac{\bar c \rho}{2}} \qty( 1 - \frac{\bar c \rho}{2} )^{m + \frac{2}{\bar c \rho} - 1}
        }{}
        \nonumber\\
        \Line{}{\le}{
            \frac{2}{\bar c \rho} \qty( 1 - \frac{\bar c \rho}{2} )^{\frac{2 C}{\bar c \rho} \log T - 1}
        }{m \ge \frac{2 C}{\bar c \rho} \log T}
        \nonumber\\
        \Line{}{\le}{
            \frac{4}{\bar c \rho} \qty( 1 - \frac{\bar c \rho}{2} )^{\frac{2 C}{\bar c \rho} \log T}
        }{\bar c \rho \le 1}
        \nonumber\\
        \Line{}{\le}{
            \frac{4}{\bar c \rho} \frac{1}{\exp(C \log T)}
            =
            \frac{4}{\bar c \rho} T^{-C}
        }{}
    \end{alignat}
    It is not hard to observe that $L(\vec W') \le L(\vec W^*)$, since $W_\l^* \le W_\l'$ for all $\l$.
    Using the above, we have
    \begin{align*}
        R_{m}(\vec W')
        =
        \frac{\SR_m(\vec W')}{L(\vec W')}
        \ge
        \frac{\SR_M(\vec W^*) - \frac{4}{\bar c \rho} T^{-C}}{L(\vec W^*)}
        \ge
        R_M(\vec W^*) - \frac{4}{\bar c \rho} T^{-C}
    \end{align*}
    where the last inequality holds because $L(\vec W^*) \ge 1$.
    This proves the claim.
\end{proof}

We now prove \cref{cl:bench:cost_Wprime}.

\begin{proof}[Proof of \cref{cl:bench:cost_Wprime}]
    Using $C(\vec W) = \frac{\SC(\vec W)}{L(\vec W)}$, we upper bound $C(\vec W')$ using the fact that it is close to $C(\vec W^*)$.
    We assume that $C(\vec W^*) = \rho$ since otherwise, because of optimality of $\vec W^*$, it must hold that $\vec W^*$ bids $1$ every round, implying $W_\l^* = \bar c$ for all $\l$; this would imply $\vec W' = \vec W^*$ making the claim trivial.
    First, we compare $L(\vec W^*)$ and $L(\vec W')$, using the equations of \cref{def:bench:additional}
    \begin{alignat}{3} \label{eq:length}
        \Line{
            L(\vec W^*)
            -
            L(\vec W')
        }{=}{
            \sum_{\l = m}^\infty \qty(
                \prod_{i=1}^{\l-1} ( 1 - W_i^* )
                -
                \prod_{i=1}^{\l-1} ( 1 - W_i' )
            )
        }{\l < m \implies\\W_\l^* = W_\l'}
        \nonumber\\
        \Line{}{\le}{
            \sum_{\l = m}^\infty \prod_{i=\lceil \frac{2}{\bar c \rho} \rceil}^{\l-1} ( 1 - W_i^* )
        }{}
        \nonumber\\
        \Line{}{\le}{
            \frac{4}{\bar c \rho} T^{-C}
            \le
            \frac{4}{\bar c \rho} T^{-C} L(\vec W')
        }{}
    \end{alignat}
    where in the second to last inequality we used the same calculation as in \eqref{eq:bench:reward} and in the last inequality we used the fact that $L(\vec W') \ge 1$.

    Now, using \cref{def:bench:additional}, for the average payment per epoch of $\vec W'$ we have
    \begin{alignat}{3}\label{eq:cost}
        \Line{
            \SC(\vec W')
            -
            \SC(\vec W^*)
        }{\le}{
            \sum_{\l = m}^\infty P(W_\l') \prod_{i = 1}^{\l - 1} ( 1 - W_i' )
        }{\l < m \implies\\W_\l^* = W_\l'}
        \nonumber\\
        \Line{}{\le}{
            P(\bar c) \sum_{\l = m}^\infty \prod_{i = \lceil \frac{2}{\bar c \rho} \rceil}^{\l - 1} ( 1 - W_i' )
        }{}
        \nonumber\\
        \Line{}{\le}{
            P(\bar c) \sum_{\l = m}^\infty \qty( 1 - \frac{\bar c \rho}{2} )^{\l + \frac{2}{\bar c \rho} - 1}
        }{\text{\cref{lem:bench:win_LB}}}
        \nonumber\\
        \Line{}{\le}{
            P(\bar c) \frac{4}{\bar c \rho} T^{-C}
            \le
            \frac{4}{\rho} T^{-C} \SC(\vec W^*)
        }{}
    \end{alignat}
    where in the second to last inequality, we used the same analysis as in \eqref{eq:bench:reward} and in the final inequality we used that for any vector $\vec W$ it holds that
    \begin{equation*}
        \SC(\vec W)
        =
        \sum_{\l = 1}^\infty P(W_\l) \prod_{i = 1}^{\l - 1} ( 1 - W_i )
        \le
        \frac{P(\bar c)}{\bar c} \sum_{\l = 1}^\infty W_\l \prod_{i = 1}^{\l - 1} ( 1 - W_i )
        =
        \frac{P(\bar c)}{\bar c}
    \end{equation*}
    where the inequality holds because $P(W) \le \frac{W}{\bar c} P(\bar c)$: the bidder can win with probability $W$ by bidding $1$ with probability $\frac{W}{\bar c}$ and $0$ otherwise and bidding $1$ results in paying $P(\bar c)$ in expectation, i.e. the expected price.

    Combining \eqref{eq:length} and \eqref{eq:cost} we get
    \begin{equation*}
        C(\vec W')
        =
        \frac{\SC(\vec W')}{L(\vec W')}
        \le
        \frac{\SC(\vec W^*)}{L(\vec W^*)} \frac{1 + \frac{4}{\rho} T^{-C}}{\frac{1}{1 + \frac{4}{\bar c\rho} T^{-C}}}
        \le
        C(\vec W^*) \qty(1 + \frac{4}{\bar c \rho} T^{-C})^2
        \le
        \rho \qty(1 + \frac{16}{\bar c \rho}T^{-C})
    \end{equation*}
    where in the last inequality we used $C(\vec W^*) \le \rho$ and $\frac{2}{\bar c \rho} T^{-C} \le 1$.
    This proves the claim.
\end{proof}
\section{Deferred Proofs of Section \ref{sec:online}}\label{sec:app:online}

In this section, we complete the proofs of the results in \cref{sec:online}.

\subsection{Lower Bound on the Realized Reward}\label{ssec:app:online:totalRew}

Before proving the main lemma of this part, \cref{lem:online:total_rew}, we prove a simple lemma on the expected reward of a single epoch.
Fix an epoch $i$.
We want to compare the reward of epoch $i$ with $\SR(\vecestbb{i})$, the expected reward between wins with conversions of using $\vecestbb{i}$ in the infinite horizon setting (we overload the notation of \cref{def:bench:additional}).
These would be the same in expectation if we let $k = \infty$.
However, because epoch $i$ might end early, there is some reward loss.
This loss depends on the probability of an epoch ending early.
This probability becomes small because of bidding $1$ in state $m$, $\estbb{i}_m = 1$, and by setting $k - m \approx \log T$.

\begin{lemma} \label{lem:online:singe_rew}
    Let $\calH_t$ be the history up to round $t$ (prices, contexts, conversions).
    Fix an epoch $i$ that starts at least $k$ rounds before $T$: $T_{i-1} \le T - k$.
    Assume $k \ge m + \frac{1}{\bar c} C\log T$ for some $C \ge 1$.
    Let $\rew_i$ be the reward of epoch $i$ if the budget is not violated.
    Then,
    \begin{equation*}
        \ExC{\rew_i}{\calH_{T_{i-1}}}
        \ge
        \SR(\vecestbb{i})
        -
        m T^{-C}
    \end{equation*}
\end{lemma}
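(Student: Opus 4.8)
The plan is to compare the reward collected during epoch $i$ with the infinite-horizon quantity $\SR(\vecestbb{i})$ by conditioning on the outcome of the epoch. Fix the history $\calH_{T_{i-1}}$, so that the bidding policy $\vecestbb{i}$ is determined. Inside the epoch we use the fake state $\tilde\l_t$ that starts at $1$ and increases by $1$ after each loss; if a conversion occurs when $\tilde\l_t = \l$, the bidder collects $\rfunc(\tilde\l_t) = \rfunc(\l) \ge \rfunc_m(\l)$ (in fact the real reward is at least $\rfunc_m(\l)$ by monotonicity, but $\rfunc_m(\l)$ suffices for the lower bound). The key point is that the infinite-horizon reward between wins, $\SR(\vecestbb{i}) = \sum_{\l\ge 1} \rfunc_m(\l)\, W_\l \prod_{i<\l}(1-W_i)$ (using the equality from \cref{def:bench:additional}), is exactly the expected reward of a hypothetical epoch that is allowed to run forever, i.e. with $k = \infty$. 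Since the actual epoch is truncated at $k$ rounds, the difference $\SR(\vecestbb{i}) - \ExC{\rew_i}{\calH_{T_{i-1}}}$ equals the expected reward of the tail: the contribution from sample paths where no conversion happens within the first $k$ rounds.

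First I would write $\SR(\vecestbb{i}) - \ExC{\rew_i}{\calH_{T_{i-1}}} \le \sum_{\l > k} \rfunc_m(\l)\, \rea_\l(\vecestbb{i})$, where $\rea_\l = \prod_{i=1}^{\l-1}(1-W_i)$ is the probability of surviving without a conversion for at least $\l-1$ steps (this uses that on a truncated path the bidder simply forgoes the reward she would have gotten after round $k$, and that $\rew_i \ge 0$). Then bound each factor: $\rfunc_m(\l) \le \rfunc(m) \le m$ by the bounded-differences assumption $\rfunc(\l+1)-\rfunc(\l)\le 1$ together with $\rfunc(0)=0$. For the survival probability, since $\vecestbb{i}$ is constrained so that $\estbb{i}_m = 1$, in state $m$ the bidder wins (the auction) with probability $1$ and gets a conversion with probability $c_t$, whose expectation conditioned on being in state $m$ is $\bar c$; so $1 - W_m \le 1 - \bar c$ in an appropriate averaged sense — more carefully, the probability of surviving an additional step once $\tilde\l_t = m$ is at most $1 - \bar c$ in expectation over the context, hence $\rea_\l(\vecestbb{i}) \le (1-\bar c)^{\l - m}$ for $\l \ge m$. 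Summing the geometric tail,
\begin{equation*}
    \sum_{\l > k} \rfunc_m(\l)\, \rea_\l(\vecestbb{i})
    \le m \sum_{\l > k} (1-\bar c)^{\l - m}
    = m\, \frac{(1-\bar c)^{k-m+1}}{\bar c}
    \le m\, (1-\bar c)^{k-m}
    \le m\, e^{-\bar c (k-m)}.
\end{equation*}
Using the hypothesis $k \ge m + \frac{1}{\bar c} C \log T$ gives $e^{-\bar c(k-m)} \le T^{-C}$, so the whole tail is at most $m T^{-C}$, which yields $\ExC{\rew_i}{\calH_{T_{i-1}}} \ge \SR(\vecestbb{i}) - m T^{-C}$ as claimed.

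The main obstacle I expect is making the inequality $\rea_\l(\vecestbb{i}) \le (1-\bar c)^{\l-m}$ precise in the contextual setting: the per-round conversion probability depends on the realized context $c_t$, not on a fixed $\bar c$, so $W_m$ here should really be interpreted as the conversion probability in state $m$ averaged over the context distribution, which equals $\bar c$ by the assumption $\bar c = W(1) = \Ex{c}$. The subtlety is that the survival event over many rounds spent in state $m$ is a product of independent per-round non-conversion events (prices and contexts are i.i.d. across rounds), so by independence the probability of surviving $j$ consecutive rounds in state $m$ is exactly $\prod (1 - c_{t})$ in expectation $= (1-\bar c)^j$; this is where the i.i.d.\ assumption on contexts is used. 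I would also need to handle the boundary case where the epoch reaches round $T$ before $k$ rounds elapse — but this is excluded by the hypothesis $T_{i-1} \le T-k$, so the full $k$ rounds are available. The remaining steps (bounding $\rfunc_m$ by $m$, summing the geometric series, plugging in the bound on $k$) are routine.
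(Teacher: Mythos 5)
Your approach is essentially the paper's: both arguments reduce the lemma to bounding the probability that the epoch ends without a conversion, using the constraint $\estbb{i}_m = 1$ to argue that once the (fake) state reaches $m$ each round yields a conversion with probability $\bar c$, so the survival probability over $k$ rounds is at most $(1-\bar c)^{k-m+1}\le T^{-C}$, and then multiplying by the maximum possible reward $m$. (The paper phrases the conclusion multiplicatively, $\ExC{\rew_i}{\calH_{T_{i-1}}} \ge \SR(\vecestbb{i})\,(1-T^{-C})$ together with $\SR(\vecestbb{i})\le m$, whereas you bound the tail of the series defining $\SR(\vecestbb{i})$ directly; these are the same computation.)

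One step in your chain does not hold as written, however. After discarding the factor $W_\l$ from the tail you arrive at $m\sum_{\l>k}(1-\bar c)^{\l-m} = m\,(1-\bar c)^{k-m+1}/\bar c$, and you then assert this is at most $m(1-\bar c)^{k-m}$. That inequality is equivalent to $(1-\bar c)/\bar c\le 1$, i.e.\ $\bar c\ge 1/2$, which is not assumed ($\bar c$ is merely a constant in $(0,1]$); as written your argument only delivers the weaker bound $\tfrac{m}{\bar c}T^{-C}$. The fix is to keep the $W_\l$ weight that is present in the definition of $\SR$: the tail is $\sum_{\l>k}\rfunc_m(\l)\,W_\l\,\rea_\l \le m\sum_{\l>k}W_\l\,\rea_\l = m\,\rea_{k+1}$, because $\sum_{\l>k}W_\l\rea_\l$ is exactly the probability that the first conversion occurs after round $k$ (the terms telescope). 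This gives $m\,\rea_{k+1}\le m(1-\bar c)^{k-m+1}\le m\,T^{-C}$ with no spurious $1/\bar c$ factor, which is precisely the paper's computation. Your remarks on the contextual subtlety (independence of contexts across rounds and $\Ex{c}=\bar c$) and on why $T_{i-1}\le T-k$ is needed are correct.
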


Since $\estbb{i}_m = 1$ we know that after not getting a win with impression $m$ times, the probability that an epoch ends is $\bar c$ per round.
By the lower bound on $k$, the probability of not winning with an impression is less than $T^{-C}$, which would lead to losing at most $m$ reward.

\begin{proof}
    We first lower bound the probability of epoch $i$ ending without a conversion.
    We use the fact that, by bidding $\bb_m^i = 1$ when it has been $m$ rounds or more since the last conversation, it holds that $W(1) = \bar c$:
    \begin{alignat*}{3}
        \prod_{\l = 1}^k \qty( 1 - W(\bb_\l^i) )
        \le
        \prod_{\l = m}^k \qty(1 - W(\bb_\l^i))
        =
        \prod_{\l = m}^k \qty(1 - \bar c)
        =
        \qty(1 - \bar c)^{k - m + 1}
        \le
        \qty(1 - \bar c)^{\frac{1}{\bar c} C \log T}
        \le
        T^{-C}
    \end{alignat*}
    where in the second to last inequality we used $k \ge m + \frac{1}{\bar c} C\log T$.
    This proves that
    \begin{equation*}
        \ExC{\rew_i}{\calH_{T_{i-1}}}
        \ge
        \ExC{ \SR(\vecestbb{i}) }{\calH_{T_{i-1}}}  \qty( 1 - T^{-C} )
    \end{equation*}
    
    The above proves the lemma since $\SR(\vecestbb{i}) \le m$ and $\SR(\vecestbb{i})$ does not depend on the randomness of rounds after $T_{i-1}$.
\end{proof}

We now prove the main lemma of this part, which we restate here for completeness.

\LemOnlineTotalRew*

\begin{proof}[Proof of \cref{lem:online:total_rew}]
    Let $w_t \in \{0, 1\}$ indicate whether the bidder won a conversion in round $t$.
    We first notice that 
    \begin{equation} \label{eq:online:1}
        \sum_{t = 1}^\tau w_t \rfunc(\tilde \l_t)
        \ge
        \sum_{j = 1}^{I_\tau} \rew_j
        -
        m
    \end{equation}
    where the first inequality holds because by round $\tau$, the algorithm has been allocated the rewards of the first $I_\tau-1$ epochs but may not have been allocated the reward of epoch $I_\tau$, which is at most $\rfunc( m ) \le m$.

    We shoe next that $\sum_{j = 1}^{I_\tau} \rew_j \approx \sum_{j = 1}^{I_\tau} \ExC{\rew_j}{\calH_{T_{j-1}}}$.
    For $i = 0, 1, \ldots,$ we define the sequence
    \begin{equation*}
        Z_i
        =
        \sum_{j = 1}^i \rew_j - \sum_{j = 1}^i \ExC{\rew_j}{\calH_{T_{j-1}}}
    \end{equation*}
    We notice that $Z_i$ is a martingale (since $\ExC{Z_i - Z_{i-1} }{ \calH_{T_{i-1}} } = 0$) that has differences $|Z_i - Z_{i-1}| \le m$.
    Using Azuma's inequality, we get that for all $\d > 0$ with probability at least $1-\d$
    \begin{align*}
        \sum_{j = 1}^i \rew_j
        \ge
        \sum_{j = 1}^i \ExC{\rew_j}{\calH_{T_{j-1}}}
        -
        \order{ m \sqrt{i \log\frac{1}{\d}} }
    \end{align*}

    Using the union bound over all epochs $i$ (note there are at most $T$ epochs), we get the above inequality for all $i$, implying that all $\d > 0$ with probability at least $1-\d$
    \begin{align*}
        \sum_{j = 1}^{I_\tau} \rew_j
        \ge
        \sum_{j = 1}^{I_\tau} \ExC{\rew_j}{\calH_{T_{j-1}}}
        -
        \order{ m \sqrt{T \log\frac{T}{\d}} }
    \end{align*}

    We now use \cref{lem:online:singe_rew} and get
    \begin{align} \label{eq:online:3}
        \sum_{j = 1}^{I_\tau} \rew_j
        \ge
        \sum_{j = 1}^{I_\tau} \SR(\vecestbb{j})
        -
        \order{ m \sqrt{T \log\frac{T}{\d}} }
    \end{align}
    where we include the $\sum_j m T^{-1} \le m$ term in the $\order{\cdot}$ term.
    We proceed to analyze the expected reward between wins with conversions $\SR(\vecestbb{j}) = R(\vecestbb{j}) L(\vecestbb{j})$ (recall \cref{def:bench:additional}, by overloading the notation: $R(\vec\bb)$ is the time average reward of bidding $\vec\bb$ and $L(\vec\bb)$ is the expected time between conversions):
    \begin{alignat}{3} \label{eq:online:4}
        \SR(\vecestbb{j})
        & =
        R(\vecestbb{j}) L(\vecestbb{j})
        \ge
        \qty( \optInf_m - \e_R(T_{j-1}) )^+
        L(\vecestbb{j})
        \nonumber\\
        & \ge
        \qty( \optInf_m - \e_R(T_{j-1}) )^+
        \ExC{ L_j }{\calH_{T_{j-i}}}
    \end{alignat}
    where the first inequality holds by the sub-optimality assumption on $\vecestbb{i}$ and the second holds because epoch $j$ may be stopped early, so its expected time until a conversion can only be bigger than the expected length of the epoch.
    Relying on the fact that $\e_R(T_{j-1})$ is a deterministic function of $\calH_{T_{j-i}}$, we now define another martingale:
    \begin{equation*}
        Y_i
        =
        \sum_{j=1}^i \qty( \optInf_m - \e_R(T_{j-1}) )^+ \ExC{ L_j }{\calH_{T_{j-i}}}
        -
        \sum_{j=1}^i \qty( \optInf_m - \e_R(T_{j-1}) )^+ L_j
    \end{equation*}
    which has differences bounded by $k$ since $L_j \le k$ and $\optInf_m - \e_R(T_{j-1}) \le 1$.
    Using Azuma's inequality, we get that with probability at least $1-\d$ for all $\d > 0$ it holds
    \begin{alignat*}{3}
        \Line{
            \sum_{j=1}^i \qty( \optInf_m - \e_R(T_{j-1}) )^+ \! \ExC{ L_j }{\calH_{T_{j-i}}}
        }{\ge}{
            \sum_{j=1}^i \qty( \optInf_m - \e_R(T_{j-1}) )^+ \! L_j
            -
            \order{k \sqrt{ i \log \frac{1}{\d}}} 
        }{}
        \\
        \Line{}{\ge}{
            \optInf_m (T_i - k)
            -
            \sum_{j=1}^i L_j \e_R(T_{j-1})
            -
            \order{k \sqrt{ i \log \frac{1}{\d}}}\!
        }{}
    \end{alignat*}
    where the last inequality holds from $\sum_{j=1}^i L_j = T_i-k$.
    Using the union bound on the above for all epochs $i$ (there are at most $T$) and plugging it into \eqref{eq:online:4} and then \eqref{eq:online:3} we get
    \begin{equation*}
        \sum_{j = 1}^{I_\tau} \rew_j
        \ge
        \optInf_m (T_{I_\tau} - k)
        -
        \sum_{j=1}^{I_\tau} L_j \e_R(T_{j-1})
        -
        \order{ (m + k) \sqrt{T \log\frac{T}{\d}} }
    \end{equation*}

    The proof is finalized by plugging the above into \eqref{eq:online:1} and using the following inequalities: $k \optInf_m \le k$, $T_{I_\tau} \ge \tau$, $m + k \le 2k$.
\end{proof}

\subsection{Upper Bound on the Realized Payment}\label{ssec:app:online:totalPay}

Before proving the main lemma of this part, \cref{lem:online:total_pay}, we prove a simple lemma on the expected length of a single epoch.
This lemma proves that the expected length of an epoch $i$ is not too much smaller than the expected time between wins with conversions in the infinite horizon setting, $L(\vecestbb{i})$.
While this looks simple, we have to look into the exact difference between the two quantities.
The length of an epoch is bounded almost surely but the time between wins with conversions is not.

\begin{lemma} \label{lem:online:singe_pay}
    Fix an epoch $i$ and assume that $k \ge m + \frac{1}{\bar c} C \log T$ and $C \ge 1$.
    Then, it holds that
    \begin{equation*}
        L(\vecestbb{i})
        \le
        \ExC{L_i}{\calH_{T_{i-1}}}
        +
        \frac{1}{\bar c} T^{-C}
    \end{equation*}
\end{lemma}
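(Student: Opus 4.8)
The plan is to recognize the epoch length $L_i$ as a truncated hitting time and to control the truncation error by a geometric tail sum. Condition on the history $\calH_{T_{i-1}}$, which determines the bidding vector $\vecestbb{i}$ used throughout epoch $i$. As long as the remaining budget stays at least $1$, the algorithm begins epoch $i$ in the reset state $\tilde\l = 1$ and, in each of the up to $k$ rounds of the epoch, either obtains a conversion (ending the epoch) or moves to state $\min\{m,\tilde\l+1\}$; thus, conditionally on $\calH_{T_{i-1}}$, the evolution of $\tilde\l$ is precisely the Markov chain of Figure~\ref{fig:mc} with winning probabilities $W(\estbb{i}_1),\dots,W(\estbb{i}_m)$, run until it is absorbed by a conversion. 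Writing $\tau$ for the random number of rounds until that first conversion, we get $L_i = \min\{\tau,k\}$, while $L(\vecestbb{i}) = \ExC{\tau}{\calH_{T_{i-1}}}$ and $\PrC{\tau \ge \l}{\calH_{T_{i-1}}} = \rea_\l(\vecestbb{i})$ by \cref{def:bench:additional}. Hence
$$
  L(\vecestbb{i}) - \ExC{L_i}{\calH_{T_{i-1}}}
  = \ExC{(\tau-k)^+}{\calH_{T_{i-1}}}
  = \sum_{\l \ge k+1} \rea_\l(\vecestbb{i}).
$$

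The remaining step is to bound this tail. Because \cref{algo:online} constrains $\estbb{i}_m \equiv 1$ and caps states at $m$, the conversion probability in every state $j \ge m$ equals $W(1) = \bar c$, so for $\l > k \ge m$,
$$
  \rea_\l(\vecestbb{i}) = \prod_{j=1}^{\l-1}\bigl(1 - W(\estbb{i}_j)\bigr)
  \le \prod_{j=m}^{\l-1}\bigl(1 - W(\estbb{i}_j)\bigr) = (1-\bar c)^{\l-m}.
$$
Summing the geometric series and using $k-m \ge \frac{C}{\bar c}\log T$ together with $1-\bar c \le e^{-\bar c}$,
$$
  \sum_{\l \ge k+1} \rea_\l(\vecestbb{i})
  \le \sum_{\l \ge k+1} (1-\bar c)^{\l-m}
  = \frac{(1-\bar c)^{k+1-m}}{\bar c}
  \le \frac{(1-\bar c)^{\frac{C}{\bar c}\log T}}{\bar c}
  \le \frac{T^{-C}}{\bar c},
$$
which is exactly the claimed inequality.

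Finally I would address the event that the budget is exhausted during epoch $i$: in the affected rounds the algorithm bids $0$ rather than $\estbb{i}_{\tilde\l}(x_t)$, which only decreases the per-round conversion probability and hence only lengthens the epoch (still capped at $k$); coupling on the same prices and contexts, $L_i \ge \min\{\tau,k\}$ in that case as well, so the displayed identity becomes an inequality in the favorable direction and the bound is unaffected. The step I expect to require the most care is the first one — justifying the identification $L_i = \min\{\tau,k\}$ and the formula $\PrC{\tau\ge\l}{\calH_{T_{i-1}}} = \rea_\l(\vecestbb{i})$, i.e., making the conditional Markov structure of an epoch precise; once that is in place, the rest is a one-line geometric estimate.
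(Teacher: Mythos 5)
Your proof is correct and follows essentially the same route as the paper's: both express $L(\vecestbb{i}) - \ExC{L_i}{\calH_{T_{i-1}}}$ as a tail sum over rounds $\l > k$, use the constraint $\estbb{i}_m \equiv 1$ to bound each term by $(1-\bar c)^{\l - m}$, and sum the geometric series (your survival-function form $\sum_{\l\ge k+1}\rea_\l$ is just a rearrangement of the paper's $\sum_{\l\ge k+1}(\l-k)W_\l\prod(1-W_{\l'})$). Your closing remark on budget exhaustion is a sensible extra precaution that the paper leaves implicit, and the inequality indeed only improves in that case.
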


\begin{proof}
    The only difference between $L_i$ and $L(\vecestbb{i})$ is that in $L_i$ we might stop early if there is no conversion after $k$ rounds.
    This means that the difference of their expectations is
    \begin{alignat*}{3}
        \Line{
            L(\vecestbb{j}) - \ExC{L_j}{\calH_{T_{j-1}}}
        }{=}{
            \sum_{\l = 1}^\infty \l W(\bb^j_\l) \prod_{\l' = 1}^{\l - 1} \qty( 1 - W(\bb^j_{\l'}) )
        }{}
        \\
        \Line{}{}{
            \quad
            -
            \sum_{\l = 1}^\infty \min\{\l, k\} W(\bb^j_\l) \prod_{\l' = 1}^{\l - 1} \qty( 1 - W(\bb^j_{\l'}) )
        }{}
        \\
        \Line{}{=}{
            \sum_{\l = k+1}^\infty (\l - k) W(\bb^j_\l) \prod_{\l' = 1}^{\l - 1} \qty( 1 - W(\bb^j_{\l'}) )
        }{}
        \\
        \Line{}{\le}{
            \sum_{\l = k+1}^\infty (\l - k) \bar c \prod_{\l' = m}^{\l - 1} \qty( 1 - \bar c )
        }{\bb_m^i = 1\\k \ge m}
        \\
        \Line{}{=}{
            \sum_{\l = k+1}^\infty (\l - k) \bar c \qty( 1 - \bar c )^{\l - m}
            =
            \frac{(1 - \bar c)^{k - m + 1}}{\bar c}
            \le
            \frac{1}{\bar c} T^{-C}
        }{}
    \end{alignat*}
    where in the last inequality we used $k \ge m + \frac{1}{\bar c} C \log T$.
\end{proof}

We now restate \cref{lem:online:total_pay} for completeness and then prove it.

\LemOnlineTotalPay*

\begin{proof}
    Let $\pay_i$ denote the total payment during epoch $i$.
    It is not hard to see that the total payment up to round $\tau$ is at most $\sum_{i=1}^{I_\tau} \pay_i$ since epoch $I_\tau$ ends at round $\tau$ or later.

    For $i=0,1,\ldots$ we define the following martingale with respect to the history $\calH_{T_{j-1}}$ up to each epoch $j-1$.
    \begin{equation*}
        Z_i
        =
        \sum_{j = 1}^i \pay_j
        -
        \sum_{j = 1}^i \ExC{\pay_j}{\calH_{T_{j-1}}}
    \end{equation*}
    which has bounded differences $\abs{Z_i - Z_{i-1}} \le k$, since the maximum payment at any epoch can be at most $k$.
    Using Azuma's inequality we get that for all $\d > 0$ with probability at least $1 - \d$
    \begin{equation*}
        \sum_{j = 1}^i \pay_j
        \le
        \sum_{j = 1}^i \ExC{\pay_j}{\calH_{T_{j-1}}}
        +
        \order{k \sqrt{i \log\frac{1}{\d}}}
    \end{equation*}

    We now notice that $\ExC{\pay_j}{\calH_{T_{j-1}}} \le \SC(\vecestbb{j})$: the expected payment of epoch $j$ is less than the expected payment between wins with conversions in the infinite horizon setting, since an epoch might stop earlier.
    Using the union bound over all epochs $i$ (there are at most $T$ of them) we get the above inequality for all $i$ and therefore also $i = I_\tau$: with probability at least $1-\d$ it holds
    \begin{equation} \label{eq:online:11}
        \sum_{j = 1}^{I_\tau} \pay_j
        \le
        \sum_{j = 1}^{I_\tau} \SC(\vecestbb{j})
        +
        \order{k \sqrt{T \log\frac{T}{\d}}}
    \end{equation}
    
    We now examine $\SC(\vecestbb{j})$.
    Using \cref{def:bench:additional},
    \begin{equation*}
        \SC(\vecestbb{j})
        =
        C(\vecestbb{j}) L(\vecestbb{j})
        \le
        \qty( \rho + \e_C(T_{j-i}) ) L(\vecestbb{j})
        \le
        \qty( \rho + \e_C(T_{j-i}) ) \ExC{L_i}{\calH_{T_{i-1}}}
        +
        \frac{2}{\bar c} T^{-1}
    \end{equation*}
    where in the first inequality we used that the payment of $\vecestbb{i}$ is $\e_C(T_{j-i})$ approximate and that it is feasible for the empirical distribution; for the second equality we used \cref{lem:online:singe_pay} and that $\rho + \e_C(T_{j-i}) \le 2$.
    Plugging into \eqref{eq:online:11} we get
    \begin{alignat}{3} \label{eq:online:13}
        \Line{
            \sum_{j = 1}^{I_\tau} \pay_j
        }{\le}{
            \sum_{j = 1}^{I_\tau} \qty( \rho + \e_C(T_{j-i}) ) \ExC{L_j}{\calH_{T_{j-1}}}
            +
            \order{k \sqrt{T \log\frac{T}{\d}}}
            +
            \sum_{j = 1}^{I_\tau}\frac{2}{\bar c} T^{-1}
        }{}
        \nonumber\\
        \Line{}{\le}{
            \sum_{j = 1}^{I_\tau} \qty( \rho + \e_C(T_{j-i}) ) \ExC{L_j}{\calH_{T_{j-1}}}
            +
            \order{k \sqrt{T \log\frac{T}{\d}}}
        }{}
    \end{alignat}
    where in the last inequality we move the last sum term into the $\order{\cdot}$ term by using $I_\tau \le T$, and that $k \ge \frac{1}{\bar c}$.

    The rest of the proof is similar to the one of \cref{lem:online:total_rew}: we define a martingale
    \begin{equation*}
        Y_i
        =
        \sum_{j=1}^i \qty( \rho + \e_C(T_{j-i}) )\ExC{L_j}{\calH_{T_{j-1}}}
        -
        \sum_{j=1}^i \qty( \rho + \e_C(T_{j-i}) ) L_j
    \end{equation*}
    which has bounded differences $\abs{Y_i - Y_{i-1}} \le 2 k$ since $1 \le L_j \le k$ and $\rho + \e_C(T_{j-i}) \le 2$.
    Using Azuma's inequality we get that with probability at least $1 - \d$
    \begin{alignat*}{3}
        \Line{
            \sum_{j=1}^i \qty( \rho + \e_C(T_{j-i}) )\ExC{L_j}{\calH_{T_{j-1}}}
        }{\le}{
            \sum_{j=1}^i \qty( \rho + \e_C(T_{j-i}) ) L_j
            +
            \order{k \sqrt{T \log\frac{1}{\d}}}
        }{}
        \\
        \Line{}{=}{
            (T_i - k) \rho
            +
            \sum_{j=1}^i L_j \e_C(T_{j-i})
            +
            \order{k \sqrt{T \log\frac{1}{\d}}}
        }{\sum_{j=1}^i L_j = T_i-k}
    \end{alignat*}

    Getting the above inequality for $i = I_\tau$ (by using the union bound) and combining it with \cref{eq:online:13} we get
    \begin{equation*}
        \sum_{j = 1}^{I_\tau} \pay_j
        \le
        (T_{I_\tau} - k) \rho
        +
        \sum_{j=1}^{I_\tau} L_j \e_C(T_{j-i})
        +
        \order{k \sqrt{T \log\frac{T}{\d}}}
    \end{equation*}

    By using $\tau \ge T_{I_\tau} - k$, we get the lemma.
\end{proof}

\subsection{Deferred Proofs of Theorem \ref{thm:online:regret} and Corollary \ref{cor:online:final_res}} \label{ssec:app:online:final}

Using \cref{lem:online:total_rew,lem:online:total_pay} it is not hard to prove \cref{thm:online:regret}.
By picking $\tau = T - \orderT*{\sqrt T}$ so that the algorithm does not run out of budget in round $\tau$ with high probability, we get the promised high reward of \cref{lem:online:total_rew}.

\ThmOnlineRegret*

\begin{proof}[Proof of \cref{thm:online:regret}]
    Fix a $\d > 0$.
    Using the union bound, assume that \cref{lem:online:total_rew,lem:online:total_pay} hold for all $\tau \in [T]$ with probability at least $1 - \d$.
    Fix a round $\tau$ such that
    \begin{equation} \label{eq:online:21}
        T - \tau
        \ge
        \sum_{j=1}^{I_\tau} L_j \e_C(T_{j-1})
        +
        \order{k \sqrt{T \log\frac{T}{\d}}}
    \end{equation}

    \cref{lem:online:total_pay} implies that the algorithm has run out of budget by that round $\tau$.
    This means that $\sum_{t=1}^\tau w_t \rfunc(\tilde \l_t)$ is a lower bound for the algorithm's reward by that round (lower bound because the actual lengths between wins with conversions can only be bigger than $\tilde \l_t$).
    \cref{lem:online:total_rew} now implies that the total reward by round $\tau$ is at least
    \begin{align*}
        & \tau \optInf_m
        -
        \sum_{j=1}^{I_\tau} L_j \e_R(T_{j-1})
        -
        \order{ (m + k) \sqrt{T \log\frac{T}{\d}} }
        \\
        \ge\; &
        T \optInf_m
        -
        \sum_{j=1}^{I_\tau} L_j \qty\big( \e_R(T_{j-1}) + \e_C(T_{j-1}))
        -
        \order{ (m + k) \sqrt{T \log\frac{T}{\d}} }
    \end{align*}
    where in the inequality we used \eqref{eq:online:21} and $\optInf_m \le 1$.
    This proves the theorem
\end{proof}

Finally, we prove \cref{cor:online:final_res}.
The proof is quite simple using \cref{thm:calc}, but we include it for completeness.

\CorOnlineFinal*

\begin{proof}[Proof of \cref{cor:online:final_res}]
    Using \cref{thm:calc} and the union bound, we get that with probability at least $1 - \d$ for all rounds $t$ it holds that
    \begin{equation*}
        \e_R(t) + \e_C(t)
        \le
        \order{
            \frac{m^3}{\rho} \frac{1}{\sqrt{t}} \sqrt{ \log\frac{T}{\d} }
        }
    \end{equation*}
    The above implies that for any round $\tau$
    \begin{align*}
        \sum_{j = 1}^{I_\tau} L_j \qty\big( \e_R(T_{j-1}) + \e_C(T_{j-1}) )
        \le
        \order{
            \frac{m^3}{\rho} \sqrt{ \log\frac{T}{\d} }
            \sum_{j = 1}^{I_\tau} \frac{L_j}{\sqrt{T_{j-1}}}
        }
    \end{align*}

    The corollary follows by proving that $\sum_{j = 1}^{I_\tau} \frac{L_j}{\sqrt{T_{j-1}}} = \order*{\sqrt T}$, which makes the above term dominate the other error term in \cref{thm:online:regret}.
    \begin{alignat*}{3}
        \Line{
            \sum_{j = 1}^{I_\tau} \frac{L_j}{\sqrt{T_{j-1}}}
        }{\le}{
            \sum_{j = 1}^{I_\tau} \sum_{t = T_{j-1} - L_j + 1}^{T_{j-1}} \frac{1}{\sqrt{t}}
            \le
            2 \sum_{j = 1}^{I_\tau} \qty( \sqrt{T_{j-1} - 1} - \sqrt{T_{j-1} - L_j} )
        }{\sum_{t=a}^b\frac{1}{\sqrt{t}} \le 2\qty( \sqrt{b-1} - \sqrt{a-1} )}
        \\
        \Line{}{\le}{
            2 \sum_{j = 1}^{I_\tau} \qty( \sqrt{T_j - k - 1} - \sqrt{T_j - k - L_j} )
        }{T_{j-1} \ge T_j - k}
        \\
        \Line{}{=}{
            2 \sum_{j = 1}^{I_\tau} \qty( \sqrt{T_j - k} - \sqrt{T_{j-1} - k} )
            =
            2 \sqrt{T_{I_\tau}}
            \le
            2 \sqrt{T}
        }{T_j = T_{j-1} + L_j}
    \end{alignat*}
\end{proof}
\section{Deferred Proofs of Section \ref{sec:calc}} \label{sec:app:calc}

In this section, we present the deferred proofs of \cref{sec:calc}.

\subsection{Deferred Proof of Lemma \ref{lem:calc:simple_single}} \label{ssec:app:calc:single}

We first restate the lemma.

\LemCalcSingle*

\begin{proof}
    We first write the Lagrangian of \eqref{eq:calc:single} for some Lagrange multiplier $\mu \ge 0$.
    \begin{align*}
        \sup_{\bb}
        \quad
        \Ex[x,p]{\Big.\qty\big( c - \mu p ) \One{b(x) \ge p}} + \mu \rho'
    \end{align*}
    where recall the conversion rate $c$ is part of the context $x$.

    The above is maximized when $b(x) = \min(\frac{c}{\mu}, 1)$ independent of other parts of the context, since the objective when $b(x) = \frac{c}{\mu}$ becomes $\Ex{\big( c - \mu p )^+} + \mu \rho'$, which is the supremum value of the above optimization problem.
    When $\frac{c}{\mu} > 1$, then winning the auction for sure by bidding $1$ maximizes the value.
    Because the only part of the context that matters is now the conversion rate, we used $b(c)$ instead of $b(x)$ and focus on the distribution over the pair $(p, c)$ instead of $(p, x)$.

    We overload the previous notation of the expected payment of a bid $b$, $P(b)$ and define the expected payment of using bid $c / \mu$ as
    \begin{equation*}
        P(\mu)
        =
        \Ex[x, p, c]{\big. p \One{c \ge \mu p} }
        .
    \end{equation*}

    The above function is non-decreasing in $\mu$ and lower semi-continuous.
    Let $\mu^*$ be the greatest non-negative number such that $P(\mu^*) \ge \rho'$.
    If no such $\mu^*$ exists, always bidding $1$ (i.e., using $\mu = 0$) is the optimal solution to \eqref{eq:calc:single}.
    If $ P(\mu^*) = \rho'$ then $b^{\mu^*}(c) = \frac{c}{\mu^*}$ is an optimal solution for \eqref{eq:calc:single}.
    We note that if the distribution on $(p, c)$ has no atoms, then $P(\mu)$ is continuous, implying that a $\mu^*$ with $P(\mu^*) = \rho'$ always exists.
    For the rest of the proof, we focus on the case when the distribution on $(p, c)$ has atoms and assume that $P(\mu^*) > \rho'$.

    Fix $\e > 0$.
    Note that $P(\mu^* + \e) < \rho'$.
    We notice that bidding $\frac{c}{\mu^* + \e}$ in the Lagrangian problem with multiplier $\mu^*$ is near optimal when $\e$ is close to $0$:
    \begin{alignat}{3} \label{eq:calc:1}
        \Line{
            \Ex{\Big.\qty\big( c - \mu^* p ) \One{c \ge p (\mu^* + \e)}}
        }{=}{
            \Ex{\Big.\qty\big( c - \mu^* p ) \One{c \ge p \mu^*}}
            -
            \Ex{\Big.\qty\big( c - \mu^* p ) \One{c \in [p \mu^*, p (\mu^* + \e) ) }}
        }{}
        \nonumber\\
        \Line{}{\ge}{
            \Ex{\Big.\qty\big( c - \mu^* p ) \One{c \ge p \mu^*}}
            -
            \e \Pr{c \in (p \mu^*, p (\mu^* + \e) ) }    
        }{}
    \end{alignat}

    Let $\bb^*$ be the distribution that bids $\frac{c}{\mu^*}$ with probability $q$ and $\frac{c}{\mu^* + \e}$ with probability $1-q$.
    Because it holds $P(\mu^* + \e) < \rho' < P(\mu^*)$, we pick $q$ such that $(1-q)P(\mu^* - \e) + q P(\mu^*) = \rho'$.
    This makes makes
    \begin{alignat}{3} \label{eq:calc:2}
        \Line{}{}{
            \Ex{\Big.\qty\big( c - \mu^* p ) \One{\bb^* \ge p}}
        }{}
        \nonumber\\
        \Line{}{=}{
            q \Ex{\Big.\qty\big( c - \mu^* p ) \One{c \ge \mu^* p}}
            +
            (1-q) \Ex{\Big.\qty\big( c - \mu^* p ) \One{c \ge (\mu^* + \e) p}}
        }{}
        \nonumber\\
        \Line{}{\ge}{
            \Ex{\Big.\qty\big( c - \mu^* p ) \One{c \ge p \mu^*}}
            -
            \e \Pr{c \in (p \mu^* , p (\mu^* + \e) ) }
        }{\text{using \eqref{eq:calc:1}}}
    \end{alignat}
    
    We proceed to prove that $\bb^*$ achieves conversion probability that approximates the value of \eqref{eq:calc:single} as $\e \to 0$.
    Fix any bidding distribution $\bb$ that is feasible for the constrained problem, i.e., $\Ex{p \One{\bb \ge p}} \le \rho'$.
    Given that $\frac{c}{\mu^*}$ is optimal for the Lagrangian problem with multiplier $\mu^*$, its objective value is better than $\bb$:
    \begin{alignat*}{3}
        \Line{
            \Ex{\Big.\qty\big( c - \mu^* p ) \One{c \ge \mu^* p}}
        }{\ge}{
            \Ex{\Big.\qty\big( c - \mu^* p ) \One{\bb \ge p}}
        }{}
        \\
        \Line{}{=}{
            \Ex{ c \One{\bb \ge p}}
            -
            \mu^* \Ex{ p \One{\bb \ge p}}
        }{}
        \\
        \Line{}{\ge}{
            \Ex{ c \One{\bb \ge p}}
            -
            \mu^* \rho'
        }{}
    \end{alignat*}

    Combining the above with \eqref{eq:calc:2} and the fact that $\Ex{p \One{\bb^* \ge p}} = \rho'$ we get that
    \begin{equation*}
            \Ex{\big. c \One{\bb^* \ge p}}
            \ge
            \Ex{ c \One{\bb \ge p}}
            -
            \e \Pr{c \in (p \mu^* , p (\mu^* + \e) ) }
    \end{equation*}

    This proves the third bullet of the lemma since as $\e \to 0$, the probability of conversion by $\bb^*$ becomes almost the one of $\bb$ for any feasible $\bb$.
    To prove the second bullet, we prove that $\bb^*$ is the optimal solution when the support is finite.
    We do so by taking $\e$ to be small enough (put positive) so that $\Pr{c \in (p \mu^* , p (\mu^* + \e) ) } = 0$.
    This is possible since otherwise, a $(p, c)$ value in the support needs to satisfy $c - p \mu^* \in (0, \e)$ for all $\e > 0$, which is impossible.
\end{proof}

\subsection{Deferred Proof of Lemma \ref{lem:calc:approximation}} \label{ssec:app:calc:approximation}

We first restate the lemma.

\LemCalcApproximation*

\begin{proof}[Proof of \cref{lem:calc:approximation}]
    We will prove the lemma assuming that $\e < \frac{1}{36 m^2}$; otherwise, the lemma is trivially true since $R(\vec\bmu),R'(\vec\bmu)$,$C(\vec\bmu),C'(\vec\bmu) \in [0, 1]$.
    We first show the following claim.
    We use the notation $\pm x$ to denote the interval $[-x , x]$ and define all operations on it, e.g., $\frac{x \pm y}{1 \pm z} = [\frac{x - y}{1 + z}, \frac{x + y}{1 - z}]$ as long as $x,y \ge 0$ and $0 \le z < 1$.

    \begin{proposition} \label{cl:calc:prod}
        Fix an integer $m$.
        Let $\e \in (0, 1/m)$ and $q_1, q_2, \ldots, q_m \in [0, 1]$.
        Then for all $\l \in [m]$ it holds
        \begin{equation*}
            \prod_{i = 1}^\l (q_i \pm \e)
            \sub
            \prod_{i = 1}^\l q_i \pm \qty(\l \e + \l^2 \e^2)
            \sub
            \prod_{i = 1}^\l q_i \pm 2 \l \e
        \end{equation*}
    \end{proposition}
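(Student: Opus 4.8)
The plan is to prove the two inclusions separately: the second is a one-line numeric comparison, while the first is an induction on $\l$. The only genuinely reusable ingredient is the elementary interval-arithmetic bound $(A \pm a)(B \pm b) \subseteq AB \pm (Ab + aB + ab)$, valid for all $A, B, a, b \ge 0$. I would establish this first by the triangle inequality: for $u \in A \pm a$ and $v \in B \pm b$ (so $|u| \le A + a$, $|v - B| \le b$, $|u - A| \le a$), one has $|uv - AB| \le |u|\cdot|v - B| + B\cdot|u - A| \le (A+a)b + Ba = Ab + aB + ab$, using the interval operations defined just before the claim.

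For the first inclusion, I would induct on $\l$ with the statement $\prod_{i=1}^{\l}(q_i \pm \e) \subseteq P_\l \pm a_\l$, where $P_\l = \prod_{i=1}^{\l} q_i$ and $a_\l = \l\e + \l^2\e^2$. The base case $\l = 1$ is immediate since $\e \le \e + \e^2$. For the step, assuming it for $\l$ with $\l + 1 \le m$, I would write $\prod_{i=1}^{\l+1}(q_i \pm \e) = (q_{\l+1} \pm \e)\,\prod_{i=1}^{\l}(q_i \pm \e) \subseteq (q_{\l+1} \pm \e)(P_\l \pm a_\l)$ and apply the interval bound with $B = q_{\l+1} \in [0,1]$, $b = \e$; since $P_\l, q_{\l+1} \le 1$, the resulting error is at most $\e + a_\l + \e a_\l = \e + (1+\e)(\l\e + \l^2\e^2)$. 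It then remains to check $\e + (1+\e)(\l\e + \l^2\e^2) \le a_{\l+1} = (\l+1)\e + (\l+1)^2\e^2$, and expanding shows this reduces to $\l^2\e^3 \le (\l+1)\e^2$, i.e.\ $\l^2\e \le \l+1$. This holds because $\e < 1/m$ and $\l \le m - 1$, so $\l^2\e < \l^2/m \le \l \le \l+1$. For the second inclusion, $a_\l = \l\e + \l^2\e^2 \le 2\l\e$ is equivalent to $\l\e \le 1$, which holds since $\l \le m$ and $\e < 1/m$.

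The hard part — and it is only bookkeeping — is the inductive step: combining the interval-product bound with the crude estimates $P_\l, q_{\l+1} \le 1$ and then verifying the single scalar inequality $\l^2\e \le \l+1$, which is precisely where the hypothesis $\e < 1/m$ is consumed. Once that inequality is in hand, everything else is routine, and I would not expect any subtlety beyond making sure the interval operations ($\pm$-arithmetic) are used consistently with the definitions given in the text.
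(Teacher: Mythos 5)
Your proposal is correct and follows essentially the same route as the paper: induction on $\l$ with error term $a_\l = \l\e + \l^2\e^2$, expansion of the interval product $(q_{\l+1}\pm\e)(P_\l\pm a_\l)$ using $P_\l, q_{\l+1}\le 1$, and a final scalar check where $\e < 1/m$ is consumed (you reduce to $\l^2\e \le \l+1$, the paper to $\l^2\e^3 \le \l\e^2$; both are equivalent uses of $\l\e \le 1$). The only difference is that you spell out the interval-product bound explicitly, which the paper leaves implicit.
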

    
    \begin{proof}
        We use induction on $\l$.
        The claim is trivially true for $\l = 1$.
        Then for any $\l + 1$ we have
        \begin{align*}
            \prod_{i = 1}^{\l+1} (q_i \pm \e)
            \sub
            (q_{\l+1} \pm \e) \qty( \prod_{i = 1}^\l q_i \pm (\l \e + \l^2 \e^2) )
            & \sub
            \prod_{i = 1}^{\l+1} q_i \pm \qty( \l \e + \l^2 \e^2 + \e + \l \e^2 + \l^2 \e^3)
            \\
            & \sub
            \prod_{i = 1}^{\l+1} q_i \pm \qty( (\l+1) \e + (\l+1)^2 \e^2 )
        \end{align*}
        where in the last part we used that $\l^2 \e^2 + \l \e^2 + \l^2 \e^3 \le \l^2 \e^2 + \l \e^2 + \l \e^2 \le (\l+1)^2 \e^2$. This completes the proof.
    \end{proof}
    
    We now show that the time between conversions of $\vec\bmu$ is similar for both $W$ and $W'$.
    Using \cref{def:bench:additional}:
    \begin{alignat*}{3}
        \Line{
            L'(\vec\bmu)
        }{=}{
            \sum_{\l = 1}^{m-1} \prod_{i=1}^{\l-1} \qty\big( 1 - W'(\bmu_i) )
            +
            \frac{1}{W'(\bmu_m)} \prod_{i=1}^{m-1} \qty\big( 1 - W'(\bmu_i) )
        }{}
        \\
        \Line{}{\sub}{
            \sum_{\l = 1}^{m-1} \qty( \prod_{i=1}^{\l-1} \qty\big( 1 - W(\bmu_i) ) \pm 2m\e)
            +
            \frac{1}{W(\bmu_m) \pm \e} \qty( \prod_{i=1}^{m-1} \qty\big( 1 - W(\bmu_i) ) \pm 2m\e)
        }{\text{\cref{cl:calc:prod}}\\W(\bmu_\l) \sub W'(\bmu_\l) \pm \e}
        \\\Line{}{\sub}{
            \sum_{\l = 1}^{m-1} \prod_{i=1}^{\l-1} \qty\big( 1 - W(\bmu_i) ) \pm 2m^2\e
            +
            \qty( \frac{1}{W(\bmu_m)} \pm \frac{2\e}{\bar c^2} )
            \qty( \prod_{i=1}^{m-1} \qty\big( 1 - W(\bmu_i) ) \pm 2m\e)
        }{W(\bmu_m) = \bar c\\\e\le\frac{\bar c}{2}}
        \\
        \Line{}{\sub}{
            L(\vec\bb)
            \pm
            \qty(
                2 m^2 \e
                +
                \frac{2 m}{\bar c} \e
                +
                \frac{2}{\bar c^2} \e
                +
                \frac{4 m}{\bar c^2} \e^2
            )
            \sub
            L(\vec\bb)
            \pm 8 m^2 \e
        }{\e \le \frac{1}{2m} \le \frac{\bar c}{2} }
    \end{alignat*}

    Using \cref{ssec:app:facts}, for any $\l < m$ we have
    \begin{alignat}{3} \label{eq:calc:pi_l}
        \Line{
            \pi_\l'(\vec\bmu)
        }{=}{
            \frac{\prod_{i = 1}^{\l - 1} \qty( 1 - W'(\bmu_i) )}{L'(\vec \bmu)} 
        }{}
        \nonumber\\
        \Line{}{\sub}{
            \frac{\prod_{i = 1}^{\l - 1} \qty( 1 - W'(\bmu_i) )}{L(\vec \bmu) \pm 8 m^2 \e}
        }{}
        \nonumber\\
        \Line{}{\sub}{
            \frac{\prod_{i = 1}^{\l - 1} \qty( 1 - W'(\bmu_i) )}{L(\vec \bmu)}
            \pm
            16 m^2 \e
        }{L(\vec\bmu) \ge 1 \ge 16 m^2 \e }
        \nonumber\\
        \Line{}{\sub}{
            \frac{\prod_{i = 1}^{\l - 1} \qty( 1 - W(\bmu_i) ) \pm 2 m \e}{L(\vec \bmu)}
            \pm
            16 m^2 \e
        }{\text{\cref{cl:calc:prod}}}
        \nonumber\\
        \Line{}{\sub}{
            \frac{\prod_{i = 1}^{\l - 1} \qty( 1 - W(\bmu_i) )}{L(\vec \bmu)}
            \pm
            18 m^2 \e
        }{L(\vec\bmu) \ge 1, m \ge 1}
    \end{alignat}
     
    The same calculation as above give us
    \begin{alignat*}{3}
        \Line{
            W'(b_m)\pi_m'(\vec\bmu)
        }{=}{
            \frac{\prod_{i = 1}^{m - 1} \qty( 1 - W'(\bmu_i) )}{L'(\vec\bmu)}
            \sub
            \frac{\prod_{i = 1}^{m - 1} \qty( 1 - W(\bmu_i) )}{L(\vec\bmu)} 
            \pm
            18 m^2 \e
        }{}
        \\
        \Line{}{=}{
            W(\bb_m)\pi_m(\vec\bmu)
            \pm
            18 m^2 \e
        }{}
    \end{alignat*}

    Now we examine the average-time reward of $\vec\bmu$
    \begin{alignat*}{3}
        \Line{
            R'(\vec\bmu)
        }{=}{
            \sum_{\l = 1}^{m-1} \qty\big( \rfunc(\l) - \rfunc(\l-1) ) \pi_\l'(\vec\bmu)
            +
            \qty(\rfunc(m) - \rfunc(m-1)) \pi_m'(\vec\bmu) W'(b_m)
        }{}
        \\
        \Line{}{\sub}{
            \sum_{\l = 1}^{m-1} \qty\big( \rfunc(\l) - \rfunc(\l-1) ) \qty( \pi_l(\vec\bmu) \pm 18 m^2 \e)
            +
            \qty\big(\rfunc(m) - \rfunc(m-1)) \qty( \pi_m(\vec\bmu) W(\bb_m) \pm 18 m^2 \e)
        }{}
        \\
        \Line{}{\sub}{
            \qty( \sum_{\l = 1}^{m-1} \qty\big(\rfunc(\l) - \rfunc(\l-1)) \pi_l(\vec\bmu) )
            \pm
            18 m^2 \e
            +
            \qty\big( \rfunc(m) - \rfunc(m-1) ) \pi_m(\vec\bmu) W(\bb_m)
            \pm
            18 m^2 \e
        }{}
        \\
        \Line{}{\sub}{
            R(\vec\bmu)
            \pm
            36 m^2 \e
        }{}
    \end{alignat*}
    where in the second step we used that $\sum_{\l = 1}^{m-1} \qty(\rfunc(\l) - \rfunc(\l-1)) \pi_\l(\vec\bmu) \le \max_\l \qty(\rfunc(\l) - \rfunc(\l-1)) \le 1$ which follows from $\sum_\l \pi_\l \le 1$.
    We now analyze the expected average payment
    \begin{align*}
        C'(\vec\bmu)
        =
        \sum_{\l = 1}^m P'(\vec\bmu) \pi_\l'(\vec\bmu)
        \sub
        \sum_{\l = 1}^m \qty( P(\vec\bmu) \pm \e) \pi_\l'(\vec\bmu)
        \sub
        \sum_{\l = 1}^m P(\vec\bmu) \pi_\l'(\vec\bmu) \pm \e
    \end{align*}

    While we know a bound for $\pi_\l'(\vec\bmu)$ for $\l < m$, we have not proven one for $\pi_m'(\vec\bmu)$.
    We have
    \begin{align*}
        \pi_m'(\vec\bmu)
        = &
        \frac{\pi_m'(\vec\bmu) W'(\bmu_m)}{W'(\bmu_m)}
        \sub
        \frac{\pi_m'(\vec\bmu) W'(\bmu_m)}{W(\bmu_m) \pm \e}
        \\
        \sub &
        \frac{\pi_m'(\vec\bmu) W'(\bmu_m)}{W(\bmu_m)} \pm \frac{2}{\bar c^2} \e
        \sub
        \frac{\pi_m(\vec\bmu) W(\bmu_m)}{W(\bmu_m)}
        \pm
        \qty( \frac{2}{\bar c^2} + 18\frac{m^2}{\bar c} ) \e
        \\
        \sub &
        \pi_m(\vec\bmu)
        \pm
        20\frac{m^2}{\bar c} \e
    \end{align*}
    where in the second $\sub$ relation we used that $\e \le \frac{\bar c}{2}$, in the next one that $\pi_m'(\vec\bmu) W'(\bmu_m) \sub \pi_m(\vec\bmu) W(\bmu_m) \pm 18 m^2 \e$ and that $W(\bmu_m) = \bar c$.
    In the final one, we use that $m \ge \frac{1}{\bar c}$.

    Using the bound for all $\pi_\l'$ we have that
    \begin{alignat*}{3}
        \Line{
            C'(\vec\bmu)
        }{\sub}{
            \sum_{\l = 1}^m P(\vec\bmu) \pi_\l'(\vec\bmu) \pm \e
        }{}
        \\
        \Line{}{\sub}{
            \sum_{\l = 1}^{m-1} P(\vec\bmu) \qty( \pi_\l(\vec\bmu) \pm 18 m^2 \e )
            +
            P(\bmu_m) \qty( \pi_m(\vec\bmu) \pm 20\frac{m^2}{\bar c} \e )
            \pm
            \e
        }{}
        \\
        \Line{}{\sub}{
            \sum_{\l = 1}^m P(\vec\bmu) \pi_\l(\vec\bmu)
            \pm
            \qty( 
                18 m^3
                +
                20\frac{m^2}{\bar c}
                +
                1
            ) \e
            \sub
            C(\vec\bmu)
            \pm
            39 m^3 \e
        }{P(\mu) \le 1}
    \end{alignat*}

    This completes the proof.
\end{proof}

\subsection{Deferred Proof Lemma \ref{lem:calc:sample_error}} \label{ssec:app:calc:sample_error}

We first restate the lemma.

\LemCalcSampleError*

\begin{proof}[Proof of \cref{lem:calc:sample_error}]
    The empirical estimates of the two functions are
    \begin{align*}
        W_n(\mu)
        =
        \frac{1}{n} \sum_{i = 1}^n c_i \One{c_i \ge \mu p_i}
        \quad\text{ and }\quad
        P_n(\mu)
        =
        \frac{1}{n} \sum_{i = 1}^n p_i \One{c_i \ge \mu p_i}
    \end{align*}
    
    We first prove the claim about the $W(\cdot)$ function.
    First, for $i\in [n]$ we define $X_i = (p_i, c_i)$.
    Define the error of $W_n$ from $W$:
    \begin{equation*}
        f(X_1, \ldots, X_n)
        =
        \sup_{\mu \ge 0} \abs{ W_n(\mu) - W(\mu) }
        =
        \norm{ W_n - W }_\infty
        =
        \norm{ W_n - \Ex{W_n} }_\infty
    \end{equation*}
    where the expectation in the right most part is taken over the pairs $(p_i, c_i)$.
    The claim we want to make now is that $f(X_1, \ldots, X_n)$ is small with high probability.
    We first bound its expectation, using \cite[Theorem 23]{DBLP:journals/corr/KleinbergLST23}.
    Because for each $i$, the function $c_i \One{c_i \ge \mu p_i}$ is non-increasing in $\mu$ and takes values in $[0, 1]$ we get\footnote{\cite[Theorem 23]{DBLP:journals/corr/KleinbergLST23} requires that the functions are non-decreasing which we can get by taking the sum over over $i$ of $1 - c_i \One{c_i \ge \mu p_i}$.} that $\Ex{f(X_1, \ldots, X_n)} \le \order*{\nicefrac{1}{\sqrt n}}$.
    In addition, we notice that we can use McDiarmid's inequality on $f$ since $f$ has the bounded differences property: for any $x_1, \ldots, x_n = (p_1, c_1), \ldots, (p_n, c_n)$, any $i$, and any $x_i' = (p_i', c_i')$ it holds
    \begin{alignat*}{3}
        \Line{}{}{
            \hspace{-20pt}
            \abs{ f(x_1, \ldots, x_i, \ldots, x_n) - f(x_1, \ldots, x_i', \ldots, x_n) }
        }{}
        \\
        \Line{}{=}{
            \left|
                \sup_{\mu \ge 0}
                \abs{ \frac{1}{n} \sum_{j \in [n]} c_j \One{c_j \ge \mu p_j} - W(\mu) }
                -
            \right.
        }{}
        \\
        \Line{}{}{
            \quad\left.
                \sup_{\mu \ge 0}
                \abs{ \frac{1}{n} c_i \One{c_i \ge \mu p_i} + \frac{1}{n} \sum_{j \in [n]\setminus\{i\}} c_j \One{c_j \ge \mu p_j} - W(\mu) }
            \right|
        }{}
        \\
        \Line{}{\le}{
            \sup_{\mu \ge 0}
            \abs{
                \frac{1}{n} c_i \One{c_i \ge \mu p_i}
                -
                \frac{1}{n} c_i' \One{c_i' \ge \mu p_i'}
            }
            \le
            1
        }{}
    \end{alignat*}

    Using McDiarmid's inequality we get that for all $\d > 0$, with probability at least $1 - \d$ it holds
    \begin{equation*}
        f(X_1, \ldots, X_n)
        \le
        \Ex{f(X_1, \ldots, X_n)}
        +
        \sqrt{\frac{n}{2} \log\frac{1}{\d}}
        \le
        \order{\frac{1}{\sqrt n}}
        +
        \sqrt{\frac{1}{2 n} \log\frac{1}{\d}}
    \end{equation*}
    where the second inequality holds by the bound on $\Ex{f(X_1, \ldots, X_n)}$.
    This proves the bound for $W_n(\cdot)$.
    We can prove a similar bound for the approximation $P_n(\cdot)$ using the exact same steps.
    Using the union bound on these two bounds yields the lemma.
\end{proof}

\subsection{Linear Program for \texorpdfstring{$\optInf_m$}{the Infinite Horizon problem}} \label{ssec:app:calc:lp}

In this section, we re-formulate the optimization problem \eqref{eq:app:opt_inf_stat} into a linear program.
We do so only when the distribution of the prices and contexts has finite support.
Specifically, we consider that there are at most $n$ price/conversion rate pairs in the support: $\{ (p_i, c_i) \}_{i \in [n]}$.
As we showed in \cref{lem:calc:simple_single}, the optimal solution of maximizing the probability of winning with a conversion subject to an expected budget constraint is bidding $\frac{c}{\mu}$ for some $\mu \ge 0$.
For such a $\mu$, we overload our previous notation and define the probability of getting a conversion using $\mu$ as
\begin{equation*}
    W(\mu)
    =
    \Ex[p, c]{c \One{c \ge \mu p}}
    =
    \sum_{i=1}^n \qty( c_i \One{c_i \ge \mu p_i} \Pr[p, c]{(p, c) = (p_i, c_i)} )
\end{equation*}
and similarly we define the expected payment $P(\mu)$.
This naturally defines $\mu_i = \frac{c_i}{p_i}$ for every $(p_i, c_i)$ in the support since any other $\mu$ has the same effect as some $\mu_i$.
The only exception is bidding $0$ which might not be covered by some $\mu_i$; thus we defined $\mu_0 = \infty$, which corresponds to bidding $\frac{c}{\mu_0} = 0$.
We will use $\{\mu_i\}_{i \in [n]\cup \{0\}}$ as the ``actions'' of each state.

We use $\{ q_{\l i} \}_{\l \in [m], i\in[n]\cup\{0\}}$ as our decision variables.
For a state $\l$ and action $i$, $q_{\l i}$ stands for the occupancy measure of that state/action pair, i.e., is the probability of being in state $\l$ and using $\mu_i$.
This makes the stationary probability of state $\l$ to be $\pi_\l = \sum_i q_{\l i}$.
Substituting this in \eqref{eq:app:opt_inf_stat} we get
\begin{equation} \label{eq:calc:LP}
\begin{aligned}
    \optInf_m = \;
    & \max_{ q_{\l,i} \ge 0 } &&
    \sum_{\l=1}^m \rfunc(\l) \sum_{i=0}^n W(\mu_i) q_{\l i}
    &&
    \\
    & \textrm{such that} &&
    \sum_{\l=1}^m \sum_{i=0}^n P(\mu_i) q_{\l i} \le \rho
    &&
    \\
    & \textrm{where} &&
    \sum_{i=0}^n q_{1 i} = \sum_{\l = 1}^m \sum_{i=1}^n W(\mu_i) q_{\l i}
    &&
    \\
    & &&
    \sum_{i=0}^n q_{\l i} = \sum_{i=0}^n \qty\big( 1 - W(\mu_i) ) q_{\l-1, i}
    && \forall \l = 2, 3, \ldots, m-1
    \\
    & &&
    \sum_{i=0}^n q_{m i} = \sum_{\l=m-1}^m \sum_{0=1}^n ( 1 - W(\mu_i) ) q_{\l i}
    && 
    \\
    & &&
    \sum_{\l = 1}^m \sum_{0=1}^n q_{\l i} = 1
    &&
\end{aligned}
\end{equation}

We notice that the above is linear in the variables $q_{\l i}$.
In addition, there are $(n+1) m$ variables and $m + 2$ constraints.
We also note that the constraint of bidding $1$ in state $m$ can be encoded by adding an action that corresponds to such a bid, $\mu_{n+1} = 0$, and adding the constraint $\sum_{i=0}^n q_{m i} = 0$.
This proves that we can execute \cref{algo:online} in polynomial time since in every round $t$, the empirical distribution has support at most $t$.

\subsection{Deferred Proof of Theorem \ref{thm:calc}} \label{ssec:app:calc:theorem}

We now finally prove \cref{thm:calc}, which follows as a combination of \cref{lem:calc:approximation,lem:calc:sample_error} and Linear Program \ref{eq:calc:LP}.
We first restate the theorem.

\ThmCalc*

\begin{proof}[Proof of \cref{thm:calc}]
    We use the empirical estimates of \cref{lem:calc:sample_error}.
    Let $R'(\cdot)$ and $C'(\cdot)$ denote the resulting expected average reward and payment of these distributions.
    To keep consistent with the notation in the statement of \cref{thm:calc} we use vectors of mappings from contexts to bids as the proposed solution, $\vec\bb$, instead of vectors of multipliers, $\vec\bmu$ (which was the notation in \cref{lem:calc:approximation,lem:calc:sample_error}).
    We can calculate the optimal solution using the Linear Program of \eqref{eq:calc:LP}, since the empirical distributions have finite support.
    Let $\vec\bb^t$ be the resulting distribution.

    Assume that the error bounds of \cref{lem:calc:sample_error} hold, which happens with probability at least $1 - \d$.
    This means that
    \begin{equation*}
        C(\vec\bb^t)
        \le
        C'(\vec\bb^t)
        +
        \order{m^3 \sqrt{\frac{1}{t} \log\frac{1}{\d}}}
        \le
        \rho
        +
        \order{m^3 \sqrt{\frac{1}{t} \log\frac{1}{\d}}}
    \end{equation*}
    where the first inequality holds by \cref{lem:calc:sample_error,lem:calc:approximation} and the second hold by the fact that $\vec\bb^t$ is feasible for the cost $C'(\cdot)$.

    The calculation for the reward is a bit more complicated.
    Fix any $\vec\bb$ such that $C(\vec\bb) \le \rho$, implying also that $C'(\vec\bb) \le \rho + \e$ where $\e = \order{m^3 \sqrt{\frac{1}{t} \log\frac{1}{\d}}}$ (similar to the calculation above).
    This means that $\vec\bb$ is not necessarily feasible when $R'(\cdot)$ and $C'(\cdot)$ are the reward and payment.
    For this reason, we define $\vec\bb'$ as the combination of $\bb$ and the bidding that always bids $0$.
    Specifically, we define $\vec\bb'$ by taking a convex combination of the occupancy measures or $\vec\bb$ and the always bidding $0$ solution.
    We take $1 - \frac{\e}{\rho}$ times the occupancy measure of $\vec\bb$ and $\frac{\e}{\rho}$ the occupancy measure of always bidding $0$.
    This results in the following average payment of $\vec\bb'$:
    \begin{equation*}
        C'(\vec\bb')
        =
        \qty( 1 - \frac{\e}{\rho} ) C'(\vec\bb)
        \le
        \qty( 1 - \frac{\e}{\rho} ) \qty( \rho + \e )
        \le
        \rho
    \end{equation*}

    The above makes $\vec\bb'$ feasible.
    Because $\vec\bb^t$ is constrained to bid $1$ at state $m$, we cannot directly compare it to $\vec\bb'$.
    For this reason we need \cref{thm:bench:small_m}.
    This makes
    \begin{alignat*}{3}
        \Line{
            R(\vec\bb^t)
        }{\ge}{
            R'(\vec\bb^t) - \order{m^2 \sqrt{\frac{1}{t} \log\frac{1}{\d}}}
            \ge
            R'(\vec\bb') - \order{m^2 \sqrt{\frac{1}{t} \log\frac{1}{\d}}} - \frac{m}{T}
        }{\text{\cref{thm:bench:small_m} and}\\m \ge \frac{2}{\bar c \rho}\log T, C'(\vec\bb') \le \rho }
        \\
        \Line{}{\ge}{
            R(\vec\bb') - \order{m^2 \sqrt{\frac{1}{t} \log\frac{1}{\d}}}
        }{\text{\cref{lem:calc:approximation,lem:calc:sample_error}}, \frac{m}{T} \le \frac{m^2}{\sqrt t} }
        \\\Line{}{\ge}{
            \qty(1 - \frac{\e}{\rho}) R(\vec\bb) - \order{m^2 \sqrt{\frac{1}{t} \log\frac{1}{\d}}}
            \ge
            R(\vec\bb) - \order{\frac{m^3}{\rho} \sqrt{\frac{1}{t} \log\frac{1}{\d}}}
        }{\text{definition of } \vec\bb'}
    \end{alignat*}
    where the last inequality holds by definition of $\e = \order{m^3 \sqrt{\frac{1}{t} \log\frac{1}{\d}}}$.
    The above $R(\cdot)$ functions referred to when the per-round reward function was $\rfunc_m$.
    We get the result for the case when the per-round reward function is $\rfunc_M$ my using \cref{thm:bench:small_m} once more.
\end{proof}

\end{document}